\theoremstyle{definition}
\newtheorem{definition}{Definition}
\newtheorem*{theorem*}{Theorem}
\newtheorem{theorem}{Theorem}
\newtheorem{lemma}[theorem]{Lemma}
\newtheorem{prop}[theorem]{Proposition}
\newtheorem*{prop*}{Proposition}
\patchcmd{\subequations}{}%
{}{}{}
\newcommand{\leqnomode}{\tagsleft@true}
\newcommand{\reqnomode}{\tagsleft@false}
\newcommand{\BIGBP}[1]{\left\{ #1\right\}}
\newcommand{\MC}[1]{{\mathcal #1}}
\newcommand{\MB}[1]{{\mathbf #1}}
\long\def\longdelete#1{}
\title{\bf Improved LP-based Approximations for Facility Location with Hard Capacities}
\author[ ]{Mong-Jen Kao}
\affil[ ]{Department of Computer Science,}
\affil[ ]{National Yang-Ming Chiao-Tung University, Taiwan.}
\affil[ ]{\small Email: \texttt{mjkao@nycu.edu.tw}}
\date{}
\begin{document}

\begin{titlepage}

\maketitle
\thispagestyle{empty}

\begin{abstract}
The Capacitated Facility Location (CFL), a long-standing classic problem with intriguing approximability and literature dated back to the 90s, is considered.
Following the open question posted in~[Williamson and Shmoys, 2011] and the notable work due to~[An et al., FOCS~2014], 
we present an LP-based approximation algorithm
with a guarantee of $\left(10+\sqrt{67}\right)/2 \approx 9.0927$, a significant improvement upon the previous LP-based ratio of $288$ due to An et al. in~2014.
Our contribution for this part is a simple and elegant rounding algorithm that brings clear insights for the MFN relaxation and the CFL problem.

\smallskip

For CFL with cardinality facility cost (CFL-CFC), 
we present an LP-based $4$-approximation algorithm,
which improves upon the decades-old ratio of~$5$ due to Levi~et~al.~that ages up since 2004.
%
Prior to our work, it was not clear whether or not LP-based methods can be used to provide a guarantee better than~$5$ for the CFL problem, even for restricted versions of this problem, for which natural LPs are already known to have small integrality gaps.
Our rounding algorithm provides the first affirmative answer on the case with cadinality facility cost.
\end{abstract}




%
\end{titlepage}

\section{Introduction}

We consider the facility location problem with hard capacities (CFL), a long-standing problem with 
intriguing unsettled approximability and 
literature dated back to the 90s.
In this problem, we are given a set $\MC{F}$ of facilities, a set $\MC{D}$ of clients, and a distance metric $c$ defined over $\MC{F} \cup \MC{D}$.
Each $i \in \MC{F}$ is associated with an open cost $o_i$ and a capacity $u_i$,
which is the number of clients it can serve when opened up.
The cost of serving a client $j$ using a facility $i$ equals the distance 
between them.
The goal is to determine a set of facilities to open up and an assignment of the clients to the opened facilities that respects their capacity 
limits so as to minimize that the total cost.
%

\smallskip

The CFL problem
was first considered by Shmoys, Tardos, and Aardal in~\cite{10.1145/258533.258600}. 
%
For facilities with uniform capacities, Koropolu et al.~\cite{10.5555/314613.314616} showed that the local search heuristic proposed by Kuehn and Hamburger~\cite{10.1287/mnsc.9.4.643} yields a constant factor approximation.
Chudak and Williamson~\cite{10.5555/645589.659788} improved the analysis of Korupolu et al.~\cite{10.5555/314613.314616} and obtained a $(6,5)$-approximation, i.e., a solution whose cost is bounded by 6 times the facility cost plus 5 times the service cost of an optimal solution.
Aggarwal et al.~\cite{DBLP:journals/mp/AggarwalLBGGGJ13} introduced the idea of taking suitable linear combinations of inequalities which captures the local optimality and obtained a $3$-approximation.

\smallskip

For facilities with non-uniform capacities, i.e., the general CFL problem, Pal et al.~\cite{10.5555/874063.875600} presented a $(9,5)$-approximation based on local search algorithm.
%
The ratio was improved by Mahdian and Pal~\cite{DBLP:conf/esa/MahdianP03} to an $(8,7)$-approximation.
Zhang et al.~\cite{DBLP:conf/ipco/ZhangCY04} introduced the idea of multi-exchange local operations and further improved the ratio to $(6,5)$.
%
The algorithm was later modified by Bansal et al.~\cite{DBLP:conf/esa/BansalGG12} to achieve a $5$-approximation, which is the best ratio known for the CFL problem.

\smallskip

In contrast to the rich LP-based toolboxes developed for the uncapacitated facility location problem (UFL), the fact that no LP-based algorithms with constant approximation guarantee were known for CFL was intriguing and surprising.
In fact, devising an LP-based approximation with constant guarantee for CFL was listed as one of ten open problems in the textbook on approximation algorithm due to Williamson and Shmoys~\cite{10.5555/1971947}.
This problem was resolved by the notable work
of An, Singh, and Svensson~\cite{DBLP:conf/focs/AnSS14}, in which a strong multi-commodity flow network (MFN)
relaxation
with constant integrality gap is presented.
The approximation guarantee obtained in this work, however, is in the order of 
288, and it remains an open problem to devise a better LP-based guarantee for CFL or a better integrality gap for the MFN relaxation.
%

\smallskip

In the pursuit of settling down the approximability of CFL, an important variation between the general problem and the case with uniform capacities is when we have cardinality-type facility costs (CFL-CFC), i.e., uniform facility cost for which $o_i = 1$ for all $i \in \MC{F}$.
This was studied by Levi, Shmoys, and Swamy~\cite{DBLP:conf/ipco/LeviSS04}, in which an LP-based $5$-approximation was presented.
Interestingly, the ratio of $5$ remained to be the best known ratio for the next $17$ years since 2004.
%

%

\smallskip

The hard capacitated problems have drawn a wide attention in the past two decades, with new understandings and techniques blossomed.
While some of these problems are shown to share the same approximability with their uncapacitated versions~\cite{10.5555/3039686.3039860,DBLP:conf/soda/CheungGW14}, many of others appear to be of greater difficulty to deal with~\cite{DBLP:journals/mp/AnBCGMS15,DBLP:conf/focs/CyganHK12,DBLP:conf/soda/Li15,DBLP:conf/focs/AnSS14}.
One primary reason for this phenomenon is that, 
the hard capacity constraint renders most of existing techniques, in particular, LP-based techniques developed for the uncapacitated versions, not directly applicable in ensuring the feasibility, and 
complicated constructions with compromise are often made to deliver a solid guarantee.
The existing LP-based result for the CFL problem~\cite{DBLP:conf/focs/AnSS14} is one of such examples.
%
%
%
With the usage of reassignable partial assignments, the MFN relaxation provides a way to handle the CFL problem with a bounded integrality gap.
The right rounding methodology
for this category of problems, however, 
appears to be yet to be found, be it the CFL problem, or its restricted variations.


%
%
%
%
%
%
%


%

%

%

\subsection{Our Contribution}

%
Following the open question posted by Williamson and Shmoys~\cite{10.5555/1971947} and the notable work due to An~et~al.~\cite{DBLP:conf/focs/AnSS14}, 
we present a simple and elegant rounding-based approximation algorithms with a significantly improved LP-based guarantee for the CFL problem.
%
%
%
Our result for CFL is the following theorem.

\begin{theorem} \label{thm-approx-cfl}
There is an LP-based algorithm for CFL that produces a $\left(10+\sqrt{67}\right)/2 \approx 9.0927$-approximation in polynomial-time.
\end{theorem}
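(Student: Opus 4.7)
The plan is to base the algorithm on the MFN relaxation of An--Singh--Svensson, which is already known to have constant integrality gap, and to design an iterative rounding scheme on top of it. Throughout the algorithm I would maintain a feasible fractional MFN solution on the current residual instance --- in which already-opened facilities have their consumed capacity subtracted off and already-served clients removed --- and in each iteration commit a carefully chosen small set of facilities integrally, route the clients they absorb, and recurse on what remains. This replaces the one-shot rounding of An--Singh--Svensson, whose slack is the source of the large constant $288$, with a sequence of local commitments each of which can be charged tightly to the current fractional solution.

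The workhorse is a clustering/commitment rule parametrized by a threshold $\alpha$. Using the current fractional MFN solution, I would identify disjoint ``stars'' consisting of a client-center $j$ together with the facilities lying in its $\alpha$-neighborhood, measured against the fractional service cost paid at $j$. Inside each star the facility of minimum effective opening cost (reweighted by the fractional opening mass inside the star) is opened integrally; every client fractionally served within the star is routed to it via the triangle inequality. This step converts an at-most $\alpha$-factor re-routing blow-up in the service cost into an $O(1/\alpha)$-factor charge on the facility cost, because the committed facility was the cheapest inside an $\Omega(1)$-mass cluster.

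The cost analysis proceeds by maintaining two separate ledgers across iterations. A per-iteration bound of the form $\texttt{FAC-paid} \le A(\alpha)\cdot F^{\ast} + B(\alpha)\cdot S^{\ast}$ and $\texttt{SVC-paid} \le C(\alpha)\cdot F^{\ast} + D(\alpha)\cdot S^{\ast}$ is summed telescopically using the MFN feasibility of the residual; here $F^{\ast}$ and $S^{\ast}$ denote the fractional facility and service costs of the initial MFN optimum. The resulting combined ratio is a univariate function of $\alpha$, and the specific value $(10+\sqrt{67})/2$ emerges as the root of the quadratic obtained by equating the worst facility and service coefficients; numerically this amounts to a balance equation of the form $4\alpha^{2} - 40\alpha + 33 = 0$, solved at its larger root.

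The hardest part, and the place I expect the real technical work to lie, is showing that a single iteration preserves MFN feasibility on the residual. Hard capacities interact poorly with iterative rounding because a partially used facility leaves a fractional remainder that must still support a feasible multi-commodity flow against the updated capacities and the surviving demand. Making the clustering rule compatible with this invariant --- in particular, arguing that the committed facilities together with the absorbed clients can be peeled off without violating any MFN cut or conservation condition, and without leaking cost into parts of the network outside the star being processed --- is the crux. Once this invariant is in place, the rest of the proof of Theorem~\ref{thm-approx-cfl} reduces to the per-iteration charging and the parameter balancing sketched above.
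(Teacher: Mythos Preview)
Your proposal diverges from the paper's actual approach in ways that create real gaps, and the step you yourself flag as ``hardest'' is in fact the wrong target.

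The paper does \emph{not} maintain MFN feasibility on a shrinking residual instance across iterations. Instead it inherits the round-or-separate framework: given a single candidate $(\bm{x}',\bm{y}')$, it either rounds it in one shot or returns a separating hyperplane to the Ellipsoid method. The MFN relaxation is invoked exactly once per Ellipsoid step, to produce either a feasible flow $\bm{f}$ for $\mathbf{MFN}_\Psi(\bm{x}',\bm{y}'',\bm{g})$ or a violated constraint; there is no attempt to preserve MFN feasibility after committing facilities. Trying to do so would be problematic, since the MFN constraints are indexed by \emph{all} valid partial assignments $g$, and it is not at all clear how removing facilities and clients interacts with this family.

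Two further structural ingredients are missing from your sketch. First, the paper splits facilities into small ($y'_i<\alpha$) and large ($y'_i\ge\alpha$), and for the large heavily-loaded ones solves a maximum b-matching to build a partial assignment $\bm{g}$; this is what guarantees that large facilities are \emph{sparsely loaded} (at most $(1-\alpha)u_i$) by the resulting flow, so they can absorb the final $1/(1-\alpha)$ blow-up. Your client-centered star clustering with ``open the cheapest facility and route the whole star there'' is the uncapacitated recipe and does not by itself respect hard capacities. Second, the iterative rounding in the paper runs not on the MFN but on a small auxiliary LP, denoted LP-(M), re-solved each iteration on the current residual $(F',D',\bm{r}')$; the selection rule is \emph{facility}-centered (minimum per-unit rerouting cost $\theta(i,D',\bm{x}^\dagger,\bm{y}^\dagger)$), and the tight cost accounting comes from complementary slackness between LP-(M) and its dual LP-(DM), applied afresh each round. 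The final guarantee is $\max\{3/(2\alpha),\,(7-4\alpha)/(1-\alpha)^2\}$, balanced at $\alpha=(10-\sqrt{67})/11$ --- so the relevant quadratic is $11\alpha^2-20\alpha+3=0$ in the \emph{parameter}, not the equation you wrote in the ratio.
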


This significantly
improves upon the previous ratio of 288 due to An~et~al.~\cite{DBLP:conf/focs/AnSS14} in~2014.
Our algorithm is built on an iterative rounding scheme for the MFN relaxation that combines several new insights and novel ideas with a couple of techniques developed in the past~\cite{zakmsp02,DBLP:conf/ipco/LeviSS04,DBLP:conf/focs/AnSS14}.
In addition, it has the characteristic of being simple and elegant in that no sophisticated construction is involved.
%
%
We believe that such simplicity is essential and beneficial in the further development of this problem. 

\smallskip

In addition to the general CFL problem, we present an improved approximation algorithm for CFL with cardinality facility cost (CFL-CFC).
Our result for this part is the following theorem.
%

%

\begin{theorem} \label{thm-approx-cfl-ufc}
There is a rounding-based algorithm for 
CFL-CFC that produces a $4$-approximation in polynomial-time.
\end{theorem}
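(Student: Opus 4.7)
The plan is to specialize the iterative rounding machinery developed for Theorem~\ref{thm-approx-cfl} to the uniform facility cost regime, where the freedom to redistribute opening cost across facilities opens room for a sharper bound. I would start from the MFN relaxation, in which, under cardinality cost, the opening term reduces to $f \cdot \sum_{i \in \MC{F}} y_i$; this means closing a fractional facility and opening a nearby one with equivalent $y$-mass has zero net effect on the LP facility charge, a property unavailable in the general CFL setting.

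Second, I would run an iterative rounding scheme on this relaxation. At each step, the scheme identifies a fractional facility whose LP flow can be re-routed through the MFN onto a neighboring subset of fractional or already opened facilities without violating capacities, and then rounds the chosen facility integrally. The uniform cost structure allows me to charge each newly opened facility against a comparable unit of $y$-mass that is being consolidated, so the facility-cost blowup is controlled by a simple volume/exchange argument. The service-cost blowup in each step is controlled by the MFN slack, because a re-routed client is moved only onto facilities at distances bounded in terms of its LP connection cost.

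Third, I would package the two blowups into a combined bound of the form $\gamma_f \cdot F^{\ast} + \gamma_s \cdot S^{\ast}$ on the integral solution, where $F^{\ast}$ and $S^{\ast}$ are the LP facility and service costs. The target is to show $\gamma_f, \gamma_s \le 4$ simultaneously, so that the integral cost is at most $4(F^{\ast}+S^{\ast})$, matching the LP optimum times $4$. The reason this beats the general bound of $(10+\sqrt{67})/2$ is precisely that, in the uniform-cost setting, consolidating fractional openings does not force us to replace cheap facilities by expensive ones, removing the cost-scale slack that inflates $\gamma_f$ in the general analysis.

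The main obstacle I expect is the design of the per-iteration re-routing: we must simultaneously (i) preserve hard capacities at each intermediate step, (ii) guarantee that the displaced clients of the rounded facility can always be absorbed at a bounded marginal distance, and (iii) amortize the rounding charges so the constants balance out at exactly $4$ rather than something weaker. Achieving the clean constant $4$ will likely hinge on a structural lemma showing that the MFN optimum always admits a fractional facility whose neighborhood of alternatives is rich enough to absorb its flow, a property that should follow from the cardinality form of the LP optimality conditions combined with the strengthenings inherited from the iterative rounding framework of Theorem~\ref{thm-approx-cfl}.
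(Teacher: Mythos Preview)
Your proposal has a genuine gap: you plan to specialize the MFN-based iterative rounding of Theorem~\ref{thm-approx-cfl}, but you do not identify a concrete mechanism by which uniform facility cost collapses the bound $\max\{3/(2\alpha),(7-4\alpha)/(1-\alpha)^2\}$ from roughly $9.09$ down to $4$. The observation that consolidating $y$-mass is cost-neutral under cardinality cost is correct but does not by itself attack the two bottlenecks of that analysis: the $3/(2\alpha)$ term arises from the iterative rounding's facility-cost aggregation (Lemma~\ref{lemma-cfl-cost-aggregation-overall}), and the $(7-4\alpha)/(1-\alpha)^2$ term from the flow-path charging (Lemma~\ref{lemma-cfl-cost-overall-assignment-cost}); neither of these depends on facility costs being nonuniform in any way your sketch exploits. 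You give no alternative charging that would lower either constant, and the ``structural lemma'' you anticipate needing is left entirely unspecified.

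The paper takes a completely different route. It abandons the MFN relaxation for CFL-CFC and works directly with the natural LP~\ref{LP-natural-CFL}, using the optimal dual value $\alpha_j$ as a tight per-client assignment-radius bound. The central new idea, absent from your proposal, is the creation of \emph{outlier clients}: when a client $j\in J^{(\leftrightarrow)}$ has its residual assignment on the small facilities $F'$ drop below $1/2$, that residual is redistributed to the large facilities $w\in N_{(U,x')}(j)$, creating proxy clients $j_w$ located at $w$ with demand $d_{j_w}\le x'_{w,j}$. Clusters centered at outlier clients are not rounded in the first phase; instead they are handled in a second phase by a dedicated assignment LP~\ref{LP-CFL-outliers} over $G\times U$. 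The crucial structural lemma (Lemma~\ref{lemma-bound-outlier-lp-basic-solution}) shows that a \emph{basic} optimal solution of this LP has at most $|U|$ strictly fractional facilities, so rounding them up costs at most $|U|\le 2\sum_{i\in U}y'_i$ --- a technique imported from the capacitated-covering literature, not from the MFN machinery. This two-stage scheme, combined with the primal-dual accounting of Lemma~\ref{lemma-bound-primal-dual-for-I} that converts $\sum_j(1-\sum_{i\in U}x'_{i,j})\alpha_j$ back into LP cost on $I$, is what balances the facility and service constants at exactly $4$. None of these ingredients --- outlier clients, the second-phase assignment LP, or the basic-solution matching argument --- are present in the MFN framework you propose to reuse.
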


This result yields an improvement upon the decades-old ratio of $5$ due to Levi et al.~\cite{DBLP:conf/esa/BansalGG12} that ages up for 17 years
since 2004.
%
Prior to this work, it was not even clear whether LP-based methods can be used to provide a guarantee better than~$5$ for the CFL problem, even for restricted versions of this problem, for which simple natural LPs are already known to have small integrality gaps.
Our result provides the first affirmative answer on the case with cadinality facility cost.

\smallskip

%
%
%
%
The algorithm we propose for CFL-CFC
is a delicate coordination of a two-staged iterative clustering scheme which incorporates a set of novel ideas with techniques developed in the past for both facility location and capacitated covering problems~\cite{10.5555/3039686.3039860,DBLP:conf/soda/CheungGW14,DBLP:conf/ipco/LeviSS04}. 
%
%
%
%
We believe that, the rounding techniques we develop in this work are of independent interests and will lead to further insights and progress for related problems.

%


%
%
%

%
%

%

%

%


\paragraph{Overview of Algorithms and Techniques.}

The core part of our results can be seen as rounding procedures that handles small instances incurred in the LP relaxations,
i.e., the rounding decisions for the small facilities 
and the assignments made to them.
As was illustrated implicitly by An~et~al.~in~\cite{DBLP:conf/focs/AnSS14}, the true power of the MFN relaxation lies in its ability to remove the large facilities from consideration, using the assignments made to them as the extra price.
As a result, what remains is the rounding problem for a relaxation of the small instance.
%

\smallskip

Our procedures 
aim at fractionally serving the clients while making sure that the rounded facilities are reasonably sparsely-loaded by the assignments, so that a small final round-up on the assignments can be made to guarantee the feasibility.
%
The sparsity of the small facilities is guaranteed by default.
In our algorithm, we make the observation that, 
with proper construction, the large facilities are also sparsely-loaded by the flow sent to them, and a reasonable final rounding blow-up of $1/(1-\alpha)$ can be made when necessary.
This characteristic is distinguishable to~\cite{DBLP:conf/focs/AnSS14}, in which the small instance is created by showing that, there exists a feasible flow that sends a firm fraction of $1/2$ demand from each client of interests to the small facilities.
%
%
%

\smallskip

Our rounding procedure for the small facilities builds around the idea due to Abraham~et~al.~\cite{zakmsp02} and prior works developed for uncapacitated facility location.
In each iteration, the facility with the least \emph{per-unit-flow-rerouting-cost}
is selected to be rounded, and all the flow along with the facility value in the vicinity is rerouted {simultaneously} and {proportionally} to the selected facility.
%
%
%
%
%
%
%

\smallskip

To bound the extra rerouting cost incurred during the rounding process, one essential element is to guarantee a low assignment radius for each client.
In~\cite{DBLP:conf/focs/AnSS14,zakmsp02}, this is done by applying the so-called filtering technique,
%
which is basically to apply the Markov's inequality to cut-off long-range assignments followed by unconditional round-up.
This inevitably creates a tremendous blowup in the final guarantee.
In our algorithm, we rebalance the instance in each iteration with a carefully designed LP and use the primal-dual schema in an implicit way
to obtain an exact pricing on the cost, which in turn
%
bounds the assignment radius tightly for each client that gets reassigned.
The LP we design also enables, in a subtle yet crucial way, our rounding algorithm to evenly balance the facility cost and the assignment cost we spend in each iteration.
%
%

\medskip

Our rounding algorithm for CFL-CFC is a delicate coordination between rounding procedures for the large and the small facilities.
%
In contrast to our previous result, the large facilities can be tightly-loaded by the assignments made in the natural LP solutions.
Hence, they do not allow a final round-up of the assignments in general.
%
%
%
%

\smallskip

To overcome this issue, we introduce the concept of \emph{client redistribution}:
When the residue demand of a client drops below a target threshold, we discard the client and redistribute part of it to the large facilities in the vicinity, defined by the LP solution, to form the so-called ``\emph{outlier clients}.''
The outlier clients participate in the rounding process after created and act as normal clients except for that, there is no threshold for them to be discarded, and we guarantee that they will be fully-assigned for the final feasibility.
Moreover, the way the outlier clients are created also guarantees that, the resulting assignment cost does not increase too much.

\smallskip

The concept of client redistribution resolves the assignment of the clients. 
However, when an outlier client is inevitably selected to form a cluster, we are no longer able to guarantee the overall rounding cost of the facilities, since the total facility value can be arbitrarily small, rendering the rounding error unbounded.
To prevent this undesirable situation, we formulate the rounding decisions for the \emph{outlier clusters} as another instance of CFL-CFC, in which the large facilities act as the clients and the small facilities act as the normal facilities.
%
%
%
%
%
We use a carefully designed matching-yielding assignment LP, followed by an \emph{unconditional rounding} scheme, for this instance.
%

\smallskip

To bound the cost incurred, we deploy a technique, which was originally developed for the capacitated covering problems~\cite{DBLP:conf/soda/CheungGW14,10.5555/3039686.3039860}, to show that, basic feasible solutions of this simple LP corresponds naturally to a matching from the non-integral facilities to the large facilities at which the outlier clients reside.
Hence the rounding cost of these small facilities can be bounded.
%

%

%

%
%

%


\paragraph{Organization of this paper.}

The rest of this paper is organized as follows.
In the rest of this section we introduce the MFN relaxation for the CFL problem.
We present our approximation algorithm for CFL in Section~\ref{sec-approx-cfl} and our approximation algorithm for CFC-CFC in Section~\ref{sec-approx-cfl-cfc}.

\smallskip

The additional content is organized as follows.
We establish the approximation guarantee for CFL in Section~\ref{sec-proof-approx-cfl}, page~\pageref{sec-proof-approx-cfl}, and the guarantee for CFC-CFC in Section~\ref{sec-proof-approx-cfl-cfc}, page~\pageref{sec-proof-approx-cfl-cfc}.
%
%

%

%

%


\subsection{Preliminaries}

%

%
In the CFL problem, we are given a set $\MC{F}$ of facilities, a set $\MC{D}$ of clients, and a distance metric $c$ defined over $\MC{F} \cup \MC{D}$.
Each $i \in \MC{F}$ is associated with an open cost $o_i$ and a capacity $u_i$, which is the number of clients it can serve when opened up.
%
%
%
%
A feasible solution for CFL consists of a multiplicity function $y \colon \MC{F} \rightarrow \{0,1\}$ and an assignment function $x \colon \MC{F} \times \MC{D} \rightarrow \{0,1\}$ such that the following conditions are met:
%
%
		(a) $\sum_{i \in \MC{F}}x_{i,j} \ge 1$, for each $j \in \MC{D}$. 
%
		(b) $\sum_{j \in \MC{D}}x_{i,j} \le u_i \cdot y_i$, for each $i \in \MC{F}$. 
%
		(c) $x_{i,j} \le y_i$, for each $i \in \MC{F}, j \in \MC{D}$.
%
%
%
The cost of a solution $(x,y)$ is defined to be 
$\psi(x,y) \; := \; \sum_{i \in \MC{F}} o_i \cdot y_i \; + \; \sum_{i \in \MC{F}, \; j \in \MC{D}} c_{i,j} \cdot x_{i,j}.$
Given an instance $\Psi = (\MC{F},\MC{D}, \MB{c}, \MB{o}, \MB{u})$ of CFL,
the goal of this problem is to compute an integral solution $(x,y)$ such that $\psi(x,y)$ 
is minimized.

%

%




%


%

%
\begin{figure*}[h]
%
\vspace{4pt}
\begin{minipage}{\textwidth}
\fbox{
\enskip
\begin{minipage}{.39\textwidth}
\centering
\includegraphics[scale=1]{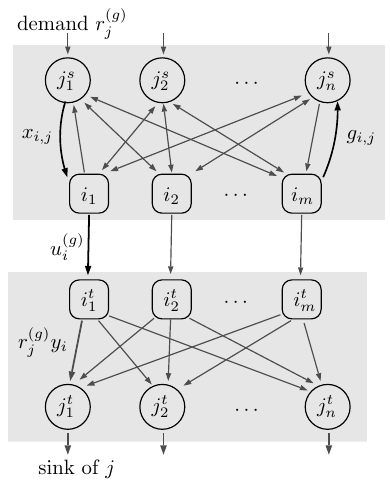}
\end{minipage}
\enskip
}
\enskip
\fbox{
\hspace{-10pt}
\begin{minipage}{.51\textwidth}
\begin{subequations}
\begin{align}
& \sum_{i \in \MC{F}, \; p \in P(i,j)} \hspace{-4pt} f_p \; \ge \; r^{(g)}_j, & & \forall j \in \MC{D}, \label{LP_MFN_1} \\[6pt]
& \sum_{p \in P \colon (j^s,i) \in p} \hspace{-6pt} f_p \; \le \; x_{i,j}, & & \hspace{-8pt} \forall i \in \MC{F}, j \in \MC{D}, \label{LP_MFN_2} \\[6pt]
& \sum_{p \in P \colon (i,j^s) \in p} \hspace{-6pt} f_p \; \le \; g_{i,j}, & & \hspace{-8pt} \forall i \in \MC{F}, j \in \MC{D}, \label{LP_MFN_3} \\[6pt]
& \sum_{j \in \MC{D}, \; p \in P(i,j)} \hspace{-10pt} f_p \; \le \; u^{(g)}_i \hspace{-4pt} \cdot y_i, & & \forall i \in \MC{F}, \label{LP_MFN_4} \\[6pt]
& \sum_{p \in P(i,j)} f_p \; \le \; r^{(g)}_j \hspace{-4pt} \cdot y_i, & & \hspace{-8pt} \forall i \in \MC{F}, j \in \MC{D},  \label{LP_MFN_5} \\[6pt]
& \hspace{10pt} f_p \; \ge \; 0, & & \forall p \in P. \label{LP_MFN_6}
\end{align}
\end{subequations}
\vspace{-8pt}
\end{minipage}
\enskip
}
\end{minipage}
\caption{The construction of $\mathbf{MFN}_\Psi(x,y,g)$ and the corresponding LP constraints.}
\label{cons-MFN-g}
\vspace{-8pt}
\end{figure*}
%

%


\paragraph{The MFN relaxation.}

%
As natural LP formulations for CFL are known to have an unbounded integrality gap even for simple settings,
%
%
%
An, Singh, and Svensson~\cite{DBLP:conf/focs/AnSS14} introduced a strong LP relaxation based on multicommodity flow networks (MFN).
%
The idea is to impose Knapsack-cover type constraints, formulated as reassignable \emph{partial assignments} given as free in each qualifying test.
%
%

\begin{definition}[Partial Assignments]
A partial assignment $g$ is 
a function $g\colon \MC{F} \times \MC{D} \rightarrow [0,1]$.
The partial assignment $g$ is said to be valid if
		\; (i) $\hspace{1pt} \sum_{i \in \MC{F}} g_{i,j} \le 1$, for each $j \in \MC{D}$, and
%
		\; (ii) $\hspace{1pt} \sum_{j \in \MC{D}} g_{i,j} \le u_i$, for each $i \in \MC{F}$.
%
%
\end{definition}

\smallskip

Given a candidate fractional solution $(x,y)$ and a valid partial assignment $g$, 
the multi-commodity flow network with respect to $(x,y,g)$, denoted $\mathbf{MFN}_\Psi(x,y,g)$ and to be defined in the following, gives a qualifying test on the validity of the candidate solution $(x,y)$.

\begin{definition}[Multi-commodity Flow Network]
For a valid partial assignment $g$ and a candidate solution $(x,y)$, $\mathbf{MFN}_\Psi(x,y,g)$ is a multicommodity flow network 
defined as follows.
\begin{itemize}
	\item
		Each client $j \in \MC{D}$ corresponds to two nodes $j^s$ and $j^t$ in the network and is associated with a commodity $j$ with source-sink pair $(j^s, j^t)$ and demand $r^{(g)}_j := 1-\sum_{i \in \MC{F}} g_{i,j}$.
		
	\item
		Each facility $i \in \MC{F}$ corresponds to two nodes $i$ and $i^t$ that are connected by an arc $(i,i^t)$ of capacity $u^{(g)}_i := y_i \cdot \left( u_i - \sum_{j \in \MC{D}} g_{i,j} \right)$.
		
	\item
		For each $j \in \MC{D}$ and each $i \in \MC{F}$, there is an arc $(j^s,i)$ of capacity $x_{i,j}$, an arc $(i,j^s)$ of capacity $g_{i,j}$, and an arc $(i^t,j^t)$ of capacity $r^{(g)}_j \cdot y_i$.
\end{itemize}
\end{definition}

See also Figure~\ref{cons-MFN-g} for an illustration on the construction of $\mathbf{MFN}_\Psi(x,y,g)$ and the corresponding constraints.
For any $i \in \MC{F}, j \in \MC{D}$, let $P^{(g)}_{(x,y)}(i,j)$ to denote the set of paths in $\mathbf{MFN}_\Psi(x,y,g)$ for commodity $j$ to sink via $i^t$. 
The superscript $(g)$ and the subscript $(x,y)$ is omitted when there is no confusion in the context.
Let $\vphantom{\text{\Large T}} P := \cup_{i \in \MC{F}, j \in \MC{D}} P(i,j)$ to denote the set of all possible paths.
%
%
%


%
%


%

%

%
%
%
\begin{lemma}[An, Singh, Svensson~\cite{DBLP:conf/focs/AnSS14}] \label{lemma-an-MFN-separation}
Given an instance $\Psi = (\MC{F},\MC{D}, \MB{c}, \MB{o}, \MB{u})$ of CFL,
the constraints defined by
\vspace{-8pt}
$$\mathbf{MFN}_\Psi(x,y) := \left\{ \; \vphantom{\text{\LARGE T}} \mathbf{MFN}_\Psi(x,y,g) \text{ feasible} \; \colon \; \text{$\forall$ valid $g$} \; \right\}$$
is a valid relaxation for integral solutions on $\Psi$.
Furthermore, 
the separation problem for the feasibility of $\mathbf{MFN}_\Psi(x,y,g)$ for any valid $g$ can be answered in weakly polynomial-time, and a basic feasible flow can be obtained.
\end{lemma}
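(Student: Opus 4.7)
The plan is to establish the two claims separately: (1) every integral CFL solution satisfies $\mathbf{MFN}_\Psi(x,y,g)$ for every valid $g$, and (2) the separation problem over the intersection of these constraint families is solvable in weakly polynomial time, with a basic feasible flow recoverable when feasibility holds.

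For (1), I fix an integral feasible $(x^*, y^*)$ with induced assignment $\sigma\colon \MC{D} \to \MC{F}$ and an arbitrary valid $g$, and construct a feasible flow $f$ in $\mathbf{MFN}_\Psi(x^*, y^*, g)$ in two stages. First, find a bipartite transportation solution $z_{i,j} \ge 0$ supported on $\{(i,j) : y^*_i = 1\}$ with $\sum_i z_{i,j} = r_j^{(g)}$ and $\sum_j z_{i,j} \le u_i^{(g)}$; such a $z$ exists by the global balance
$$\sum_j r_j^{(g)} \;=\; |\MC{D}| - \sum_{i,j} g_{i,j} \;\le\; \sum_{y^*_i=1} u_i - \sum_{i,j} g_{i,j} \;=\; \sum_i u_i^{(g)},$$
in which the middle inequality uses the feasibility of $(x^*, y^*)$. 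Second, lift $z$ to a flow in the MFN by routing each unit of $z_{i,j}$ along a ``swap chain'' of the form $j^s \to \sigma(j) \to j_1^s \to \sigma(j_1) \to \cdots \to i \to i^t \to j^t$, where each intermediate hop $\sigma(j_k) \to j_{k+1}^s$ draws on the swap-arc capacity $g_{\sigma(j_k), j_{k+1}}$ and each entry arc $(j_k^s, \sigma(j_k))$ uses the integral slot $x^*_{\sigma(j_k),j_k}=1$. Because the naive direct routing $j^s \to \sigma(j) \to \sigma(j)^t \to j^t$ may over-saturate the arc $(i, i^t)$ whenever the integral pre-image $\sigma^{-1}(i)$ does not match the $g$-footprint on $i$, the main obstacle is to argue that a global swap-chain decomposition of $z$ exists that simultaneously respects all the capacities $g_{i,j'}$; I expect this to go through a flow-decomposition argument on the bipartite multigraph whose edges encode $\sigma$ together with the support of $g$.

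For (2), the feasibility of $\mathbf{MFN}_\Psi(x,y,g)$ for a fixed $g$ is testable in weakly polynomial time by formulating the multicommodity flow as an arc-based LP of polynomial size (paths are recoverable via flow decomposition), and a vertex solution of that LP yields the desired basic feasible flow. To answer the separation question --- does there exist a valid $g$ making the MFN infeasible? --- I would use LP duality. By Farkas, MFN infeasibility for a given $g$ is witnessed by non-negative multipliers on the demand, assignment, swap, facility-capacity, and per-client-capacity constraints whose combined weighted gain against $g$ is positive, and the quantities $r_j^{(g)}$, $u_i^{(g)}$, $g_{i,j}$ appear linearly in that gain. My plan is to invoke the ellipsoid method on the polytope of valid $g$'s with an internal MFN-LP oracle: at a trial $g$, if the MFN is infeasible its Farkas certificate combined with $g$ yields a separating hyperplane for $(x,y)$; otherwise the primal flow provides a cut that shrinks the ellipsoid in $g$-space. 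The harder part is bookkeeping bit-complexity through the interleaved primal/dual iterations, which is what the ``weakly polynomial'' qualifier is warning about.
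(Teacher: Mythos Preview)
The paper does not prove this lemma; it is quoted from An, Singh, and Svensson~\cite{DBLP:conf/focs/AnSS14} and used as a black box. So there is no in-paper proof to compare against, and I will assess your sketch on its own merits.

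\medskip

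\textbf{Part (1).} Your two-stage plan (find a transportation $z$, then lift it to paths via swap chains) is the right architecture, but two things need repair. First, in your global-balance chain the final ``$=$'' is wrong: $\sum_i u_i^{(g)} = \sum_{y^*_i=1} u_i - \sum_{y^*_i=1}\sum_j g_{i,j}$, whereas you subtract $\sum_{i,j} g_{i,j}$ over \emph{all} $i$; since a valid $g$ may load closed facilities, the two differ. Fortunately the discrepancy goes the right way, so replacing ``$=$'' by ``$\le$'' salvages the inequality. Second, and more substantively, the lifting step is only asserted, not proved. You need to show that the swap-chain routing can be done while simultaneously respecting all the $g_{i,j'}$ capacities on the arcs $(i,j'^s)$ \emph{and} the unit capacities on the arcs $(j'^s,\sigma(j'))$; a bare appeal to ``flow decomposition on the bipartite multigraph'' does not yet establish this. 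One clean way is to set up a single-commodity max-flow instance on the graph with arcs $(j^s,\sigma(j))$ of capacity $1$ and arcs $(i,j^s)$ of capacity $g_{i,j}$, together with a super-source feeding $r_j^{(g)}$ into each $j^s$ and a super-sink draining $u_i^{(g)}$ from each $i$, and verify via max-flow/min-cut that the value is $\sum_j r_j^{(g)}$; a path decomposition of that flow then gives the swap chains. Your $z$ is not actually needed as a separate first stage.

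\medskip

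\textbf{Part (2).} Your first paragraph is correct and is already the whole proof of what the lemma claims: for a \emph{fixed} valid $g$, feasibility of $\mathbf{MFN}_\Psi(x,y,g)$ is a polynomial-size LP in the arc formulation, so it can be tested (and a basic feasible flow extracted, then path-decomposed) in weakly polynomial time; if infeasible, the Farkas certificate yields a hyperplane separating $(x,y)$ from the feasible region of that particular $g$. Your second paragraph, however, misreads the statement. The lemma does \emph{not} claim a separation oracle for $\mathbf{MFN}_\Psi(x,y)$ over \emph{all} valid $g$; indeed, no such oracle is known, and this is precisely why the paper (following~\cite{DBLP:conf/focs/AnSS14}) resorts to the round-or-separate framework rather than solving LP-(MFN) directly. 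Your proposed ellipsoid-over-$g$ scheme is also not well-posed as written: when $\mathbf{MFN}_\Psi(x,y,g)$ is feasible for a trial $g$, the primal flow does not give you a separating hyperplane in $g$-space, because there is no objective being optimized over $g$. Drop that paragraph entirely.
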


\paragraph{On the need for facility-saturating partial assignments.}

To illustrate to what extent the MFN relaxation can constraint the fractional solutions to yield a bounded integrality gap for the CFL problem,
let us consider the following example.

\smallskip

Suppose that there are two facilities and $n+1$ clients, where $o_1 = 0$, $o_2 = 1$, $u_1 = u_2 = n$, and $c_{i,j} = 0$ for all $i,j$.
Then, for any $0\le \epsilon \le 1$, making an $\epsilon n$ amount of partial assignments to facility $1$ in the MFN relaxation, e.g., setting $g_{1,j} = \epsilon n / (n+1)$, only guarantees that $y_2 \ge 1/((1-\epsilon)n+1)$.
Hence, when partial assignments are needed to eliminate the low-cost facilities, saturating them with partial assignments is necessary.
%
The way how the clients are partially assigned, however, does not appear to affect the resulting integrality gap.
%

%

%

%

%


\section{LP-based Approximation for CFL}
\label{sec-approx-cfl}


%
In this section we describe our approximation algorithm for CFL and establish Theorem~\ref{thm-approx-cfl}.
%
%
%
%
%
%
The algorithm applies the Ellipsoid framework used in~\cite{DBLP:conf/focs/AnSS14}, which aims to either round candidate solutions with the claimed approximation guarantee
or to assert the infeasibility of the solution.
%

\smallskip

For completeness, we briefly sketch the framework.
%
Then we describe our construction for obtaining a sparsely-loading flow and the more interesting part of the iterative rounding process.
%
%



%
\paragraph{The Outer Framework.}
\begin{wrapfigure}[6]{r}{.31\textwidth}
\hspace{-1pt}
\begin{minipage}{.26\textwidth}
\vspace{-12pt}
\fbox{
\hspace{-4pt}
\begin{minipage}{\textwidth}
%
\vspace{-9pt}
\begin{align}
& \mathbf{MFN}_\Psi(x,y),  \label{LP-MFN} \tag*{LP-(1)} \\[3pt]
& \psi(\bm{x}, \bm{y}) \le \gamma,  \notag \\[4pt]
& \bm{x} \in [ \hspace{1pt} 0, 1 \hspace{1pt} ]^{\MC{F} \times \MC{D}}, \; \bm{y} \in [ \hspace{1pt} 0,1 \hspace{1pt} ]^{\MC{F}}. \notag
\end{align}
\vspace{-16pt}
\end{minipage}\hspace{-1pt}
}
\end{minipage}
\end{wrapfigure}
%
%
%
The framework starts by guessing the optimal cost using standard binary search.
For each guess, say $\gamma$, the Ellipsoid algorithm is applied on~\ref{LP-MFN}.
For each separation problem incurred during the Ellipsoid algorithm,
we apply Theorem~\ref{thm-MFN-relaxed-separation-oracle}, which is stated below, for either a separating hyperplane or an integral solution with the claimed approximation guarantee.
%
%
%
%
When an integral solution is successfully rounded, or when the Ellipsoid algorithm concludes the infeasibility of the guess $\gamma$, the framework continues to the next iteration of the binary search process until the desired precision is attained.
%
%
%
%
%
It suffices to prove the following theorem.

\begin{theorem}	\label{thm-MFN-relaxed-separation-oracle}
Given 
a candidate solution $(\bm{x}, \bm{y})$ for~LP-(MFN)
and a target parameter $\alpha$ with $0 < \alpha \le 1/3$, we can compute in polynomial-time either (i) a 
separating hyperplane for $(\bm{x}, \bm{y})$
and~LP-(MFN),
or (ii) an integral solution $(\bm{x}^*, \bm{y}^*)$, rounded from $(\bm{x}, \bm{y})$, for $\Psi$ with 
%
$$\psi(x^*, y^*) \le \max\left\{ \; \frac{3}{2\alpha} \; , \; \frac{7-4\alpha}{(1-\alpha)^2} \; \right\} \times \psi(x,y).$$
\end{theorem}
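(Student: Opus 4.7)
My plan follows the round-or-separate template from~\cite{DBLP:conf/focs/AnSS14}: either surface a violated $\mathbf{MFN}$ inequality as the separating hyperplane required by~(i), or integrally round $(\bm{x}', \bm{y}')$ with the stated cost blow-up. The key conceptual departure is to aim not for a large fraction of demand pre-routed to big facilities, but for big facilities that are provably \emph{sparsely loaded} so that naive integral rounding of their multiplicities only costs a $1/(1-\alpha)$ capacity factor.

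First, I would partition $\MC{F}$ into large facilities $\MC{F}_L = \{i : y'_i \ge 1-\alpha\}$ and small facilities $\MC{F}_S = \MC{F}\setminus \MC{F}_L$, immediately setting $y^*_i = 1$ for each $i \in \MC{F}_L$, which incurs a facility-cost blow-up of at most $1/(1-\alpha)$. Next, I would construct a valid partial assignment $g$ supported on $\MC{F}_L$ by trimming pieces of $\bm{x}'$ so that $\sum_j g_{i,j} \le u_i$ while leaving each $i \in \MC{F}_L$ loaded well below its capacity. I would then invoke Lemma~\ref{lemma-an-MFN-separation} on $\mathbf{MFN}_\Psi(\bm{x}', \bm{y}', g)$: if infeasible, the violated inequality is exactly the separating hyperplane required by~(i); otherwise I obtain a basic feasible flow $\bm{f}$ routing the residual demand $r^{(g)}$, predominantly through $\MC{F}_S$.

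The core work is an iterative rounding of the small facilities driven by $\bm{f}$. At each step, for every still-active small facility $i$, I would compute its per-unit-flow rerouting cost, the cost of moving to $i$ all flow currently incident to facilities in a carefully defined vicinity of $i$, simultaneously and proportionally, divided by the total flow moved. I would integrally open the minimizer $i^\star$, commit its reassignment up to capacity $u_{i^\star}$, and deactivate its vicinity. A disjointness argument on the chosen vicinities, combined with the fact that each vicinity accumulates $\bm{y}'$-mass at least $\alpha$, bounds the number of small facilities opened per unit of LP facility mass by $1/\alpha$ and ensures that at least $(1-\alpha)$-fraction of each client is eventually served; the residual $\alpha$-fraction is rerouted to the nearest opened facility, with radius controlled by the vicinity diameter and by the capacity slack on $\MC{F}_L$.

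The main obstacle, and the source of the improvement, is pricing the rerouting cost tightly enough to avoid the huge constants incurred by prior filtering-based analyses in~\cite{DBLP:conf/focs/AnSS14,zakmsp02}. To this end, I would maintain an implicit primal-dual pair along the rounding: dual variables $\beta_j$ price exactly the current LP service contribution of each client, and the per-unit rerouting cost of every candidate $i$ is charged against the $\beta_j$-mass of clients in its vicinity; after each iteration the dual is resolved so that pricing reflects the remaining flow. The minimum-rerouter rule then guarantees that each iteration's charge is bounded by a local average of the $\beta$-values, and a telescoping argument yields a service-cost blow-up of $3/(2\alpha)$ for the $\MC{F}_S$-side, while the combined facility-plus-service charge on the $\MC{F}_L$-side telescopes into the $(7-4\alpha)/(1-\alpha)^2$ factor. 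The maximum of the two gives the claimed bound, and since every step uses only Lemma~\ref{lemma-an-MFN-separation} or a polynomial-size LP for the vicinity structure, the whole procedure runs in polynomial time.
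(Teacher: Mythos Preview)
Your proposal follows the right high-level template (round-or-separate via $\mathbf{MFN}$, then iteratively round the small instance with a rerouting-cost criterion and primal-dual pricing), but several of the concrete mechanisms you sketch diverge from the paper's proof in ways that would not yield the stated constants.

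First, the large/small threshold: you split at $y'_i \ge 1-\alpha$, whereas the paper splits at $y'_i \ge \alpha$; the sets $U$ and $I$ are then further refined. More importantly, the construction of the partial assignment $g$ is not a simple ``trimming of pieces of $\bm{x}'$.'' The paper solves a maximum b-matching between $\MC{D}$ and the heavily-loaded subset $U^{(>)}\subseteq U$ with edge capacities $x'_{i,j}/(1-\alpha)$, identifies the \emph{tightly-occupied} facilities (those reachable from partially-assigned clients via augmenting paths), and sets $g_{i,j}:=h_{i,j}$ only on those. This b-matching structure is exactly what drives the $(7-4\alpha)/(1-\alpha)^2$ factor: the bound $g_{i,j}\le x'_{i,j}/(1-\alpha)$ together with the MFN arc capacities controls how much flow can traverse each $(j^s,i)$ and $(i,j^s)$ edge (Lemma~\ref{lemma-cfl-cost-overall-assignment-cost}). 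A generic trimming does not give this. Also, the separation is tested on $\mathbf{MFN}_\Psi(\bm{x}',\bm{y}'',g)$ with $y''_i:=1$ for $i\in U$, not on $(\bm{x}',\bm{y}',g)$; the monotonicity $\bm{y}'\le\bm{y}''$ is what makes a violated constraint for the former also separate the latter.

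Second, the small-facility rounding in the paper is \emph{not} a disjoint-vicinity clustering. In each iteration the paper re-solves a fresh LP (\ref{LP-natural-CFL-modified}) on the current residual tuple $(F',D',\bm{r}')$, picks the facility $i$ minimizing $\theta(i,D',\bm{x}^\dagger,\bm{y}^\dagger)$, and then takes a \emph{proportional} contribution $\sigma^{(i)}_k$ from \emph{every} remaining $k\in F'$ (not a disjoint neighborhood), scaling each $k$'s assignment and multiplicity down by $(1-\sigma^{(i)}_k)$. Feasibility of the next iteration's LP is exactly Lemma~\ref{lemma-cfl-approx-iterative-rounding-successive-tuple-feasibility}, and the $3/(2\alpha)$ factor comes from telescoping the optimal LP values across iterations (Lemma~\ref{lemma-cfl-cost-aggregation-overall}) together with the initial feasible point $y^{\dagger(0)}_i=\tfrac{1-\alpha}{2\alpha}\,y'_i$. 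Your ``deactivate the vicinity / disjointness'' scheme does not maintain an LP whose optimum you can telescope, and the claim that each vicinity carries $\bm{y}'$-mass at least $\alpha$ has no obvious source under your threshold. The re-solved dual $\bm{\lambda}^{\dagger(i)}$ (not a single $\beta_j$) is what prices the rerouting in Lemma~\ref{lemma-cfl-approx-primal-dual-for-I} and Lemma~\ref{lemma-cfl-cost-aggregation-per-iteration}; without the iterative LP, that pricing argument does not go through.
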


\noindent
Note that, 
the particular choice of $\alpha := \left(10-\sqrt{67}\right)/11$ completes the statement of Theorem~\ref{thm-approx-cfl}.
%
%
%
%
%
In the rest of this section, we describe our rounding algorithm that establishes the statement of Theorem~\ref{thm-MFN-relaxed-separation-oracle}.
We provide the analysis and 
finish the proof 
in Section~\ref{sec-proof-approx-cfl}, page~\pageref{sec-proof-approx-cfl}.

%
%


%


\paragraph{Initial Classification.}

%
%

%

%


%
Classify the facilities as follows.
%
Let $I := \left\{ \; i \in \MC{F} \; \colon \; 0 < y_i < \alpha \; \right\}$
and $U := \left\{ \; i \in \MC{F} \; \colon \; y_i \ge \alpha \; \right\}$.
The facilities in $U$ are further classified into two categories. Let 
\vspace{-2pt}
$$\hspace{0.2cm} U^{(>)} := \left\{ \vphantom{\text{\huge T}} \right. \; i \in U \; \colon \; \sum_{j \in \MC{D}} x_{i,j} \; > \; (1-\alpha) \cdot u_i  \;\; \left. \vphantom{\text{\huge T}} \right\}
\quad \text{and} \quad 
U^{(\le)} := U \setminus U^{(>)}.$$
Provided that the facilities in $U$ are to be rounded up in the approximate solution, we know that the assignments to $U^{(\le)}$ are ready to be rounded up by a factor of $1/(1-\alpha)$.
We refer the facilities in $U^{(\le)}$ to as \emph{sparsely-loaded} by the assignments made in $\bm{x}$.
%


%


\paragraph{Obtaining an Initial Sparsely-Loading Flow.}

%
Consider the bipartite graph $G = \left(\MC{D},U^{(>)},E\right)$, 
where there exists an edge $(j,i)$ in $E$ with
%
edge capacity $ x_{i,j} / (1-\alpha)$ for each $j \in \MC{D}$ and $i \in U^{(>)}$.
Solve~\ref{LP-b-matching} for an optimal $\bm{h}$ for the maximum b-matching problem on $G$.

\smallskip

\begin{wrapfigure}[8]{r}{.35\textwidth}
\hspace{-2pt}
\begin{minipage}{.31\textwidth}
\vspace{-11pt}
\fbox{
\hspace{-6pt}
\begin{minipage}{\textwidth}
\vspace{-9pt}
\begin{align}
& \text{max}  \hspace{-4pt} \sum_{\;\; i\in U^{(>)}, \; j \in \MC{D}} \hspace{-6pt}  h_{i,j}   \label{LP-b-matching} \tag*{LP-(2)} \\
& \hspace{-8pt} \sum_{\;\; i \in U^{(>)}} \hspace{-1pt} h_{i,j} \le 1,  &  \hspace{-22pt}  \forall j \in \MC{D},  \notag \\
& \sum_{j \in \MC{D}} \; h_{i,j} \le u_i,  &  \hspace{-22pt}  \forall i \in U^{(>)},  \notag \\
& \hspace{2pt} 0 \le \bm{h} \le \bm{x} / (1-\alpha). \notag
\end{align}
\vspace{-18pt}
\end{minipage}\hspace{1pt}
}
\end{minipage}
\end{wrapfigure}
For any $j \in \MC{D}$, the client $j$ is said to be 
partially-assigned
if $\vphantom{\text{\LARGE T}_{\text{\Large T}} } \sum_{i \in U^{(>)}} h_{i,j} < 1$ and fully-assigned otherwise.
%
%
We say that 
a path $P$ in $G$ is an \emph{augmenting path} if the following holds.
\begin{itemize}
	\item
		$P$ starts at a partially-assigned client $j \in \MC{D}$.
	\item
		For each $(j',i') \in P$ with $j' \in \MC{D}, i'\in U^{(>)}$, \newline
		we have $h_{i',j'} < x_{i',j'} / (1-\alpha)$.
		
	\item
		For each $(i',j') \in P$ with $i' \in U^{(>)}, j' \in \MC{D}$, \newline
		we have $h_{i',j'} > 0$.
\end{itemize}
%
\noindent
Intuitively, an augmenting path 
is a way to increasing the assignment of a partially-assigned client $j$ without altering the optimality of $\bm{h}$.
%
%
%
%
We say that a facility $i \in U^{(>)}$ is {tightly-occupied} if 
it is reachable in $G$ from a partially-assigned client via an augmenting path.
%

\smallskip

Let $U^{(\phi)}$ denote the set of all tightly-occupied facilities.
For each $i \in \MC{F}, \; j \in \MC{D}$, define 
%
\vspace{-2pt}
$$g_{i,j} := \begin{cases}
\; h_{i,j}, & \text{if $i \in U^{(\phi)}$,} \\
\; 0, & \text{otherwise,}
\end{cases}
\quad \text{and} \quad
y'_i := \begin{cases}
\; 1, & \text{if $i \in U$,} \\
\; y_i, & \text{otherwise.}
\end{cases}$$
%
%
%
%
%
%
Apply Lemma~\ref{lemma-an-MFN-separation} for either a basic feasible flow $\bm{f}$ or a 
separating hyperplane for $\mathbf{MFN}_\Psi(\bm{x},\bm{y}',\bm{g})$.
%
If a separating hyperplane is found, the algorithm reports it and terminates.
%
%

%

%
\begin{figure*}[h]
\centering
\fbox{
\hspace{-10pt}
\begin{minipage}{.8\textwidth}
\vspace{-6pt}
\begin{subequations}
\begin{align}
\text{min} \;\; & \;\; \sum_{i \in I'} \; o_i \cdot y_i \; + \; \sum_{i \in I, \; j \in D'} \; c_{i,j} \cdot x_{i,j} & & \label{LP_ITR} \tag*{LP-(M)} \\[4pt]
\text{s.t.} \;\; & \;\; \sum_{i \in I'} \; x_{i,j} \; \ge \; r'_j, & & \forall j \in D', \label{LP_ITR_1} \\[4pt]
& \;\; \hspace{-1pt} \sum_{j \in D'} \; x_{i,j} \; \le \; u_i \cdot y_i, & & \forall i \in I', \label{LP_ITR_2} \\
& \;\; x_{i,j} \; \le \; \frac{2\alpha}{1-\alpha} \cdot r^{(g)}_j \cdot y_i, & & \forall i \in I, j \in D',  \label{LP_ITR_3} \\[2pt]
& \;\; y_i \; \le \frac{1-\alpha}{2}, & & \forall i \in I',  \label{LP_ITR_4} \\[2pt]
& \;\; x_{i,j} \; \ge \; 0, \; y_i \ge 0, & & \forall i \in I', \; j \in D'. \label{LP_ITR_5}
\end{align}
\end{subequations}
\vspace{-12pt}
\end{minipage}
\enskip
}
\vspace{-4pt}
\end{figure*}


\paragraph{Iterative Rounding for the Small Facilities.}

In the following we describe our iterative rounding process for the small facilities in $I$ 
using the information computed in the previous stage.

\smallskip

During the rounding process, the algorithm maintains a parameter tuple $\Psi' = (I', D', \bm{r}')$ which corresponding to the remaining instance to be processed.
Initially, $I' := I$, 
$$r'_j := \sum_{i \in I, \; p \in P(i,j)} f_p, \;\; \text{for all $j \in \MC{D}$,} \quad\; \text{and} \quad\; 
D' := \left\{ \; j \in \MC{D} \; \colon \; r'_j \; > \; \alpha \cdot r^{(g)}_j \; \right\},$$ 
where $r^{(g)}_j = 1 - \sum_{i \in U^{(\phi)}} g_{i,j}$ is the total demand of $j$.
%
%
The rounding algorithm updates the tuple $\Psi'$ in iterations until $D'$ becomes empty.

\smallskip

%
%
In each iteration, the algorithm solves~\ref{LP_ITR} on the current tuple $\Psi'$ for an optimal $(\overline{\bm{x}}, \overline{\bm{y}})$.
%
%
%
Depending on the status of $\overline{\bm{y}}$, the algorithm selects a facility $i \in I'$ to form a cluster and defines the scaling factor $\sigma^{(i)}_k$ and $\sigma^{(i)}_{k,j}$ for all $k \in I'$ and $j \in D'$.
%
%
\begin{itemize}
	\item
		If $\overline{y}_i = (1-\alpha)/2$ for some $i \in I'$, 
		%
		then the algorithm picks such an $i$ with $\overline{y}_i = (1-\alpha)/2$ from $I'$ and sets $\sigma^{(i)}_k = \sigma^{(i)}_{k,j} = 0$ for all $k \in I' \setminus \{i\}$ and $j \in D'$.

	\item
		If $\overline{y}_i < (1-\alpha)/2$ for all $i \in I'$,
		%
		%
		then the algorithm selects among the facilities $i \in I'$ with $\overline{y}_i > 0$ the particular $i$ with the minimum $\theta(i)$, 
		where $\theta(i)$ is 
		defined as
		\begin{align*}
		\theta(i) \; := \; \frac{1}{\sum_{j \in D'} \overline{x}_{i,j}} \cdot \left( \vphantom{\text{\huge T}} \right. \;\; 3\cdot o_i\cdot \overline{y}_i \; + \; 2\cdot \sum_{j \in D'} c_{i,j} \cdot \overline{x}_{i,j} \;\; \left. \vphantom{\text{\huge T}} \right).
		\end{align*}
%
Intuitively, the selected facility $i$ has the lowest per-unit-assignment rerouting cost, and any other facility in $I'$ can afford the rerouting cost within the cluster centered at $i$.

\smallskip

%
%
For each $j \in D'$, define 
\vspace{-6pt}
$$\delta^{(i)}_j := \left( \; \frac{1-\alpha}{2} \cdot \frac{1}{\overline{y}_i} - 1 \; \right) \cdot \overline{x}_{i,j}$$ 
to be the amount of assignments to be gathered from the vicinity of facility $i$ to $i$ via commodity $j$.
%
%
%
%
For each $k \in I' \setminus \{i\}$, define the fraction of $k$, to be sent to $i$, as
$$\sigma^{(i)}_k \; := \; \frac{1}{\sum_{\ell \in D'} \overline{x}_{k,\ell}} \cdot \sum_{j \in D'} \sigma^{(i)}_{k,j}, 
\quad \text{where} \enskip 
\sigma^{(i)}_{k,j} \; := \; \frac{\overline{x}_{k,j}}{\sum_{\ell \in I' \setminus \{i\}} \overline{x}_{\ell,j}} \cdot \delta^{(i)}_j$$
is the amount of assignments to be sent via commodity $j$.
Note that, from the definition it follows that $\sum_{k \in I' \setminus \{i\}} \sigma^{(i)}_{k,j} = \delta^{(i)}_j$, and the amount to be gathered via $j$ can be fulfilled.
\end{itemize}

\smallskip

\noindent
For consistency, also define $\sigma^{(i)}_i = 1$ and $\sigma^{(i)}_{i,j} = \overline{x}_{i,j}$ for all $j \in D'$.

\bigskip

The algorithm 
updates the parameter tuple $\Psi'$ as follows.
\begin{itemize}
	\item
		For each $j \in D'$, the algorithm decreases $r'_j$ by $\sum_{k \in I'} \sigma^{(i)}_k \cdot \overline{x}_{k,j}$.
%
%
	\item
		The algorithm removes $i$ from $I'$ and all $j\in D'$ with $r'_j < \alpha \cdot r^{(g)}_j$ from $D'$.
\end{itemize}
Then the algorithm proceeds to the next iteration until $D'$ becomes empty.

%

%


\paragraph{Final Output.}

\begin{wrapfigure}[8]{r}{.34\textwidth}
\begin{minipage}{.32\textwidth}
\vspace{-6pt}
\fbox{
\hspace{-12pt}
\begin{minipage}{\textwidth}
\vspace{-9pt}
\begin{align}
& \text{min}  \hspace{-8pt} \sum_{\;\; i\in F^*, \; j \in \MC{D}} \hspace{-6pt}  c_{i,j} \cdot x_{i,j}   \label{LP-mc-assignment} \tag*{LP-(3)} \\
& \hspace{-4pt} \sum_{\;\; i \in F^*} \hspace{-1pt} x_{i,j} \ge 1,  &  \hspace{-22pt}  \forall j \in \MC{D},  \notag \\
& \sum_{j \in \MC{D}} \; x_{i,j} \le u_i,  &  \hspace{-22pt}  \forall i \in F^*,  \notag \\
& \hspace{2pt} \bm{x} \ge 0. \notag
\end{align}
\vspace{-18pt}
\end{minipage}\hspace{-2pt}
}
\end{minipage}
\end{wrapfigure}
When $D'$ becomes empty, define 
$$y^*_i \; := \; \begin{cases}
\; 1, & \text{if $i \in U \cup \left( I \setminus I' \right)$,} \\
\; 0, & \text{otherwise.}
\end{cases}$$
%
The algorithm solves 
the min-cost assignment problem on $\MC{D}$ and $\MC{F}^* := \left\{ \hspace{1pt} \vphantom{\text{\Large T}} i \in \MC{F} \; \colon \; y^*_i = 1 \hspace{1pt} \right\}$ for an optimal integral assignment $\bm{x}^*$
%
and outputs $(\bm{x}^*, \bm{y}^*)$ as the approximate solution.

\medskip

This completes the description for our rounding algorithm.
We provide the analysis and establish the statements of Theorem~\ref{thm-MFN-relaxed-separation-oracle} in Section~\ref{sec-proof-approx-cfl}, page~\pageref{sec-proof-approx-cfl}.

%


%

%
%

%

%

%

%


%


%


\section{$4$-Approximation for CFL-CFC}
\label{sec-approx-cfl-cfc}

Let $\Psi = (\MC{F},\MC{D}, \MB{c}, \MB{u})$ be an instance of CFL-CFC.
%
%
In this section, we describe our algorithm that establishes the statement of Theorem~\ref{thm-approx-cfl-ufc}.
We begin with the following natural LP relaxation.
%


%


%
\begin{figure*}[h]
\centering
\begin{minipage}{.44\textwidth}
\fbox{\hspace{-6pt}
\begin{minipage}{\textwidth}
\vspace{-6pt}
%
\begin{align}
& \; \text{min} \;\; \sum_{i \in \MC{F}} \; y_i + \hspace{-2pt} \sum_{i \in \MC{F}, j \in \MC{D}} c_{i,j} \cdot x_{i,j} & & \label{LP-natural-CFL} \tag*{LP-(N)}  \\[4pt]
& \; \sum_{i\in \MC{F}} \; x_{i,j} \; \ge \; 1, & & \hspace{-0.6cm} \forall j\in \MC{D}, 
\notag \\[2pt]
& \; \sum_{j \in \MC{D}} \; x_{i,j} \; \le \; u_i \cdot y_i, & & \hspace{-0.6cm}  \forall i \in \MC{F},  
\notag \\[1pt]
& \;\; 0 \; \le \; x_{i,j} \le y_i,   & & \hspace{-1.6cm}   \forall i \in \MC{F}, j \in \MC{D}, 
\notag \\[6pt]
& \;\; 0 \; \le \; y_i \; \le \; 1,   & & \hspace{-0.6cm}   \forall i \in \MC{F}. \notag
\end{align}
\vspace{-19pt}
\end{minipage}\hspace{3pt}
}
\end{minipage}
\quad\hspace{-3pt}
\begin{minipage}{.48\textwidth}
\fbox{\hspace{-10pt}
\begin{minipage}{\textwidth}
%
\begin{align}
\; & \; \text{max} \enskip \sum_{j \in \MC{D}} \alpha_j \; - \; \sum_{i \in \MC{F}} \; \eta_i  & & \label{LP-natural-dual} \tag*{LP-(DN)} \\[4pt]
\; & \; \alpha_j \; \le \; \beta_i \; + \; \Gamma_{i,j} \; + \; c_{i,j},  & & \hspace{-0.6cm} \forall i \in \MC{F}, j\in \MC{D},  \notag \\[7pt]
\; & \; u_i \cdot \beta_i \; + \; \sum_{j \in \MC{D}} \Gamma_{i,j} \; \le \; 1 \; + \; \eta_i,  & & \forall i \in \MC{F},  \notag \\[2pt]
\; & \; \alpha_j, \; \beta_i, \; \Gamma_{i,j}, \; \eta_i \; \ge \; 0,  & & \hspace{-0.6cm} \forall i \in \MC{F}, j \in \MC{D}.  \notag
\end{align}
\vspace{-12pt}
\end{minipage}\hspace{1pt}
}
\end{minipage}
%
%
\end{figure*}
%

%

%
We use the following notion 
of \emph{vicinity} with respect to any assignment function of interests, say, $\bm{x}$.
%
%
For any $A \subseteq \MC{F}$
and any $j \in \MC{D}$, we 
use $N_{(A,x)}(j) := \left\{ \; \vphantom{\text{\Large T}} i \in A \; \colon \; x_{i,j} > 0 \; \right\}$ to denote the set of facilities in $A$ to which $j$ is assigned to in $x$.
Similarly, for any $B \subseteq \MC{D}$ and any $i \in \MC{F}$, we 
use $N_{(B,x)}(i) := \left\{ \; \vphantom{\text{\Large T}} j \in B \; \colon \; x_{i,j} > 0 \; \right\}$ to denote the set of clients in $B$ that is assigned to $i$ in $x$.


\medskip

%
%

%

Let $(\bm{x}', \bm{y}')$, $(\bm{\alpha}, \bm{\beta}, \bm{\Gamma}, \bm{\eta})$ be optimal solutions for~\ref{LP-natural-CFL} and its dual~\ref{LP-natural-dual} on $\Psi$.
%
In the rest of this section, we describe our rounding algorithm for $(\bm{x}', \bm{y}')$.
%


%



%
%
%

%


%


\paragraph*{Initial Classification.}

Let $I := \left\{ \; i \in \MC{F} \; \colon \; 0 < y'_i < \frac{1}{2} \; \right\}$
and
$U := \left\{ \; i \in \MC{F} \; \colon \; y'_i \ge \frac{1}{2} \; \right\}.$
%
%
%
%
%
The clients in $\MC{D}$ are divided into three categories, namely, those that are served merely by $I$, those that are served jointly by $I$ and $U$, and those that are served merely by $U$.
Let $$J^{(I)} := \left\{ \vphantom{\text{\LARGE T}} \; j \in \MC{D} \hspace{1pt} \colon \hspace{1pt} x'_{i,j} = 0 \enskip \text{for all $i \in U$} \; \right\}, \enskip J^{(\leftrightarrow)} := \left\{ \; j \in \MC{D} \hspace{1pt} \colon \min\left( \max_{i \in I}x'_{i,j} \hspace{1pt} , \hspace{1pt} \max_{i \in U}x'_{i,j} \right) > 0 \; \right\},$$
and $J^{(U)} := \MC{D} \setminus \left( J^{(I)} \cup J^{(\leftrightarrow)} \right)$ denote the clients in the three categories, respectively.
%
%



\paragraph*{The Rounding Process.}

%
%
Let $F'$ and $D'$ be the set of facilities and the set of clients remained to be processed, and $x^*$ be the rounded assignment function our algorithm will maintain during the process.
Initially, $F' := I$ and $D' := J^{(I)} \cup J^{(\leftrightarrow)}$, and $x^* := 0$. 
%


%
%


%
%

\smallskip

The rounding algorithm consists of two stages.
In the first stage, it proceeds in iterations to form clusters.
In each of such iterations, the algorithm checks if 
$\sum_{i \in F'} x'_{i,j} \ge 1/2$ holds for all $j \in D'$.
%
If not, the algorithm makes it so by repeatedly removing small clients from $D' \cap J^{(\leftrightarrow)}$ and redistributing their demand to facilities in $U$ to form a set of \emph{outlier clients}.
%
We use $H$ to denote the set of outlier clients created in this step and $H' \subseteq H$ to denote those that are created but not yet processed by the rounding algorithm. Initially $H := \emptyset$ and $H' := \emptyset$.

\smallskip

When $\sum_{i \in F'} x'_{i,j} \ge 1/2$ holds for all $j \in D'$,
a cluster centered at a client is formed and possibly rounded, depending on whether or not the client forming the cluster is outlier, and the corresponding parts of the cluster are removed from $F'$, $D'$, and $H'$, respectively.
%
%
The cluster forming process repeats until $D' \cup H'$ becomes empty.

\smallskip

In the second stage, the algorithm handles the rounding decisions for the remaining clusters centered at the outlier clients, using a carefully-designed assignment LP and an unconditional rounding scheme, to form an integral multiplicity function.
%
%
%
In the following we describe the three components of our rounding algorithm in details.
%



\paragraph*{Creating the Outlier Clients.}

When 
$\sum_{i \in F'}x'_{i,j} < 1/2$ for some $j \in J^{(\leftrightarrow)} \cap D'$, the algorithm discards $j$
and relocates part of the remaining demand to facilities in $N_{(U,x')}(j)$ to form outlier clients in a way as if the demand were originated from these facilities, as illustrated in Figure~\ref{fig-outlier-client-construction}.
%
%
%

\smallskip

\begin{wrapfigure}[12]{r}{.29\textwidth}
\vspace{0pt}
\centering
\begin{minipage}{.26\textwidth}
\fbox{
\hspace{-8pt}
\includegraphics[scale=1]{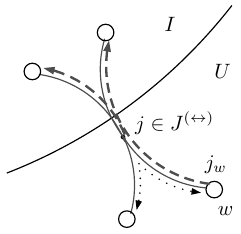}
\hspace{-4pt}
\vspace{-6pt}
}
\vspace{-16pt}
\caption{
Construction of the outlier clients.}
\label{fig-outlier-client-construction}
\end{minipage}
\end{wrapfigure}
%
Let $r'_j := \min\{ \; \sum_{i \in F'} x'_{i,j} \; , \; \sum_{i \in U} x'_{i,j} \; \}$ be the amount of residue demand of $\vphantom{\text{\Large T}} j$ to be redistributed.
For each $w \in N_{(U,x')}(j)$, we create a client $j_w$ at the facility $w$ with demand 
%
%
$$d_{j_w} := \frac{r'_j}{\sum_{i \in U}x'_{i,j}}\cdot x'_{w,j} \enskip\enskip \text{and set} \enskip  x'_{i,j_w} := \frac{d_{j_w}}{\sum_{k \in F'}x'_{k,j}} \cdot x'_{i,j}$$
for each $i \in N_{(F',x')}(j)$.
See also Figure~\ref{fig-outlier-client-construction} for an illustration.
%
%
%
We add $j_w$ to both $H$ and $H'$ and set $\vphantom{\text{\Large T}} \alpha_{j_w} := \alpha_j + c_{w,j}$.
%
%
%

\smallskip

After the outlier client $j_w$ is created for each $w \in N_{(U,x')}(j)$, the algorithm removes $j$ from $D'$ and set $x'_{i,j}$ to be zero for all $i \in F'$.
%
%
%
%
Note that, by construction, 
%
we have 
$$\sum_{w \in N_{(U,x')}(j)} d_{j_w} = r'_j \quad \text{and} \enskip \sum_{k \in N_{(F',x')}(j)} x'_{k,j_w} = d_{j_w}. \hspace{4.5cm}$$
Hence, the designated residue demand of $j$ is fully redistributed and
%
%
%
each $j_w$ is fully-assigned.
%
%
%
%
%
%
%
%
%

%

%


\paragraph*{Forming Clusters and Rounding.}

When $\sum_{i \in F'} x'_{i,j} \ge 1/2$ holds for all $j \in D'$,
the algorithm selects a client $j \in D' \cup H'$ that minimizes $\alpha_j$ to form a cluster.
%
%
%
%
%
Depending on the set to which $j$ belongs, the algorithm proceeds differently.
%
%
\begin{itemize}
\item
If $j \in H'$, then a cluster centered at $j$ with satellite facilities in $N_{(F',x')}(j)$ is formed.
We use $B(j) := N_{(F',x')}(j)$ to denote the set of satellite facilities at this moment.
%
%
%
%
The algorithm then removes $j$ from $H'$ and $B(j)$ from $F'$.
The rounding problem for this cluster is handled later in the second phase of the algorithm.
%


\item
If $j \in D'$, the algorithm further selects a facility $i \in N_{(F',x')}(j)$ with the maximum $u_i$.
The algorithm relocates the assignments and facility values 
from the facilities in $N_{(F',x')}(j)$ to $i$ as follows.
Let 
$$\vspace{-4pt} \delta_i := \left( \; \frac{1}{2} - y'_i \; \right) \cdot \frac{1}{\sum_{k \in N_{(F',x')}(j) \setminus \{i\}}y'_k}$$
be the factor 
to relocate for each facility in $\vphantom{\text{\LARGE T}} N_{(F',x')}(j) \setminus \{i\}$.
%
%
%
%
%
%
%
%

\smallskip

For each facility $\ell \in N_{(F', x')}(j) \setminus \{i\}$, the algorithm 
scales down $y'_\ell$ by
$(1-\delta_i)$. 
For each $k \in N_{(D'\cup H',x')}(\ell)$, the algorithm further scales down $x'_{\ell,k}$ by $(1-\delta_i)$ and increases 
$x^*_{i,k}$ by the same amount $x'_{\ell,k}$ has decreased in this step.


%
The algorithm increases $x^*_{i,k}$ by $x'_{i,k}$ for each $k \in D'$ and
%
%
%
then removes $i$ from $F'$.
%


%
%
%
%
\end{itemize}


\noindent
When the above updates 
are done, for each client $k \in J^{(I)} \cap D'$ with $\sum_{i \in F'} x'_{i,k} < 1/2$, the algorithm removes $k$ from $D'$ and sets $x'_{i,k}$ to be zero for all $i \in F'$.
%
Then the algorithm proceeds to the next iteration until $D' \cup H'$ becomes empty.

\paragraph*{Rounding the Outlier Clusters.}

When the cluster-forming process ends and 
$D' \cup H'$ becomes empty, the algorithm proceeds to 
processes the rounding decisions left for the outlier clusters, i.e., clusters centered at 
outlier clients in $H$.

\smallskip

\begin{wrapfigure}[11]{r}{.42\textwidth}
%
\centering
\vspace{-22pt}
\begin{minipage}{.38\textwidth}
\fbox{
\hspace{-6pt}
\begin{minipage}{\textwidth}
\vspace{-12pt}
\begin{align}
\label{LP-outliers} \tag*{LP-(O)}
\end{align}

\vspace{-16pt}
\noindent\hfill\rule{.98\textwidth}{.3pt}
\vspace{-6pt}
\begin{align}
& \text{min} \enskip \sum_{i \in G} \; y_i \; + \sum_{i \in G, \; j \in U} c_{i,j} \cdot x_{i,j} & &  \notag
\\
& \sum_{i\in G} \; x_{i,j} \; = \; d_j, & & \hspace{-32pt} \forall j \in U,  \notag \\[2pt]
& \sum_{j \in U} \; x_{i,j} \; \le \; u_i \cdot y_i, & & \hspace{-32pt} \forall i \in G,  \notag \\
& \; y_i \; \le \; 1, & & \hspace{-32pt} \forall i \in G,  \notag \\[1pt]
& \; x_{i,j}, \; y_i \; \ge 0, & & \hspace{-66pt} \forall i \in G, \; j \in U.  \notag
\end{align}
\vspace{-16pt}
\end{minipage}
\hspace{-2pt}
}
\end{minipage}
%
\end{wrapfigure}
%
%
%
We formulate the rounding problem as 
another instance of CFL-CFC with facility set $G := \bigcup_{j \in H} B(j)$ and client set $U$ as follows.

\smallskip

Each 
$w \in U$ is 
associated with a demand $d_w$, 
defined as 
\vspace{-4pt}
$$d_w := \sum_{\substack{k \in H, \\[1pt] k \text{ located at } w}} \; \sum_{i \in B(k), \; \ell \in \MC{D}\cup H} \; t'_\ell \cdot x'_{i,\ell},$$
%
where 
the scaling factor $t'_\ell$ is defined as
$$t'_\ell \; := \; \frac{1}{\sum_{ k \in I} x^*_{k,\ell} + \sum_{k \in G} x'_{k,\ell} } \cdot \left( \vphantom{\text{\huge T}} \right. \; 1 - \sum_{i \in U} x'_{i,\ell} - r'_\ell \; \left. \vphantom{\text{\huge T}} \right)$$
if $\ell \in \MC{D}$ and $\sum_{k \in I} x^*_{k,\ell} + \sum_{k \in G} x'_{k,\ell} > 0$,
and $t'_\ell := 1$ otherwise.
%
%

\medskip

Intuitively, in the above definition, for each $w \in U$, we consider all the outlier clusters centered at clients located at $w$ and collect the demand of these clusters to be the demand of $w$, and $t'_\ell$ is the factor for which the assignments made for the client $\vphantom{\text{\Large T}} \ell$ 
in these clusters should be scaled up in order to amend the amount lost in the previous stage.
%
%

\smallskip

We formulate the above instance 
with a carefully designed assignment LP, denoted~\ref{LP-outliers}.
%
The algorithm solves~\ref{LP-outliers}
%
for a \emph{basic optimal solution} $(\bm{x}'', \bm{y}'')$.
%
%

%


\paragraph*{Final Output.}

Define the integral multiplicity function
$$y^*_i \; := \; \begin{cases}
\; 1, & \text{if $i \in I \setminus F'$,} \\
\; \left\lceil y''_i \right\rceil, & \text{if $i \in G$, } \\
\; 0, & \text{otherwise}.
\end{cases}$$
%
%
The algorithm solves the min-cost assignment problem on $\MC{D}$ and $\MC{F}^* := \{ \hspace{1pt} \vphantom{\text{\Large T}} i \in \MC{F} \; \colon \; y^*_i = 1 \hspace{1pt} \}$ for an optimal integral assignment $\bm{x}^\dagger$,
%
and outputs $(\bm{x}^\dagger, \bm{y}^*)$ as the approximation solution.
%
%
%
%
%
%
%
%

\medskip

We have the following theorem. 
We provide the proof 
in Section~\ref{sec-proof-approx-cfl-cfc}, page~\pageref{sec-proof-approx-cfl-cfc}.

\begin{theorem} \label{theorem-cfl-ufc-approx}
Let $\Psi$ be an instance of CFL-CFC and $(\bm{x}', \bm{y}')$ be optimal for~\ref{LP-natural-CFL} on $\Psi$.
%
The rounding algorithm computes in polynomial time a feasible integral solution $(x^\dagger, y^*)$ for $\Psi$ with
$\psi(x^\dagger, y^*) \le 4\cdot \psi(x',y').$
\end{theorem}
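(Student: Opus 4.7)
The plan is to prove Theorem~\ref{theorem-cfl-ufc-approx} by establishing feasibility of $(x^\dagger, y^*)$ and then bounding the facility cost $|\MC{F}^*|$ and the service cost of $x^\dagger$ separately against $\psi(x', y')$, using the optimal dual $(\bm{\alpha}, \bm{\beta}, \bm{\Gamma}, \bm{\eta})$ of \ref{LP-natural-CFL} as the book-keeping potential.

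I would first verify feasibility by exhibiting a fractional assignment from each $j \in \MC{D}$ to $\MC{F}^* = \{ i : y^*_i = 1 \}$ that respects capacity. A client in $J^{(U)}$ keeps its original $x'$-assignment, all of whose facilities lie in $U \subseteq \MC{F}^*$. A client absorbed into a phase-one cluster has its demand recorded in $x^*$ routed to the cluster center $i$ chosen with maximum $u_i$ among $N_{(F',x')}(j)$, and the rerouting factor $\delta_i$ caps the collected multiplicity at $1/2$, so the capacity at $i$ suffices. The residue of a discarded $\leftrightarrow$-client is routed through its host $w \in U$ to $G$ via the \ref*{fig-LP-CFL-outliers-main-content} solution, and the rounding $y^*_i := \lceil y''_i \rceil$ preserves both capacity and the demand inequality $\sum_{i \in G} x''_{i,w} \ge d_w$.

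For the facility cost I would bound the three disjoint contributions to $\MC{F}^*$. First, $|U| \le 2 \sum_{i \in U} y'_i$ because $y'_i \ge 1/2$ on $U$. Second, for each phase-one cluster the construction $\delta_i = (1/2 - y'_i)/\sum_{\ell \ne i} y'_\ell$ yields
\[
y'_i \; + \; \delta_i \cdot \sum_{\ell \in N_{(F',x')}(j) \setminus \{i\}} y'_\ell \; = \; \tfrac{1}{2},
\]
so the number of phase-one centers is at most $2 \sum_{i \in I \setminus F'} y'_i$. Third, for the outlier rounding I would invoke the basic-feasible-solution lemma for the LP in Figure~\ref{fig-LP-CFL-outliers-main-content} in the style of~\cite{DBLP:conf/soda/CheungGW14,10.5555/3039686.3039860}: fractional facilities in $(\bm{x}'', \bm{y}'')$ can be matched injectively to active (tight-demand) clients in $U$, giving $\sum_{i \in G} \lceil y''_i \rceil \le \sum_{i \in G} y''_i + |U|$. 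The $+|U|$ surcharge is absorbed by slack in the $|U| \le 2 \sum_{i \in U} y'_i$ bound, and the careful set-up of the demands $d_w$ via the scaling factor $t'_\ell$ should give $\sum_{i \in G} y''_i \le 2 \sum_{i \in G} y'_i$. In total, the integer facility cost is at most $2 \sum_i y'_i$.

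For the service cost I would run the standard primal-dual three-hop argument. Complementary slackness provides $\alpha_j \ge c_{i,j}$ whenever $x'_{i,j} > 0$, so $\alpha_j$ is a valid assignment radius. When a cluster forms around $j$ with chosen center $i$, any client $k$ routed through $\ell \in N_{(F',x')}(j) \setminus \{i\}$ pays $c_{i,k} \le c_{\ell,k} + c_{j,\ell} + c_{j,i} \le 3 \alpha_j \le 3 \alpha_k$, using that $j$ minimizes $\alpha$ in $D' \cup H'$. For an outlier client $j_w$, the definition $\alpha_{j_w} := \alpha_j + c_{w,j}$ pre-charges the host-hop, so the same three-hop bound holds inside the second-phase clusters. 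Weighting by residue demands and using the dual identity $\sum_j \alpha_j - \sum_i \eta_i = \psi(x', y')$ together with $\sum_i \eta_i \le \sum_i y'_i$ (since $\eta_i$ is dual to $y_i \le 1$) yields a service cost of at most $2\, \psi(x', y')$. Combined with the facility bound, this gives $\psi(x^\dagger, y^*) \le 4\, \psi(x', y')$. The main obstacle I expect is the outlier facility-cost step: the basic-feasible-solution argument for \ref*{fig-LP-CFL-outliers-main-content} and the analysis of the scaling factor $t'_\ell$ must be executed precisely enough that the $+|U|$ rounding-up loss is absorbed by the existing headroom on $U$, and one must check that the demands $d_w$ do not double-count multiplicity already written into $x^*$ during phase one. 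The phase-one bound, the three-hop service argument, and the feasibility check are comparatively routine.
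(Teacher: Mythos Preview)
Your overall shape---feasibility via an explicit fractional assignment, the BFS matching lemma $|L|\le |U|$ for LP-(O), and a triangle/primal-dual rerouting bound---matches the paper. But the final accounting you propose does not close, and the gap is not cosmetic.

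\medskip

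\textbf{The $2+2$ split fails.} Your claim that the integral facility cost is at most $2\sum_i y'_i$ is wrong. Opening $U$ already consumes the full budget $|U|\le 2\sum_{i\in U}y'_i$; there is no slack left to ``absorb'' the $+|U|$ surcharge from the BFS lemma. With the phase-one bound $|F^*_{D'}|\le 2\sum_{i\in I\setminus G}y'_i+2\sum_{i\in G}(y'_i-y'^{(\text{II})}_i)$ and the outlier bound $\sum_{i\in G}\lceil y''_i\rceil\le 2\sum_{i\in G}y'^{(\text{II})}_i+|U|$, the facility cost is only $\le 4\sum_{i\in U}y'_i+2\sum_{i\in I}y'_i$, which is already $4\sum_i y'_i$ on instances dominated by $U$. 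On the service side, your ``$\sum_i\eta_i\le\sum_i y'_i$'' is unjustified: complementary slackness gives $\eta_i>0\Rightarrow y'_i=1$, but $\eta_i$ itself can be arbitrarily large on such $i$, so the global dual identity $\sum_j\alpha_j=\psi(x',y')+\sum_i\eta_i$ does not turn a $3\alpha_j$ three-hop bound into $2\psi$.

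\medskip

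\textbf{The missing idea.} The paper does not separate facility and service costs into $2\psi+2\psi$. Instead, the rerouting analysis is done so that the $\alpha$-terms appear only with weight $(1-\sum_{i\in U}x'_{i,j})$, i.e.\ only the $I$-portion of each client's demand is charged via $\alpha_j$. The key step (Lemma~\ref{lemma-bound-primal-dual-for-I}) then uses complementary slackness \emph{restricted to $i\in I$}: since $0<y'_i<1$ on $I$, one has $\eta_i=0$ and constraint~\ref{cons-D-2} tight, yielding
\[
\sum_{j\in\MC{D}}\Big(1-\sum_{i\in U}x'_{i,j}\Big)\alpha_j \;\le\; \sum_{i\in I}y'_i+\sum_{i\in I,\,j\in\MC{D}}c_{i,j}x'_{i,j}.
\]
This localization to $I$ is exactly what frees up the $U$-budget to pay both for opening $U$ and for the $+|U|$ rounding loss on $G$, and it is what your plan is missing. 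To make your argument work you would need (i) a per-client rerouting bound of the form $2c_{k,j}+2\alpha_j$ rather than a flat $3\alpha_j$ (see Lemma~\ref{lemma-cost-cluster-d-prime}), and (ii) to track that the $\alpha_j$ coefficient accumulated over the whole process is at most $2(1-\sum_{i\in U}x'_{i,j})$, so that Lemma~\ref{lemma-bound-primal-dual-for-I} applies.
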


%

%

%

%

%


\section{Proof of Theorem~\ref{thm-MFN-relaxed-separation-oracle}.}
\label{sec-proof-approx-cfl}

%


%
It suffices to prove the following two statements.

\begin{enumerate}
	\item
		The rounding algorithm is well-defined and terminates in polynomial time.

	\item
		Provided that $\mathbf{MFN}_\Psi(\bm{x},\bm{y}',\bm{g})$ is feasible, 
		the feasible region of the min-cost assignment problem on $\MC{D}$ and $F^*$ is non-empty, and $\psi(x^*, y^*) \le \max\left\{ \; \frac{3}{2\alpha} \; , \; \frac{7-4\alpha}{(1-\alpha)^2} \; \right\} \times \psi(x,y).$
\end{enumerate}

%

%
%
We prove the first statement in Section~\ref{subsec-proof-cfl-approx-algo-well-defined}.
To prove the second statement, we define an assignment $\bm{x}'''$ such that $(\bm{x}''', \bm{y}^*)$ is feasible for the min-cost assignment problem on $\MC{D}$ and $F^*$ with the claimed approximation guarantee.

\smallskip

%
This is done as follows.
In Section~\ref{subsec-proof-approx-cfl-p1} and Section~\ref{subsec-proof-approx-cfl-p2}, we define the assignment functions $\bm{x}'$ and $\bm{x}''$ for the assignments made to $U$ and $F^* \cap I$ separately and derive corresponding properties.
%
In Section~\ref{subsec-proof-approx-cfl-feasibility}, 
we define the overall assignment $\bm{x}'''$ and show that $(\bm{x}''', \bm{y}^*)$
forms a feasible solution for the input instance $\Psi$.
%
%
%
%
%
We bound the cost incurred by $(\bm{x}'', \bm{y}^*|_{F^* \cap I})$ in Section~\ref{subsec-cfl-cost-second-stage} and prove 
in Section~\ref{subsec-proof-approx-cfl-approx-guarantee}
that $(\bm{x}''', \bm{y}^*)$ satisfies the claimed approximation guarantee.
%
%
This completes the proof 
since $\bm{x}^*$ is the optimal min-cost assignment between $\MC{D}$ and $\MC{F}^*$.
%


%
\paragraph{Notations to use in the proof.}

To keep the notation precise, for each $i \in F^* \cap I$ that is selected in the second stage, we refer to the particular iteration for which facility $i$ is selected and removed from $I'$ as the $i^{th}$-iteration.
%
%
%
Let $\vphantom{\text{\Large T}} \Psi^{(i)} = \left( \; I'^{(i)}, D'^{(i)}, \bm{r}'^{(i)} \; \right)$ denote the parameter tuple the algorithm maintains in the beginning of the $i^{th}$-iteration.
%
%
We use $\left(\overline{\bm{x}}^{(i)}, \overline{\bm{y}}^{(i)} \right)$ to denote the optimal solution computed for~\ref{LP_ITR} on $\Psi^{(i)}$.

\medskip

We use $\Psi^{(0)} = \left( \; I'^{(0)}, D'^{(0)}, \bm{r}'^{(0)} \; \right)$ to denote the initial tuple 
the algorithm constructs in the beginning of the second stage.
We refer to $\left(\overline{\bm{x}}^{(0)}, \overline{\bm{y}}^{(0)}\right)$ the solution to be defined in Section~\ref{subsec-proof-cfl-approx-algo-well-defined} for~\ref{LP_ITR} on the initial parameter tuple $\Psi^{(0)}$.

%
%
%
%


%

%


\subsection{The feasibility of the algorithm}
\label{subsec-proof-cfl-approx-algo-well-defined}

Consider the first stage of the algorithm.
%
%
Since $\bm{y} \le \bm{y}'$, we have $\mathbf{MFN}_\Psi(\bm{x},\bm{y},\bm{g}) \subseteq \mathbf{MFN}_\Psi(\bm{x},\bm{y}',\bm{g})$.
Hence, should the algorithm outputs a separating hyperplane in the first stage, it must separate $(\bm{x},\bm{y})$ from $\mathbf{MFN}_\Psi(\bm{x},\bm{y},\bm{g})$ as well.

\smallskip

In the following, we assume that $\mathbf{MFN}_\Psi(\bm{x},\bm{y}',\bm{g})$ is feasible and prove that the iterative rounding process is well-defined and runs in polynomial time.
%
%
%
%
Since $\bm{f}$ is a basic solution, the number of paths with nonzero flow in $\bm{f}$ is polynomial in $\MC{F}$ and $\MC{D}$.
Hence, $\bm{r}'^{(0)}$ can be computed in polynomial time.
%
%
%
%
The following lemma, which is proved by verifying the corresponding constraints and the fact that $0 < \alpha \le 1/3$, shows that the feasible region of~\ref{LP_ITR} on $\Psi^{(0)}$ is nonempty.

\begin{lemma} \label{lemma-cfl-approx-iterative-rounding-initial-tuple-feasibility}
The solution $\left(\overline{\bm{x}}^{(0)}, \overline{\bm{y}}^{(0)}\right)$ defined by
\begin{align*}
\overline{x}^{(0)}_{i,j} := \sum_{p \in P(i,j)} f_p \quad \text{for all $i \in I'^{(0)}, j \in D'^{(0)}$} \quad\enskip \text{and} \quad \overline{y}^{(0)}_i := \frac{1-\alpha}{2\alpha} \cdot y_i \quad \text{for all $i \in I'^{(0)}$}
\end{align*}
is feasible for~\ref{LP_ITR} on the initial parameter tuple $\Psi^{(0)}$.
\end{lemma}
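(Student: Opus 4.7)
The plan is to verify, one by one, the four families of constraints (M-1)--(M-4) of~\ref{LP-natural-CFL-modified} on the initial tuple $\Psi^{(0)}$ for the candidate pair $(\bm{x}^{\dagger(0)}, \bm{y}^{\dagger(0)})$. The central observation, which I would record at the outset, is that every $i \in I$ lies outside $U$, so by the construction of $\bm{y}''$ and $\bm{g}$ we have $y''_i = y'_i$ and $g_{i,j}=0$ for every $j \in \MC{D}$. Consequently, in $\mathbf{MFN}_\Psi(\bm{x}',\bm{y}'',\bm{g})$ the effective capacity $u^{(g)}_i$ reduces to $u_i$ and the residue demand $r^{(g)}_j$ coincides with $r_j$ on the $I$-side of the flow. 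This lets me transfer flow feasibility on $\mathbf{MFN}_\Psi(\bm{x}',\bm{y}'',\bm{g})$ into bounds on the $x^{\dagger(0)}_{i,j}$'s that are free of any $g$-terms.

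I would next dispatch the two easy constraints. Constraint (M-1) is immediate by unpacking definitions, since $\sum_{i\in F'^{(0)}} x^{\dagger(0)}_{i,j}=\sum_{i\in I}\sum_{p\in P(i,j)} f_p = r'^{(0)}_j$, while nonnegativity of $\bm{x}^{\dagger(0)}$ follows from $f_p \ge 0$. For (M-4), the defining inequality $y'_i < \alpha$ for $i \in I$ together with the scaling factor $(1-\alpha)/(2\alpha)$ yields $y^{\dagger(0)}_i < (1-\alpha)/2$, and nonnegativity is inherited from $y'_i \ge 0$.

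The two remaining constraints come straight from the MFN relaxation. For (M-3), I would invoke~(\ref{LP_MFN_5}) on the feasible flow $\bm{f}$: for each $i \in I$ and $j \in D'^{(0)}$, $x^{\dagger(0)}_{i,j}=\sum_{p\in P(i,j)} f_p \le r^{(g)}_j \cdot y''_i = r_j \cdot y'_i$; since $y'_i = \frac{2\alpha}{1-\alpha}\cdot y^{\dagger(0)}_i$, this matches exactly the bound $\frac{2\alpha}{1-\alpha}\cdot r_j \cdot y^{\dagger(0)}_i$ required by (M-3). For (M-2), I would invoke~(\ref{LP_MFN_4}): $\sum_{j\in \MC{D}} x^{\dagger(0)}_{i,j} \le u^{(g)}_i\cdot y''_i = u_i \cdot y'_i$, and compare this against $u_i \cdot y^{\dagger(0)}_i = u_i \cdot \frac{1-\alpha}{2\alpha}\cdot y'_i$; the inequality reduces to $\frac{1-\alpha}{2\alpha}\ge 1$, which holds precisely when $\alpha\le 1/3$.

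The hypothesis $\alpha\le 1/3$ is therefore used exactly once, in (M-2), to absorb the residual load on facilities in $I$ into the scaled-up multiplicity. Beyond recording the correspondence between $(u^{(g)},r^{(g)},\bm{y}'')$ and $(u,r,\bm{y}')$ on $I$, no deeper argument appears; the only real ``obstacle'' is noticing that the scaling factor $(1-\alpha)/(2\alpha)$ in the definition of $\bm{y}^{\dagger(0)}$ is calibrated precisely so that (M-3) holds with equality and (M-2) becomes feasible under $\alpha\le 1/3$.
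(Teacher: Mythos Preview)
Your proposal is correct and follows essentially the same approach as the paper: both verify the constraints (M-1)--(M-4) one at a time, using the MFN flow constraints~(\ref{LP_MFN_4}) and~(\ref{LP_MFN_5}) together with the observation that $g_{i,j}=0$ and $y''_i=y'_i$ for $i\in I$, and invoking $\alpha\le 1/3$ exactly once for (M-2). The only cosmetic differences are that the paper explicitly records the intermediate step $\sum_{j\in D'^{(0)}}\le \sum_{j\in\MC{D}}$ in (M-2), and that your remark ``$r^{(g)}_j$ coincides with $r_j$ on the $I$-side'' is in fact a global identity (since $g$ is supported on $U^{(>)}$); neither affects the argument.
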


\begin{proof}
We prove this lemma by verifying the constraints of~\ref{LP_ITR} separately.
\begin{itemize}
	\item
		Constraint~(\ref{LP_ITR_1}) follows directly from the definition of $\overline{\bm{x}}^{(0)}$ and $\bm{r}'^{(0)}$, since for each $j \in D'^{(0)}$,
		\begin{align*}
		\sum_{i \in I'^{(0)}} \overline{x}^{(0)}_{i,j} \; = \; \sum_{i \in I, \; p \in P(i,j)} f_p \; = \; r'^{(0)}_j.
		\end{align*}
		
	\item
		For Constraint~(\ref{LP_ITR_2}), consider any $i \in I'^{(0)}$.
		Since $\bm{f}$ is feasible for $\mathbf{MFN}_\Psi(\bm{x},\bm{y}',\bm{g})$, we have
		\begin{align*}
		\sum_{j \in D'^{(0)}} \overline{x}^{(0)}_{i,j} \; = \; \sum_{j \in D'^{(0)}, \; p \in P(i,j)} f_p \; \le \; \sum_{j \in \MC{D}, \; p \in P(i,j)} f_p \; \le \; u^{(g)}_i \cdot y_i.
		\end{align*}
		%
		Since $i \in I'^{(0)} = I$, we have $\sum_{j \in \MC{D}} g_{i,j} = 0$ by the way $\bm{g}$ is defined and hence $u^{(g)}_i = u_i$.
		%
		Since $(1-\alpha)/(2\alpha)$ is strictly decreasing for $\alpha > 0$ and
		since $\vphantom{\text{\Large T}} 0 < \alpha \le 1/3$, it follows that $$\overline{y}^{(0)}_i \; := \; \frac{1-\alpha}{2\alpha} \cdot y_i \; \ge \; \frac{ 1- 1/3}{2/3} \cdot y_i \; = \; y_i.$$
		Combining the above, we obtain
		$\sum_{j \in D'^{(0)}} \overline{x}^{(0)}_{i,j} \; \le \; u_i \cdot \overline{y}^{(0)}_i$.
		
	\item
		For Constraint~(\ref{LP_ITR_3}), consider any $i \in I'^{(0)}$ and any $j \in D'^{(0)}$.
		%
		Since $\bm{f}$ is feasible for $\mathbf{MFN}_\Psi(\bm{x},\bm{y}',\bm{g})$, apply the definitions of $\overline{y}^{(0)}_i$ and $r'^{(0)}_j$ and we have
		\begin{align*}
		\overline{x}^{(0)}_{i,j} \; = \; \sum_{p \in P(i,j)} f_p \; \le \; r^{(g)}_j \cdot y_i \; = \; \frac{2\alpha}{1-\alpha} \cdot r'^{(0)}_j \cdot \overline{y}^{(0)}_i.
		\end{align*}
		
	\item
		For Constraint~(\ref{LP_ITR_4}), consider any $i \in I'^{(0)}$.
		%
		Since $y_i < \alpha$ by definition, it follows that
		$$\overline{y}^{(0)}_i \; = \; \frac{1-\alpha}{2\alpha} \cdot y_i \; \le \; \frac{1-\alpha}{2}.$$
\end{itemize}
This proves the lemma.
\end{proof}

Consider the $i^{th}$-iteration for any $i \in F^* \cap I$.
Suppose that~\ref{LP_ITR} on $\Psi^{(i)}$ is feasible, and recall that $\left(\overline{\bm{x}}^{(i)}, \overline{\bm{y}}^{(i)} \right)$ is the optimal solution computed for~\ref{LP_ITR} on $\Psi^{(i)}$.
%
The following lemma shows that the scale-down operation is well-defined.

\begin{lemma} \label{lemma-cfl-approx-iterative-rounding-well-defined}
For any $i \in F^* \cap I$ with $y'^{(i)}_i < (1-\alpha)/2$, the following holds.
\begin{itemize}
	\item
		For any $j \in D'^{(i)}$, 
		$$0 \; \le \; \delta^{(i)}_j \; \le \; \sum_{\ell \in I'^{(i)} \setminus \{i\}} \overline{x}^{(i)}_{\ell,j}.$$
		
	\item
		For any $k \in I'^{(i)} \setminus \{i\}$, we have $0 \le \sigma^{(i)}_k \; \le \; 1$.
\end{itemize}
\end{lemma}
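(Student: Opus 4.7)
The plan is to establish the two bullets in order, since the second will reduce almost mechanically to the first. The crucial inputs in both cases are the feasibility constraints \ref{cons-PM-1}, \ref{cons-PM-3}, and~\ref{cons-PM-4} of~\ref{LP-natural-CFL-modified} satisfied by $(\bm{x}^{\dagger(i)}, \bm{y}^{\dagger(i)})$ on $\Psi^{(i)}$, together with the invariant $r'_j > \alpha \cdot r_j$ that holds for every $j \in D'^{(i)}$: any client whose residue demand drops to $\alpha r_j$ or below is removed from $D'$ at the end of the preceding iteration, so anything still present in $D'^{(i)}$ must strictly exceed that threshold.

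For the first bullet, $\delta^{(i)}_j \ge 0$ is immediate from Constraint~\ref{cons-PM-4}, which gives $y^{\dagger(i)}_i \le (1-\alpha)/2$ and hence $(1-\alpha)/(2\, y^{\dagger(i)}_i) \ge 1$. For the upper bound, I would use Constraint~\ref{cons-PM-1} to rewrite $\sum_{\ell \in F'^{(i)} \setminus \{i\}} x^{\dagger(i)}_{\ell,j}$ as $r'_j - x^{\dagger(i)}_{i,j}$ so that, after cancelling $x^{\dagger(i)}_{i,j}$ on both sides, the claim reduces to
$$\frac{1-\alpha}{2\, y^{\dagger(i)}_i} \cdot x^{\dagger(i)}_{i,j} \; \le \; r'_j.$$
Plugging in $x^{\dagger(i)}_{i,j} \le \frac{2\alpha}{1-\alpha} \cdot r_j \cdot y^{\dagger(i)}_i$ from Constraint~\ref{cons-PM-3} collapses the left-hand side to at most $\alpha \cdot r_j$, and the invariant $r'_j > \alpha r_j$ closes the chain. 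A useful byproduct of the same argument is that $\sum_{\ell \in F'^{(i)} \setminus \{i\}} x^{\dagger(i)}_{\ell,j} > 0$ whenever $j \in D'^{(i)}$; otherwise $r'_j$ would collapse to $x^{\dagger(i)}_{i,j}$, which Constraint~\ref{cons-PM-3} together with the strict hypothesis $y^{\dagger(i)}_i < (1-\alpha)/2$ forces below $\alpha r_j$, contradicting the invariant. Consequently $\sigma^{(i)}_{k,j}$ is unambiguously defined to begin with.

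For the second bullet, the first bullet says precisely that the scaling factor $\delta^{(i)}_j / \sum_{\ell \in F'^{(i)} \setminus \{i\}} x^{\dagger(i)}_{\ell,j}$ lies in $[0,1]$, so the definition of $\sigma^{(i)}_{k,j}$ immediately yields the per-client bound $\sigma^{(i)}_{k,j} \le x^{\dagger(i)}_{k,j}$ for every $k \in F'^{(i)} \setminus \{i\}$ and $j \in D'^{(i)}$. Summing over $j \in D'^{(i)}$ and dividing by $\sum_{\ell \in D'^{(i)}} x^{\dagger(i)}_{k,\ell}$ yields $\sigma^{(i)}_k \le 1$; the denominator is positive exactly when $k$ still carries assignments to the remaining clients, which is the only case in which $\sigma^{(i)}_k$ is nontrivially defined (in the degenerate case, both the numerator and denominator vanish and one may set $\sigma^{(i)}_k := 0$ by convention).

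The main, and modest, obstacle is lining up the right constraint for the right facility at the right moment and confirming that the invariant $r'_j > \alpha r_j$ is preserved up to the start of the current iteration. Once this bookkeeping is in hand, Constraints~\ref{cons-PM-1}, \ref{cons-PM-3}, and~\ref{cons-PM-4} cooperate so that both bullets follow from a single short algebraic chase.
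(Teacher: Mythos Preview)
Your proposal is correct and follows essentially the same route as the paper: both arguments combine Constraint~\ref{cons-PM-3} with the invariant $r'^{(i)}_j > \alpha r_j$ to bound $\delta^{(i)}_j$ from above by $r'^{(i)}_j - x^{\dagger(i)}_{i,j}$, then invoke Constraint~\ref{cons-PM-1} to rewrite this as $\sum_{\ell \in F'^{(i)} \setminus \{i\}} x^{\dagger(i)}_{\ell,j}$, and derive the second bullet from the first via the per-client bound $\sigma^{(i)}_{k,j} \le x^{\dagger(i)}_{k,j}$. Your additional remark that the denominator $\sum_{\ell \in F'^{(i)} \setminus \{i\}} x^{\dagger(i)}_{\ell,j}$ is strictly positive (so that $\sigma^{(i)}_{k,j}$ is well-defined) is a useful sanity check the paper leaves implicit.
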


\begin{proof}
Consider any $j \in D'^{(i)}$.
Since $\left( \overline{\bm{x}}^{(i)}, \overline{\bm{y}}^{(i)} \right)$ is feasible for~\ref{LP_ITR} on $\Psi^{(i)}$, 
we have $\overline{y}^{(i)}_i \le (1-\alpha)/2$, which implies that $(1-\alpha) / (2 \cdot \overline{y}^{(i)}_i) \ge 1$.
Hence $$\delta^{(i)}_j \; := \; \left( \; \frac{1-\alpha}{2} \cdot \frac{1}{\overline{y}^{(i)}_i} -1 \; \right) \cdot \overline{x}^{(i)}_{i,j} \; \ge \; 0.$$
On the other hand, applying constraint~(\ref{LP_ITR_3}) and then constraint~(\ref{LP_ITR_1}), we have
$$\delta^{(i)}_j \; := \; \frac{1-\alpha}{2} \cdot \frac{1}{\overline{y}^{(i)}_i} \cdot \overline{x}^{(i)}_{i,j} - \overline{x}^{(i)}_{i,j} \; \le \; \alpha\cdot r^{(g)}_j - \overline{x}^{(i)}_{i,j} \; \le \; r'^{(i)}_j - \overline{x}^{(i)}_{i,j} \; = \; \sum_{\ell \in I'^{(i)} \setminus \{i\}} \overline{x}^{(i)}_{\ell,j},$$
where in the second last inequality we apply the fact that $r'^{(i)}_j \ge \alpha \cdot r^{(g)}_j$ holds for all $j \in D'^{(i)}$ by the design of the algorithm.
%
This proves the first part of this lemma.

\smallskip

For the second part, consider any $k \in I'^{(i)} \setminus \{i\}$.
By the conclusion of the first part, for any $j \in D'^{(i)}$, we have 
$$\sigma^{(i)}_{k,j} \; := \; \frac{\delta^{(i)}_j}{\sum_{ \ell \in I'^{(i)} \setminus \{i\}} \overline{x}^{(i)}_{\ell,j}} \cdot \overline{x}^{(i)}_{k,j} \; \le \; \overline{x}^{(i)}_{k,j},
\quad \text{and} \quad
\sigma^{(i)}_k \; := \; \frac{1}{\sum_{\ell \in D'^{(i)}} \overline{x}^{(i)}_{k,\ell}} \cdot \sum_{j \in D'^{(i)}} \sigma^{(i)}_{k,j} \; \le \; 1.$$
\end{proof}


%
Let $\vphantom{\text{\Large T}} \Psi''^{(i)} = \left( \; I''^{(i)}, D''^{(i)}, \bm{r}''^{(i)} \; \right)$ denote the updated tuple the algorithm maintains at the end of the $i^{th}$-iteration.
The following lemma, which is proved by verifying the corresponding constraints, shows that the feasible region of~\ref{LP_ITR} on $\Psi''^{(i)}$ remains nonempty.
This shows that the iterative rounding process is well-defined.

\begin{lemma} \label{lemma-cfl-approx-iterative-rounding-successive-tuple-feasibility}
For any $i \in F^* \cap I$, 
%
the solution $\left(\overline{\bm{x}}''^{(i)}, \overline{\bm{y}}''^{(i)}\right)$
defined by
\begin{align*}
\overline{x}''^{(i)}_{i,j} := & \left(1-\sigma^{(i)}_k\right) \cdot \overline{x}^{(i)}_{i,j}, \enskip\quad \text{for any $k \in I''^{(i)}, j \in D''^{(i)}$}, \\[6pt]
\overline{y}''^{(i)}_k := & \left(1-\sigma^{(i)}_k\right) \cdot \overline{y}^{(i)}_k \enskip\quad \text{for any $k \in I''^{(i)}$},
\end{align*}
is feasible for~\ref{LP_ITR} on the updated tuple $\Psi''^{(i)}$.
\end{lemma}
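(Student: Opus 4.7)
The plan is to verify the four constraints \ref{cons-PM-1}--\ref{cons-PM-4} of~\ref{LP-natural-CFL-modified} one by one for the candidate solution $(\bm{x'}^{\dagger(i)}, \bm{y'}^{\dagger(i)})$ on the updated tuple $\Psi''^{(i)}$, treating separately the two cases of the algorithm's branching: (a) $y^{\dagger(i)}_i = (1-\alpha)/2$, in which $\sigma^{(i)}_k = 0$ for all $k \in F'^{(i)} \setminus \{i\}$ so the update just removes $i$ and trims residue demands, and (b) $y^{\dagger(i)}_k < (1-\alpha)/2$ for all $k \in F'^{(i)}$, in which case each $\sigma^{(i)}_k \in [0,1]$ by Lemma~\ref{lemma-cfl-approx-iterative-rounding-well-defined}, so the multiplicative factor $1-\sigma^{(i)}_k$ is nonnegative and the scaled values $\bm{x'}^{\dagger(i)}, \bm{y'}^{\dagger(i)}$ are well-defined and nonnegative.

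For Constraint~\ref{cons-PM-1}, I would observe that the algorithm \emph{defines} $r''^{(i)}_j$ exactly as $\sum_{k \in F''^{(i)}} (1-\sigma^{(i)}_k)\,x^{\dagger(i)}_{k,j}$ on $D''^{(i)} \subseteq D'^{(i)}$, so the equality $\sum_{k \in F''^{(i)}} x'^{\dagger(i)}_{k,j} = r''^{(i)}_j$ is immediate by construction; note that $D''^{(i)}$ is obtained from $D'^{(i)}$ by discarding indices that no longer need to be served, so this constraint needs to hold only on the retained clients. For Constraint~\ref{cons-PM-2}, I would bound $\sum_{j \in D''^{(i)}} x'^{\dagger(i)}_{k,j} \le \sum_{j \in D'^{(i)}} x'^{\dagger(i)}_{k,j} = (1-\sigma^{(i)}_k)\sum_{j \in D'^{(i)}} x^{\dagger(i)}_{k,j} \le (1-\sigma^{(i)}_k)\, u_k\, y^{\dagger(i)}_k = u_k\, y'^{\dagger(i)}_k$, using the feasibility of $(\bm{x}^{\dagger(i)}, \bm{y}^{\dagger(i)})$ for~\ref{LP-natural-CFL-modified} on $\Psi^{(i)}$. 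Constraints~\ref{cons-PM-3} and~\ref{cons-PM-4} are even simpler: both are preserved by the uniform scaling since $x'^{\dagger(i)}_{k,j} = (1-\sigma^{(i)}_k)\, x^{\dagger(i)}_{k,j} \le (1-\sigma^{(i)}_k) \cdot \tfrac{2\alpha}{1-\alpha} r_j y^{\dagger(i)}_k = \tfrac{2\alpha}{1-\alpha} r_j y'^{\dagger(i)}_k$, and $y'^{\dagger(i)}_k \le y^{\dagger(i)}_k \le (1-\alpha)/2$.

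There is no real obstacle here; the only place that requires caution is confirming that $1-\sigma^{(i)}_k \ge 0$, which is delegated to Lemma~\ref{lemma-cfl-approx-iterative-rounding-well-defined}, and confirming that the definition of $r''^{(i)}_j$ in the algorithm is precisely what Constraint~\ref{cons-PM-1} requires under the scaling. All remaining constraints carry over by the homogeneity of the scaling in both $x$ and $y$, so the bulk of the write-up is just a constraint-by-constraint check.
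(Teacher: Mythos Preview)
Your proposal is correct and follows essentially the same constraint-by-constraint verification as the paper's own proof, invoking Lemma~\ref{lemma-cfl-approx-iterative-rounding-well-defined} for nonnegativity and using the feasibility of $(\bm{x}^{\dagger(i)},\bm{y}^{\dagger(i)})$ on $\Psi^{(i)}$ together with the homogeneity of the scaling to carry each constraint through. Your explicit case split between $y^{\dagger(i)}_i=(1-\alpha)/2$ and $y^{\dagger(i)}_i<(1-\alpha)/2$ is a minor refinement (the paper treats both uniformly since $\sigma^{(i)}_k=0$ in the first case makes nonnegativity trivial), but it changes nothing of substance.
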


\begin{proof}
We prove this lemma by verifying the constraints in~\ref{LP_ITR}.
First, by Lemma~\ref{lemma-cfl-approx-iterative-rounding-well-defined}, 
we know that
$0 \le \sigma^{(i)}_k \le 1$ holds for all $k \in I''^{(i)}$, and $\left(\overline{\bm{x}}''^{(i)}, \overline{\bm{y}}''^{(i)}\right)$ is nonnegative.
%
%
\begin{itemize}
	\item
		Consider the constraint~(\ref{LP_ITR_1}) for any $j \in D''^{(i)}$ and observe that it is obtained by subtracting $\sum_{k \in I'^{(i)}} \sigma^{(i)}_k \cdot \overline{x}^{(i)}_{k,j}$ from both sides of the same constraint w.r.t. the original tuple $\Psi'^{(i)}$.
		Hence the constraint remains valid.
		
	\item
		Consider the constraint~(\ref{LP_ITR_2}) for any $k \in I''^{(i)}$.
		Observe that it is obtained by first multiplying the factor $( 1 - \sigma^{(i)}_k )$ to both sides of the same constraint w.r.t. $\Psi'^{(i)}$, followed by subtracting from the L.H.S. $\sum_{j \in D'^{(i)} \setminus D''^{(i)}} ( 1-\sigma^{(i)}_k ) \cdot \overline{x}^{(i)}_{i,j}$.
		Hence the resulting inequality remains valid.
		
		\smallskip

		For the constraint~(\ref{LP_ITR_3}) for any $k \in I''^{(i)}$ and $j \in D''^{(i)}$, observe that it is obtained by multiplying $( 1 - \sigma^{(i)}_k )$ to both sides of the same constraint w.r.t. $\Psi'^{(i)}$ and remains valid.

	\item
		For the constraint~(\ref{LP_ITR_4}), consider any $k \in I''^{(i)}$ and observe that
		$$\overline{y}''^{(i)}_k \; = \; \left( 1-\sigma^{(i)}_k \right) \cdot \overline{y}^{(i)}_k \; \le \; \overline{y}^{(i)}_k \; \le \; \frac{1-\alpha}{2}.$$
\end{itemize}
This proves the lemma.
\end{proof}

\medskip

Lemma~\ref{lemma-cfl-approx-iterative-rounding-initial-tuple-feasibility}, Lemma~\ref{lemma-cfl-approx-iterative-rounding-well-defined}, and Lemma~\ref{lemma-cfl-approx-iterative-rounding-successive-tuple-feasibility} prove the feasibility of the iterative rounding process.
%
Since exactly one facility is removed from $I'$ in each iteration, the process repeats for at most $|I|$ iterations before $I'$ becomes empty.
Moreover, since the tuple remains feasible during all the iterations, it follows from the constraint~(\ref{LP_ITR_3}) of~\ref{LP_ITR} that, when $I'$ becomes empty, $D'$ must also be empty.
%
Hence, the iterative process terminates in polynomial time.
%

%

%

%


\subsection{The assignment function $\bm{x}'$ to $U$}
\label{subsec-proof-approx-cfl-p1}

%
%
For notational brevity, let $D^{(\phi)}$ denote the set of clients that are fully-assigned by $\bm{h}$ and unreachable from any partially-assigned clients.
%
%
%
%
For any $i \in U$, $j \in \MC{D}$, define the assignment $x'_{i,j}$ as
$$x'_{i,j} \; := \; \begin{cases}
\; g_{i,j}, & \text{if $i \in U^{(\phi)}$,} \\[4pt]
\; h_{i,j}, & \text{if $i \in U^{(>)} \setminus U^{(\phi)}, \; j \in D^{(\phi)}$,} \\[4pt]
\; \frac{1}{1-\alpha} \cdot \sum_{p \in P(i,j)} \; f_p, & \text{if $i \in U^{(>)} \setminus U^{(\phi)}$, $j \in \MC{D} \setminus D^{(\phi)}$}, \\[4pt]
\; \frac{1}{1-\alpha} \cdot \sum_{p \in P(i,j)} \; f_p, & \text{if $i \in U^{(\le)}$}. \\[2pt]
\end{cases}
$$
%


\noindent
Intuitively, in the above definition,
we keep the partial assignments made in $\bm{g}$ and the flow sent in $\bm{f}$ to $U$, except for those originated from $D^{(\phi)}$ and those sent to sink in $U^{(>)} \setminus U^{(\phi)}$ via $D^{(\phi)}$.
%
%
%

\smallskip

The following proposition 
shows that, in the flow $\bm{f}$, for any $i \in \MC{F}$ and $j \in D^{(\phi)}$, the arc $(j^s,i)$ carries only flow originated from the commodity $j$.
%
%


\begin{prop} \label{prop-MFN-flow-client-without-partial-assignment}
For any $j \in D^{(\phi)}$, $i \in \MC{F}$, and any $p \in P$ with $(j^s,i) \in p$, we have $f_p > 0$ implies that $p \in \bigcup_{i' \in \MC{F}} P(i',j)$, that is, $f_p > 0$ only when $p$ is a path for commodity $j$.
%
\end{prop}

\begin{proof}
By the definition of $\bm{g}$, we have $g_{k,j} = 0$ for all $k \in \MC{F}$ since $j \in D^{(\ell)}$.
%
Hence, $p$ must start from $j^s$ and is thereby a path for commodity $j$.
%
\end{proof}

\smallskip

The following lemma
shows that the facilities in $U \setminus U^{(\phi)}$ are reasonably sparsely-loaded in that, extra amount of assignments can be accommodated when the assignments from $\MC{D} \setminus D^{(\phi)}$ are scaled up by a factor of $1/(1-\alpha)$.
%

\begin{lemma} \label{lemma-approx-cfl-U-sparsely-loaded}
For any $i \in U \setminus U^{(\phi)}$, we have
%
$\sum_{j \in \MC{D}} \; x'_{i,j} \; \le \; u_i.$
%
\end{lemma}

\begin{proof}
%
%
%
Consider the category to which $i$ belongs.
We have the following two cases.
%
\begin{itemize}
	\item
		$i \in U^{(>)} \setminus U^{(\phi)}$.
		%
		%
		
		Since $i$ is not tightly-occupied, it follows by the optimality of $\bm{h}$ that, 
		\begin{equation}
		h_{i,j} \; = \; \frac{1}{ 1-\alpha} \cdot x_{i,j} \quad \text{holds for any $j \in \MC{D} \setminus D^{(\phi)}$,}
		\label{ieq-approx-cfl-U-sparsely-loaded-1}
		\end{equation}
		since any $j \in \MC{D} \setminus D^{(\phi)}$ is either partially-assigned, or fully-assigned but reachable from a partially-assigned client.
		By the construction of $\bm{x}'$, we have
		\begin{align*}
		\sum_{j \in \MC{D}} \; x'_{i,j} \;\; 
		= & \;\; \sum_{j \in D^{(\phi)}} h_{i,j} \; + \; \frac{1}{1-\alpha} \cdot \sum_{j \in \MC{D} \setminus D^{(\phi)}, \; p \in P(i,j)} f_p.
		\end{align*}
		By Proposition~\ref{prop-MFN-flow-client-without-partial-assignment}, flow originated from $\MC{D} \setminus D^{(\phi)}$ must sink via some client in $\MC{D} \setminus D^{(\phi)}$.
		Hence, by the feasibility of $\bm{f}$ for $\mathbf{MFN}_\Psi(\bm{x},\overline{\bm{y}},\bm{g})$, we have
		\begin{align*}
		\sum_{j \in \MC{D} \setminus D^{(\phi)}, \; p \in P(i,j)} f_p \;\; 
		\le \;\; \sum_{j \in \MC{D}, \; p \in P, \; (j^s,i) \in p} f_p \;\; 
		\le \;\; \sum_{j \in \MC{D} \setminus D^{(\phi)}} x_{i,j}.
		\end{align*}
		%
		%
		Combining the above with~(\ref{ieq-approx-cfl-U-sparsely-loaded-1}), we obtain
		\begin{align*}
		\sum_{j \in \MC{D}} \; x'_{i,j} \;\; 
		\le & \;\; \sum_{j \in D^{(\phi)}} h_{i,j} \; + \; \frac{1}{1-\alpha} \sum_{j \in \MC{D} \setminus D^{(\phi)}} x_{i,j}  \;\;
		= \;\; \sum_{j \in \MC{D}} \; h_{i,j} \;\; \le \;\; u_i,
		\end{align*}
		where the last inequality follows from the feasibility of $\bm{h}$.
		%
		%
				
	\item
		$i \in U^{(\le)}$.

		By the construction of $\bm{x}'$ and the feasibility of $\bm{f}$ for $\mathbf{MFN}_\Psi(\bm{x},\overline{\bm{y}},\bm{g})$, we have
		\begin{align*}
		\sum_{j \in \MC{D}} \; x'_{i,j} \;\; = \; \frac{1}{1-\alpha} \cdot \sum_{j \in \MC{D}, \; p \in P(i,j)} f_p \;\; \le \;\; \frac{1}{1-\alpha} \cdot \sum_{j \in \MC{D}} \; x_{i,j} \;\; \le \;\; u_i,
		\end{align*}
		where the last inequality follows from the definition of $U^{(\le)}$.
\end{itemize}
This completes the proof of this lemma.
\end{proof}

%

%

%
%


\subsection{The rounded assignment $\bm{x}''$ to $F^* \cap I$}
\label{subsec-proof-approx-cfl-p2}

Recall that, for any $i \in F^* \cap I$,
we use $\vphantom{\text{\Large T}} \Psi^{(i)} = \left( \; I'^{(i)}, D'^{(i)}, \bm{r}'^{(i)} \; \right)$ to denote the parameter tuple the algorithm maintains in the beginning of the $i^{th}$-iteration and
%
%
$\left(\overline{\bm{x}}^{(i)}, \overline{\bm{y}}^{(i)} \right)$ to denote the optimal solution computed for~\ref{LP_ITR} on $\Psi^{(i)}$.

\smallskip



%
For any $i \in F^* \cap I$ and $k \in I'^{(i)}$, define the reassignment function from $k$ to $i$ as $$z''_{i,k} \; := \; \sum_{j \in D'^{(i)}} \sigma^{(i)}_{k,j} \; = \; \sigma^{(i)}_k \cdot \sum_{\ell \in D'^{(i)}} \overline{x}^{(i)}_{k,\ell}.$$
%
Intuitively, $z''_{i,k}$ is the amount of assignment that gets reassigned from $k$ to $i$ in the $i^{th}$-iteration of the rounding process.
%
%
%
For any $i \in F^* \cap I$, $j \in \MC{D}$, define the rounded assignment function
%
$$x''_{i,j} \;\; := \;\; \frac{1}{1-\alpha} \cdot \sum_{k \in I'^{(i)}} \; \frac{\overline{x}^{(i)}_{k,j}}{\sum_{\ell \in D'^{(i)}} \overline{x}^{(i)}_{k,\ell}} \cdot z''_{i,k}.$$
We note that, in the definition we scale up the reassignment by $1/(1-\alpha)$. 
Furthermore, by definition $x''_{i,j} > 0$ only when $j \in D'^{(i)}$.
%
%
%
%
The following lemma shows that, for each $j \in \MC{D}$, the demand originally assigned to $I$ is reassigned by $\bm{x}''$ to facilities in $F^* \cap I$.
%

\begin{lemma} \label{lemma-approx-cfl-I-mostly-assigned}
For any $j \in \MC{D}$, we have 
$$\sum_{i \in F^* \cap I} x''_{i,j} \;\; \ge \;\; \frac{1}{1-\alpha} \cdot \left( \; r'^{(0)}_j \; - \; \alpha \cdot r^{(g)}_j \; \right).$$
\end{lemma}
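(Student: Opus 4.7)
\medskip

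\noindent \textbf{Proof proposal.} The plan is to unwind the definition of $\bm{x}''$ and show that, for each client $j$, the sum $\sum_{i \in I \setminus F'} x''_{i,j}$ telescopes exactly to the total drop in the residue demand of $j$ incurred over the iterations where $j$ remains in the active client set $D'^{(i)}$. Then we bound this total drop from below by combining the termination criterion $r'_j \le \alpha \cdot r_j$ with the feasibility of the LP at each iteration.

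\medskip

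\noindent First, observe that whenever $j \notin D'^{(i)}$ for the iteration processing $i$, the LP~\ref{LP-natural-CFL-modified} on $\Psi^{(i)}$ has no variable associated with $j$, so $x^{\dagger(i)}_{k,j} = 0$ and hence $x''_{i,j} = 0$. Consequently only iterations $i \in I \setminus F'$ with $j \in D'^{(i)}$ contribute to the sum. If $j \notin D'^{(0)}$, then $r'^{(0)}_j \le \alpha \cdot r_j$ by the definition of $D'^{(0)}$, so the claimed inequality reduces to $0 \ge r'^{(0)}_j - \alpha \cdot r_j$, which holds trivially.

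\medskip

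\noindent For the substantial case $j \in D'^{(0)}$, let $i_1, i_2, \dots, i_m$ denote the sequence of facilities processed in the iterations during which $j \in D'^{(i_k)}$. I would show the key per-iteration identity
\begin{equation*}
r'^{(i_k)}_j - r''^{(i_k)}_j \; = \; x''_{i_k,j},
\end{equation*}
by extending $\sigma^{(i_k)}_{i_k} := 1$ as in the definition of $\bm{x}''$, applying Constraint~\ref{cons-PM-1} of~\ref{LP-natural-CFL-modified} (which gives $\sum_{\ell \in F'^{(i_k)}} x^{\dagger(i_k)}_{\ell,j} = r'^{(i_k)}_j$), and rearranging against the update rule $r''^{(i_k)}_j = \sum_{\ell \in F''^{(i_k)}} (1-\sigma^{(i_k)}_\ell) \cdot x^{\dagger(i_k)}_{\ell,j}$. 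Summing this identity over $k=1,\dots,m$ telescopes to
\begin{equation*}
\sum_{k=1}^{m} x''_{i_k, j} \; = \; r'^{(0)}_j \; - \; r''^{(i_m)}_j.
\end{equation*}

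\medskip

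\noindent Finally, the termination mechanism of the algorithm forces $r''^{(i_m)}_j \le \alpha \cdot r_j$: either $j$ is removed from $D'$ at the end of the iteration processing $i_m$ because its residue demand dropped to or below $\alpha \cdot r_j$, or $j$ would otherwise remain in $D'^{(i_{m+1})}$ contradicting the maximality of $m$. Plugging this inequality into the telescoped identity yields $\sum_{i \in I \setminus F'} x''_{i,j} \ge r'^{(0)}_j - \alpha \cdot r_j$, as desired. The main point requiring care is the per-iteration identity, since one must track precisely which facility is being removed and handle the boundary term $\sigma^{(i_k)}_{i_k} := 1$ when separating the summation over $F'^{(i_k)}$ into the contribution of $i_k$ itself and that of $F''^{(i_k)}$; the rest of the argument is a straightforward telescoping combined with the removal rule.
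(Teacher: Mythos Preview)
Your proposal is correct and follows essentially the same approach as the paper's proof: both argue that the per-iteration decrease in the residue demand $r'^{(i)}_j$ coincides with the contribution $x''_{i,j}$, and then invoke the removal criterion $r'_j \le \alpha\cdot r_j$ to bound the final residue. Your version spells out the telescoping identity $r'^{(i_k)}_j - r''^{(i_k)}_j = x''_{i_k,j}$ explicitly via Constraint~\ref{cons-PM-1} and the update rule, whereas the paper compresses this into a one-line assertion, but the content is the same.
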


\begin{proof}
By the way $\bm{z}''$ is defined, all the assignment reduced in the process due to the scale-down operation is reassigned by $\bm{z}''$ to facilities in $F^* \cap I$.
For any $j \in \MC{D}$, we have
\begin{align}
(1-\alpha) \cdot \sum_{i \in F^* \cap I} x''_{i,j} \;\; 
= & \;\; \sum_{i \in F^* \cap I} \; \sum_{k \in I'^{(i)}} \; \frac{ \overline{x}^{(i)}_{k,j} }{ \sum_{\ell \in D'^{(i)}} \overline{x}^{(i)}_{k,\ell} } \cdot z''_{i,k} \;\; 
= \;\; \sum_{i \in F^* \cap I} \; \sum_{k \in I'^{(i)}} \; \sigma^{(i)}_k \cdot \overline{x}^{(i)}_{k,j}, \notag
\end{align}
where in the last equality we apply the definition of $\bm{z}$.

\smallskip
%

Since the algorithm removes a client $j$ from $D'$ only when its residue assignment to $I'$ drops below $\alpha \cdot r^{(g)}_j$, and since the algorithm repeats until $D'$ becomes empty,
it follows that 
$$\sum_{i \in F^* \cap I} \sum_{k \in I'^{(i)}} \sigma^{(i)}_k \cdot \overline{x}^{(i)}_{k,j} \; \ge \; r'^{(0)}_j - \alpha \cdot r^{(g)}_j.$$
This proves the lemma.
\end{proof}

The following lemma shows that, the facilities in $F^* \cap I$ 
can accommodate the rounded assignments given by $\bm{x}''$.

\begin{lemma} \label{lemma-approx-cfl-I-sparsely-loaded}
For any $i \in F^* \cap I$, we have $\sum_{j \in \MC{D}} \; x''_{i,j} \; \le \; u_i$.
%
%
\end{lemma}

\begin{proof}
By the definition of $\bm{x}''$ and $\bm{z}''$,
we have
\begin{align}
(1-\alpha) \cdot \sum_{j \in \MC{D}} x''_{i,j} \;\; 
= & \;\; (1-\alpha) \cdot \sum_{j \in D'^{(i)}} x''_{i,j} \;\; 
= \;\; \sum_{j \in D'^{(i)}} \; \sum_{k \in I'^{(i)}} \; \frac{ \overline{x}^{(i)}_{k,j} }{ \sum_{\ell \in D'^{(i)}} \overline{x}^{(i)}_{k,\ell} } \cdot z''_{i,k} \;\;  \notag \\[8pt]
= & \;\; \sum_{j \in D'^{(i)}} \; \sum_{k \in I'^{(i)}} \; \frac{ \overline{x}^{(i)}_{k,j} }{ \sum_{\ell \in D'^{(i)}} \overline{x}^{(i)}_{k,\ell} } \cdot \sum_{\ell \in D'^{(i)}} \sigma^{(i)}_{k,\ell}. \;\;  \notag
\end{align}
Apply the definition of $\sigma^{(i)}_{k,\ell}$ for $i$ and any $k \in I'^{(i)} \setminus \{i\}$ and we have
\begin{align}
(1-\alpha) \cdot \sum_{j \in \MC{D}} x''_{i,j} \;\; 
= & \;\; \sum_{j \in D'^{(i)}} \; \frac{ \overline{x}^{(i)}_{i,j} }{ \sum_{\ell \in D'^{(i)}} \overline{x}^{(i)}_{i,\ell} } \cdot \sum_{\ell \in D'^{(i)}} \overline{x}^{(i)}_{i,\ell} \;\;  \notag \\[8pt]
& \qquad + \; \sum_{j \in D'^{(i)}} \; \sum_{k \in I'^{(i)} \setminus \{i\} } \; \frac{ \overline{x}^{(i)}_{k,j} }{ \sum_{\ell \in D'^{(i)}} \overline{x}^{(i)}_{k,\ell} } \cdot \sum_{\ell \in D'^{(i)}} \; \frac{ \overline{x}^{(i)}_{k,\ell} }{ \sum_{ m\in I'^{(i)} \setminus \{i\}} \overline{x}^{(i)}_{m,\ell} } \cdot \delta^{(i)}_\ell \;\;  \notag \\[8pt]
= & \;\; \sum_{j \in D'^{(i)}} \overline{x}^{(i)}_{i,j} \; + \; \sum_{k \in F'^{(i)} \setminus \{i\}} \; \sum_{\ell \in D'^{(i)}} \; \frac{ \overline{x}^{(i)}_{k,\ell} }{ \sum_{ m\in I'^{(i)} \setminus \{i\}} \overline{x}^{(i)}_{m,\ell} } \cdot \delta^{(i)}_\ell \;\; \notag \\[10pt]
= & \;\; \sum_{j \in D'^{(i)}} \overline{x}^{(i)}_{i,j} \; + \; \sum_{\ell \in D'^{(i)}} \delta^{(i)}_{\ell}. \;\; \notag
\end{align}
Further applying the definition of $\delta^{(i)}_\ell$ for any $\ell \in D'^{(i)}$, we obtain
\begin{align}
\sum_{j \in \MC{D}} x''_{i,j} \;\; 
= & \;
\; \frac{\alpha}{1-\alpha} \cdot \frac{1}{\overline{y}^{(i)}_i} \cdot \sum_{j \in D'^{(i)}} \overline{x}^{(i)}_{i,j} \;\; 
\le \;\; u_i,  \notag
\end{align}
where the last inequality follows from~(\ref{LP_ITR_2}) and the fact that $\alpha \le 1/2$.
\end{proof}

%

%
%


\subsection{The overall assignment $\bm{x}'''$ and the feasibility}
\label{subsec-proof-approx-cfl-feasibility}

%
%
\noindent
For any $i \in \MC{F}^*$ and $j \in \MC{D}$, define the overall assignment $x'''_{i,j}$ as
$$x'''_{i,j} \;\; := \;\; 
\begin{cases}
\;\; x'_{i,j}, & \text{if $i \in U$,} \\[4pt]
\;\; x''_{i,j}, & \text{if $i \in F^* \cap I$,} \\[4pt]
\;\; 0, & \text{otherwise.} \\[2pt]
\end{cases}$$
%
%
%
%
The following lemma shows that the feasible region of the min-cost assignment problem on $\MC{D}$ and $\MC{F}^*$ is nonempty, and proves the feasibility of our rounding algorithm.
%

\begin{lemma} \label{lemma-approx-cfl-flow-feasibility}
$\left( \bm{x}''', \bm{y}^* \right)$ is feasible for $\Psi$, i.e., 
\begin{align*}
\forall j \in \MC{D} \colon \;\; \sum_{i \in F^*} \; x'''_{i,j} \; \ge \; 1,  
\quad \text{and} \quad
\forall i \in F^* \colon \;\; \sum_{j \in \MC{D}} \; x'''_{i,j} \; \le \; u_i.
\end{align*}
\end{lemma}

\begin{proof}
Depending on the category to which $j$ belongs,
consider the following two cases.
\begin{itemize}
	\item
		$j \in D^{(\phi)}$, i.e., 
		$j$ is fully-assigned by $\bm{h}$ to facilities in $U^{(>)}$ and unreachable from any partially-assigned client via augmenting paths.
		%
		%
		It follows that $h_{i,j} = 0$ for all $i \in U^{(\phi)}$, and
		$$\sum_{i \in F^*} x'''_{i,j} \;\;  \ge \;\; \sum_{i \in U^{(>)} \setminus U^{(\phi)}} x'_{i,j} \;\; = \;\; \sum_{i \in U^{(>)} \setminus U^{(\phi)}} h_{i,j} \; = \; 1.$$
		%
		%
		
		%
		
		
	\item
		$j \in \MC{D} \setminus D^{(\phi)}$, i.e., $j$ is either partially-assigned, or fully-assigned but reachable from partially-assigned clients via augmenting paths.
		%
		By the definition of $\bm{x}'''$, we have
		\begin{align*}
		\sum_{i \in F^*} x'''_{i,j} \;\; 
		= & \;\; \sum_{i \in U^{(\phi)}} x'_{i,j} \; + \; \sum_{i \in U \setminus U^{(\phi)}} x'_{i,j} \; + \; \sum_{i \in F^* \cap I} x''_{i,j}  \\[8pt]
		= & \;\; \sum_{i \in U^{(\phi)}} g_{i,j} \; + \; \frac{1}{1-\alpha} \cdot \sum_{i \in U \setminus U^{(\phi)}, \; p \in P(i,j)} f_p \; + \; \sum_{i \in F^* \cap I} x''_{i,j}.
		\end{align*}
		Applying Lemma~\ref{lemma-approx-cfl-I-mostly-assigned} and the definition of $r'^{(0)}_j$, the above becomes
		\begin{align*}
		\sum_{i \in F^*} x'''_{i,j} \;\; 
		\ge & \;\; \sum_{i \in U^{(\phi)}} g_{i,j} \; + \; \frac{1}{1-\alpha} \cdot \left( \; \sum_{i \in U \setminus U^{(\phi)}, \; p \in P(i,j)} f_p \; + \; \sum_{i \in I, \; p \in P(i,j)} f_p \; - \; \alpha \cdot r^{(g)}_j \; \right).
		\end{align*}
		Since $\sum_{j \in \MC{D}} g_{i,j} = u_i$ for all $i \in U^{(\phi)}$, it follows that $\sum_{i \in U^{(\phi)}, \; p \in P(i,j)} f_p = 0$ and 
		\begin{align*}
		\sum_{i \in U \setminus U^{(\phi)}, \; p \in P(i,j)} f_p \; + \; \sum_{i \in I, \; p \in P(i,j)} f_p \; - \; \alpha \cdot r^{(g)}_j \;\;
		= \sum_{i \in \MC{F}, \; p \in P(i,j)} f_p \; - \; \alpha \cdot r^{(g)}_j \; \ge \; (1-\alpha) \cdot r^{(g)}_j
		\end{align*}
		by the constraint~(\ref{LP_MFN_1}) in $\mathbf{MFN}_\Psi(\bm{x},\bm{y}',\bm{g})$.
		%
		Hence, we obtain
		\begin{align*}
		\sum_{i \in F^*} x'''_{i,j} \;\; 
		\ge \; \sum_{i \in U^{(\phi)}} g_{i,j} \; + \; r^{(g)}_j \; = \; 1.
		\end{align*}

\end{itemize}
This proves the first half of this lemma.
For the second half, by Lemma~\ref{lemma-approx-cfl-U-sparsely-loaded} and Lemma~\ref{lemma-approx-cfl-I-sparsely-loaded}, it remains to consider 
the case for which $i \in U^{(\phi)}$.
In this case, $i$ is fully-matched by $\bm{h}$, and we have $\sum_{j \in \MC{D}} x'''_{i,j} = \sum_{j \in \MC{D}} \; g_{i,j} = u_i.$
This proves the second half of this lemma.
\end{proof}

%

%

%

%


%
\begin{figure*}[h]
\centering
\fbox{
\hspace{-10pt}
\begin{minipage}{.8\textwidth}
\begin{subequations}
\begin{align}
\text{max} \;\; & \;\; \sum_{j \in D'} r'_j \cdot \lambda_j \; - \; \sum_{i \in F'} \; \frac{1}{2} \cdot (1-\alpha) \cdot \eta_i & & \label{LP_ITR_dual} \tag*{LP-(DM)} \\[6pt]
\text{s.t.} \;\; & \;\; \lambda_j \; \le \; \beta_i \; + \; \Gamma_{i,j} \; + \; c_{i,j}, & & \forall i \in F', j\in D', \label{LP_ITR_dual_1} \\[6pt]
& \;\; u_i \cdot \beta_i \; + \; \frac{2\alpha}{1-\alpha} \cdot \sum_{j \in D'} r_j \cdot \Gamma_{i,j} \; \le \; o_i \; + \; \eta_i, & & \forall i \in F', \label{LP_ITR_dual_2} \\[4pt]
& \;\; \lambda_j, \; \beta_i, \; \Gamma_{i,j}, \; \eta_i \; \ge \; 0, & & \forall i \in F', j \in D'. \notag 
\end{align}
\end{subequations}
\vspace{-8pt}
\end{minipage}
\enskip
}
\vspace{-4pt}
\end{figure*}


\subsection{The cost incurred by $F^*\cap I$}
\label{subsec-cfl-cost-second-stage}

Recall that $(\overline{\bm{x}}^{(0)}, \overline{\bm{y}}^{(0)})$ is the solution defined in Section~\ref{subsec-proof-cfl-approx-algo-well-defined} for the initial tuple $\Psi^{(0)}$.
In this section, we prove that 
$$\psi\left( \bm{x}'',\bm{y}^*|_{F^* \cap I} \right) \; \le \; \frac{3}{1-\alpha} \cdot \psi\left(\overline{\bm{x}}^{(0)},\overline{\bm{y}}^{(0)} \right).$$
%

\smallskip

%
For any $i \in F^* \cap I$, consider the dual LP of~\ref{LP_ITR} on $\Psi^{(i)}$, which is given below as~\ref{LP_ITR_dual}.
Let $(\bm{\lambda}^{(i)},\bm{\beta}^{(i)},\bm{\Gamma}^{(i)},\bm{\eta}^{(i)})$ be an optimal solution for~\ref{LP_ITR_dual}.
It follows that $\overline{x}^{(i)}_{i,j} > 0$ implies that $\lambda^{(i)}_j \ge c_{i,j}$.
The following lemma
establishes the equivalence between the cost incurred by a non-extremal facility and the amount of dual values it receives.
%

\begin{lemma} \label{lemma-cfl-approx-primal-dual-for-I}
For any $i \in F^* \cap I$ and any $k \in I'^{(i)}$ with $0 < \overline{y}^{(i)}_k < (1-\alpha)/2$, we have
$$\sum_{j \in D'^{(i)}} \lambda^{(i)}_j \cdot \overline{x}^{(i)}_{k,j} \; = \; o_k \cdot \overline{y}^{(i)}_k \; + \; \sum_{j \in D'^{(i)}} c_{k,j} \cdot \overline{x}^{(i)}_{k,j}. $$
\end{lemma}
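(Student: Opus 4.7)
The plan is to establish the identity by direct application of LP duality and complementary slackness between the primal optimum $(\bm{x}^{\dagger(i)}, \bm{y}^{\dagger(i)})$ for~\ref{LP-natural-CFL-modified} on $\Psi^{(i)}$ and the dual optimum $(\bm{\lambda}^{\dagger(i)}, \bm{\beta}^{\dagger(i)}, \bm{\Gamma}^{\dagger(i)}, \bm{\eta}^{\dagger(i)})$ for~\ref{LP-natural-modified-dual}. The object on the left-hand side is naturally read as $\beta \cdot (\text{capacity})+\Gamma \cdot (\text{per-unit bound}) + c \cdot x$ summed through the tight dual constraint (DM-1), and the identity amounts to showing that the $\beta$ and $\Gamma$ pieces collapse exactly to $o_k \cdot y^{\dagger(i)}_k$ via the tight dual constraint (DM-2).

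First I would record the four standard complementary slackness conditions that I will need: (a) whenever $x^{\dagger(i)}_{k,j} > 0$ the dual constraint~(DM-1) is tight, i.e.\ $\lambda^{\dagger(i)}_j = \beta^{\dagger(i)}_k + \Gamma^{\dagger(i)}_{k,j} + c_{k,j}$; (b) whenever $\beta^{\dagger(i)}_k > 0$ the primal constraint~(M-2) is tight at $k$, i.e.\ $\sum_{j \in D'^{(i)}} x^{\dagger(i)}_{k,j} = u_k \cdot y^{\dagger(i)}_k$; (c) whenever $\Gamma^{\dagger(i)}_{k,j} > 0$, constraint~(M-3) is tight, i.e.\ $x^{\dagger(i)}_{k,j} = \frac{2\alpha}{1-\alpha} \cdot r_j \cdot y^{\dagger(i)}_k$; and (d) whenever $\eta^{\dagger(i)}_k > 0$, constraint~(M-4) is tight, i.e.\ $y^{\dagger(i)}_k = (1-\alpha)/2$. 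The contrapositive of (d), together with the hypothesis $y^{\dagger(i)}_k < (1-\alpha)/2$, gives $\eta^{\dagger(i)}_k = 0$; and the hypothesis $y^{\dagger(i)}_k > 0$ forces~(DM-2) to be tight at $k$, yielding
$$u_k \cdot \beta^{\dagger(i)}_k \; + \; \frac{2\alpha}{1-\alpha} \cdot \sum_{j \in D'^{(i)}} r_j \cdot \Gamma^{\dagger(i)}_{k,j} \; = \; o_k.$$

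Next I would substitute~(a) into $\sum_{j \in D'^{(i)}} \lambda^{\dagger(i)}_j \cdot x^{\dagger(i)}_{k,j}$ (terms with $x^{\dagger(i)}_{k,j} = 0$ contribute nothing on either side), decomposing the sum into the three pieces $\beta^{\dagger(i)}_k \sum_j x^{\dagger(i)}_{k,j}$, $\sum_j \Gamma^{\dagger(i)}_{k,j} x^{\dagger(i)}_{k,j}$, and $\sum_j c_{k,j} x^{\dagger(i)}_{k,j}$. Using~(b), the first piece equals $u_k \cdot y^{\dagger(i)}_k \cdot \beta^{\dagger(i)}_k$ regardless of whether $\beta^{\dagger(i)}_k$ is zero or strictly positive; using~(c) case-by-case on each index $j$, the second piece equals $\frac{2\alpha}{1-\alpha} \cdot y^{\dagger(i)}_k \cdot \sum_j r_j \Gamma^{\dagger(i)}_{k,j}$. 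Factoring out $y^{\dagger(i)}_k$ from the first two pieces and applying the tight~(DM-2) equality derived above yields exactly $o_k \cdot y^{\dagger(i)}_k + \sum_{j \in D'^{(i)}} c_{k,j} \cdot x^{\dagger(i)}_{k,j}$, which is the claimed identity.

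There is no genuine obstacle here; the only thing to watch is the careful case analysis in converting the inequalities $\beta^{\dagger(i)}_k \cdot (u_k y^{\dagger(i)}_k - \sum_j x^{\dagger(i)}_{k,j}) = 0$ and $\Gamma^{\dagger(i)}_{k,j} \cdot (\frac{2\alpha}{1-\alpha} r_j y^{\dagger(i)}_k - x^{\dagger(i)}_{k,j}) = 0$ into per-term equalities, and the fact that both hypotheses $0 < y^{\dagger(i)}_k$ and $y^{\dagger(i)}_k < (1-\alpha)/2$ are essential: the former to activate the tightness of~(DM-2), the latter to kill the $\eta^{\dagger(i)}_k$ term on its right-hand side so that $u_k \beta^{\dagger(i)}_k + \frac{2\alpha}{1-\alpha}\sum_j r_j \Gamma^{\dagger(i)}_{k,j}$ telescopes cleanly into $o_k$.
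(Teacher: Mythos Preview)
Your proposal is correct and follows essentially the same argument as the paper: both proofs apply complementary slackness to tighten~(DM-1) on the support of $x^{\dagger(i)}_{k,\cdot}$, use the slackness conditions on $\beta^{\dagger(i)}_k$ and $\Gamma^{\dagger(i)}_{k,j}$ to convert the resulting $\beta$- and $\Gamma$-terms into multiples of $y^{\dagger(i)}_k$, and then invoke $0 < y^{\dagger(i)}_k < (1-\alpha)/2$ to tighten~(DM-2) with $\eta^{\dagger(i)}_k = 0$ and collapse those terms to $o_k \cdot y^{\dagger(i)}_k$. Your write-up is in fact slightly more explicit than the paper's in handling the trivial cases $\beta^{\dagger(i)}_k = 0$ and $\Gamma^{\dagger(i)}_{k,j} = 0$.
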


\begin{proof}
This lemma follows from the complementary slackness conditions between $(\overline{\bm{x}}^{(i)}, \overline{\bm{y}}^{(i)})$ and $(\bm{\lambda}^{(i)}, \bm{\beta}^{(i)}, \bm{\Gamma}^{(i)}, \bm{\eta}^{(i)})$.
Since $\overline{x}^{(i)}_{k,j} > 0$ implies that the corresponding dual inequality~(\ref{LP_ITR_dual_1}) must hold with equality,
we have
$$\sum_{j \in D'^{(i)}} \; \lambda^{(i)}_j \cdot \overline{x}^{(i)}_{k,j} \;\; = \;\; \beta^{(i)}_k \cdot \sum_{j\in D'^{(i)}} \overline{x}^{(i)}_{k,j} \; + \; \sum_{j \in D'^{(i)}} \Gamma^{(i)}_{k,j} \cdot \overline{x}^{(i)}_{k,j} \; + \; \sum_{j \in D'^{(i)}} c_{k,j} \cdot \overline{x}^{(i)}_{k,j}.$$
Similarly, $\beta^{(i)}_k > 0$ and $\Gamma^{(i)}_{k,j} > 0$ imply that the corresponding inequalities~(\ref{LP_ITR_2}) and~(\ref{LP_ITR_3}) hold with equality.
Hence the above equality becomes
$$\beta^{(i)}_k \cdot u_k \cdot \overline{y}^{(i)}_k \; + \; \frac{2\alpha}{1-\alpha} \sum_{j \in D'^{(i)}} r^{(g)}_j \cdot \overline{y}^{(i)}_k \cdot\Gamma^{(i)}_{k,j} \; + \; \sum_{j \in D'^{(i)}} c_{k,j} \cdot \overline{x}^{(i)}_{k,j}.$$
The assumption that $\overline{y}^{(i)}_k > 0$ and $\overline{y}^{(i)}_k < (1-\alpha)/2$ imply that the corresponding inequality~(\ref{LP_ITR_dual_2}) holds with equality and the dual variable $\eta^{(i)}_k$ must be zero.
The above equality becomes
$$\overline{y}^{(i)}_k \cdot \left( \; u_k \cdot \beta^{(i)}_k \; + \; \frac{2\alpha}{1-\alpha} \sum_{j \in D'^{(i)}} r^{(g)}_j \cdot \Gamma^{(i)}_{k,j} \; \right) \; + \; \sum_{j \in D'^{(i)}} c_{k,j} \cdot \overline{x}^{(i)}_{k,j} \; = \; o_k \cdot \overline{y}^{(i)}_k \; + \; \sum_{j\in D'^{(i)}} c_{k,j} \cdot \overline{x}^{(i)}_{k,j}, $$
and this lemma is proved.
\end{proof}

The following lemma bounds the facility cost plus the total rerouting cost incurred by each individual cluster $i \in F^* \cap I$.
%

%

\begin{lemma} \label{lemma-cfl-cost-aggregation-per-iteration}
For any $i \in F^* \cap I$, we have 
\begin{align*}
(1-\alpha) \cdot o_i \; + \; \sum_{k \in I'^{(i)}} c_{i,k} \cdot z''_{i,k} \;\; \le \;\; \sum_{k \in I'^{(i)}} \; \sigma^{(i)}_k \cdot \left( \; 3\cdot o_k \cdot \overline{y}^{(i)}_k \; + \; 2 \cdot \sum_{j \in D'^{(i)}} c_{k,j} \cdot \overline{x}^{(i)}_{k,j} \; \right).
\end{align*}
%
\end{lemma}

\begin{proof}
If $\overline{y}^{(i)}_i = (1-\alpha)/2$, then the statement of this lemma holds trivially.
In the following we assume the nontrivial case for which $0 < \overline{y}^{(i)}_i < (1-\alpha)/2$.

\smallskip

By the definition of $\bm{z}''$ and the triangle inequality, we have
\begin{align}
\sum_{k \in I'^{(i)}} c_{i,k} \cdot z''_{i,k} \;\; 
= \;\; \sum_{k \in I'^{(i)}} c_{i,k} \cdot \sum_{j \in D'^{(i)}} \sigma^{(i)}_{k,j} \;\;
\le \; \sum_{j \in D'^{(i)}, \; k \in I'^{(i)}} \left( \; c_{i,j} \; + \; c_{k,j} \; \right) \cdot \sigma^{(i)}_{k,j}.
\label{ieq-cfl-cost-aggretation-per-iteration-1}
\end{align}
By the definition of $\sigma^{(i)}_{k,j}$, we know that $\sigma^{(i)}_{k,j} > 0$ implies that $\overline{x}^{(i)}_{k,j} > 0$, which further implies that $c_{k,j} \le \lambda^{(i)}_j$.
Hence, Inequality~(\ref{ieq-cfl-cost-aggretation-per-iteration-1}) becomes
\begin{align}
\sum_{k \in I'^{(i)}} c_{i,k} \cdot z''_{i,k} \;\; 
& \le \; \sum_{j \in D'^{(i)}, \; k \in I'^{(i)}} \left( \; c_{i,j} \; + \; \lambda^{(j)}_j \; \right) \cdot \frac{\overline{x}^{(i)}_{k,j}}{\sum_{\ell \in I'^{(i)}\setminus \{i\}}\overline{x}^{(i)}_{\ell,j}} \cdot \delta^{(i)}_j \notag \\[8pt]
& = \; \sum_{j \in D'^{(i)}} \left( \; c_{i,j} \; + \; \lambda^{(j)}_j \; \right) \cdot \delta^{(i)}_j \;\;
\le \; \frac{1-\alpha}{2} \cdot \frac{1}{\overline{y}^{(i)}_i} \cdot \sum_{j \in D'^{(i)}} \left( \; c_{i,j} \; + \; \lambda^{(j)}_j \; \right) \cdot \overline{x}^{(i)}_{i,j},
\label{ieq-cfl-cost-aggretation-per-iteration-2}
\end{align}
where in the last inequality we apply the definition of $\delta^{(i)}_j$.

\smallskip

By Lemma~\ref{lemma-cfl-approx-primal-dual-for-I} and Inequality~(\ref{ieq-cfl-cost-aggretation-per-iteration-2}), we have
\begin{align}
(1-\alpha) \cdot o_i \; + \sum_{k \in I'^{(i)}} c_{i,k} \cdot z''_{i,k} \;  
& = \;\; \frac{1-\alpha}{2} \cdot \frac{1}{\overline{y}^{(i)}_i} \cdot \left( \; 3 \cdot o_i \cdot \overline{y}^{(i)}_i \; + 2\cdot \sum_{j \in D'^{(i)}} c_{i,j} \cdot \overline{x}^{(i)}_{i,j} \; \right)  \notag \\[6pt]
& = \;\; \frac{1-\alpha}{2} \cdot \frac{1}{\overline{y}^{(i)}_i} \cdot \theta^{(i)} (i) \cdot \sum_{j \in D'^{(i)}} \overline{x}^{(i)}_{i,j},
\label{ieq-cfl-cost-aggregation-per-iteration-4}
\end{align}
where $\theta^{(i)}$ refers to the $\theta$ function defined in the $i^{th}$-iteration.
%
%
%
%
By the definitions of $\sigma^{(i)}_k$ and $\sigma^{(i)}_{k,j}$ for any $k \in I'^{(i)} \setminus \{i\}$ and $j \in D'^{(i)}$, we have
\begin{align}
\sum_{\ell \in D'^{(i)}} \sigma^{(i)}_k \cdot \overline{x}^{(i)}_{k,\ell} \; = \; \sum_{\ell \in D'^{(i)}} \sigma^{(i)}_{k,\ell} \quad \text{and} \quad \sum_{\ell \in I' \setminus\{i\}} \sigma^{(i)}_{\ell,j} \; = \; \delta^{(i)}_j.
\notag
\end{align}
Hence, we have
\begin{align}
\frac{1-\alpha}{2} \frac{1}{y^{\dagger(i)}_i} \cdot \sum_{j \in D'^{(i)}} x^{\dagger(i)}_{i,j} \; 
& = \sum_{j \in D'^{(i)}} \left( \; \delta^{(i)}_j \; + \; \sigma^{(i)}_{i,j} \; \right) \;
= \sum_{j \in D'^{(i)}, \; k \in I'^{(i)}} \sigma^{(i)}_{k,j} \;\;
= \sum_{k \in I'^{(i)}, \; j \in D'^{(i)}} \sigma^{(i)}_k \cdot \overline{x}^{(i)}_{k,j}.
\label{ieq-cfl-cost-aggregation-per-iteration-5}
\end{align}
Moreover, by the design of the algorithm, we have $\theta^{(i)}(i) \; \le \; \theta^{(i)}(k)$ for any $k \in I'^{(i)}$.
Combining this property with Inequalities~(\ref{ieq-cfl-cost-aggregation-per-iteration-4}) and~(\ref{ieq-cfl-cost-aggregation-per-iteration-5}), we obtain
\begin{align}
(1-\alpha) \cdot o_i \; + \sum_{k \in I'^{(i)}} c_{i,k} \cdot z''_{i,k} \; 
& \le \;\; \sum_{k \in I'^{(i)}} \sigma^{(i)}_k \cdot \theta^{(i)}(k) \cdot \sum_{j \in D'^{(i)}} \overline{x}^{(i)}_{k,j}  \notag \\[6pt]
& = \;\; \sum_{k \in I'^{(i)}} \; \sigma^{(i)}_k \cdot \left( \;  3\cdot o_k \cdot \overline{y}^{(i)}_k \; + \; 2\cdot \sum_{j \in D'^{(i)}} c_{k,j} \cdot \overline{x}^{(i)}_{k,j} \; \right), 
\label{ieq-cfl-cost-aggregation-per-iteration-6}
\end{align}
by applying the definition of $\theta^{(i)}(k)$ 
for all $k \in I'^{(i)}$ with $\sum_{j \in D'^{(i)}} \overline{x}^{(i)}_{k,j} > 0$.
%
%
%
\end{proof}

%


%
By Lemma~\ref{lemma-cfl-approx-iterative-rounding-initial-tuple-feasibility}, Lemma~\ref{lemma-cfl-approx-iterative-rounding-successive-tuple-feasibility}, and Lemma~\ref{lemma-cfl-cost-aggregation-per-iteration}, we obtain the following lemma which establishes the overall guarantee for our iterative rounding process.

\begin{lemma} \label{lemma-cfl-cost-aggregation-overall}
We have
\begin{align}
\sum_{i \in F^* \cap I} \left( \; o_i \; + \; \sum_{j \in D'^{(i)}} c_{i,j} \cdot x''_{i,j} \;\; \right)  
\; \le \;\; \frac{3}{1-\alpha} \cdot \left( \; \sum_{i \in I} \; o_i \cdot \overline{y}^{(0)}_i \; + \; \sum_{i \in I, \; j \in \MC{D}} c_{i,j} \cdot \overline{x}^{(0)}_{i,j} \; \right). 
\label{ieq-cfl-cost-aggregation-overall-target} 
\end{align}
\end{lemma}

\begin{proof}
Recall that, for any $i \in F^* \cap I$, 
%
we use $\Psi''^{(i)} = (I''^{(i)}, D''^{(i)},\bm{r''}^{(i)})$ to denote the updated parameter tuple the algorithm maintains at the end of the $i^{th}$-iteration.
%
Let $( \overline{\bm{x}}''^{(i)}, \overline{\bm{y}}''^{(i)} )$ be an optimal solution for~\ref{LP_ITR} on $\Psi''^{(i)}$.
%
%
%
%
By Lemma~\ref{lemma-cfl-approx-iterative-rounding-successive-tuple-feasibility}, we have
\begin{align*}
& \sum_{k \in I''^{(i)}} o_k \cdot \overline{y}''^{(i)}_k + \sum_{ \substack{ k \in I''^{(i)}, \\[2pt]  j \in D''^{(i)} } } c_{k,j} \cdot \overline{x}''^{(i)}_{k,j} \;
\le \; \sum_{k \in I''^{(i)}} \left( 1-\sigma^{(i)}_k \right) \cdot \left( o_k \cdot \overline{y}^{(i)}_k + \sum_{j \in D''^{(i)}} c_{k,j} \cdot \overline{x}^{(i)}_{k,j} \right).
\end{align*}
%
%
%
Combining the above with Lemma~\ref{lemma-cfl-cost-aggregation-per-iteration} and apply the fact that $I''^{(i)} = I'^{(i)} \setminus \{i\}$,
we obtain
\begin{align}
& \left( \; (1-\alpha) \cdot o_i \; + \; \sum_{j \in D'^{(i)}} c_{i,j} \cdot x''_{i,j} \; \right) \; + \; 3\cdot \left( \; \sum_{i \in I''^{(i)}} o_i \cdot \overline{y}''^{(i)}_i \; + \sum_{ i \in I''^{(i)}, \; j \in D''^{(i)} } c_{i,j} \cdot \overline{x}''^{(i)}_{i,j} \; \right)  \notag \\[6pt]
& \hspace{2.2cm} \le \;\; 3 \cdot \sum_{k \in I'^{(i)}} \; \left( \; o_k \cdot \overline{y}^{(i)}_k \; + \; \sum_{j \in D'^{(i)}} c_{k,j} \cdot \overline{x}^{(i)}_{k,j} \; \right).
\label{ieq-cfl-cost-aggregation-overall-2}
\end{align}
Inequality~(\ref{ieq-cfl-cost-aggregation-overall-2}) shows that, the total cost incurred by $i$ can be bounded within three times the difference between the optimal values of the successive iterations.
%
%
%
%
Taking the summation over $i \in F^* \cap I$ and applying 
Inequality~(\ref{ieq-cfl-cost-aggregation-overall-2}), we obtain
\begin{align*}
\sum_{i \in F^* \cap I} \left( \; (1-\alpha) \cdot o_i \; + \; \sum_{j \in D'^{(i)}} c_{i,j} \cdot x''_{i,j} \;\; \right) \;\; 
& \le \;\; 3\cdot \left( \; \sum_{i \in I} o_i \cdot \overline{y}''^{(0)}_i \; + \; \sum_{i \in I, j \in \MC{D}} c_{i,j} \cdot \overline{x}''^{(0)}_{i,j} \; \right) \\[6pt]
& \le \;\; 3\cdot \left( \; \sum_{i \in I} o_i \cdot \overline{y}^{(0)}_i \; + \; \sum_{i \in I, j \in \MC{D}} c_{i,j} \cdot \overline{x}^{(0)}_{i,j} \; \right),
\end{align*}
where we use $(\overline{\bm{x}}''^{(0)}, \overline{\bm{y}}''^{(0)})$ to denote an optimal solution for~\ref{LP_ITR} on the initial parameter tuple $\Psi^{(0)}$ and the fact from Lemma~\ref{lemma-cfl-approx-iterative-rounding-initial-tuple-feasibility} that $(\overline{\bm{x}}^{(0)}, \overline{\bm{y}}^{(0)})$ is a feasible solution for~\ref{LP_ITR} on $\Psi^{(0)}$.
Multiplying the above by $1/(1-\alpha)$ completes the proof of this lemma.
\end{proof}

%


\subsection{The overall guarantee}
\label{subsec-proof-approx-cfl-approx-guarantee}

In this section we establish the guarantee for the solution $\psi\left( \bm{x}''', \bm{y}^* \right)$.
Recall that $(\bm{x}, \bm{y})$ is the initial candidate solution we have for $\mathbf{MFN}_\Psi(\bm{x},\bm{y},\bm{g})$ and $(\overline{\bm{x}}^{(0)},\overline{\bm{y}}^{(0)})$ is the solution defined in Lemma~\ref{lemma-cfl-approx-iterative-rounding-initial-tuple-feasibility} for~\ref{LP_ITR} on the initial tuple $\Psi^{(0)}$.
%
%
%
Apply Lemma~\ref{lemma-cfl-cost-aggregation-overall} and the definition of $\overline{\bm{y}}^{(0)}$,
we have
\begin{align}
\psi\left( \; \bm{x}''', \; \bm{y}^* \; \right) \;\; 
& = \;\; \sum_{i \in F^*} \; o_i \; + \; \sum_{i \in \MC{F}^*, \; j \in \MC{D}} c_{i,j} \cdot x'''_{i,j}  \notag \\[8pt]
& \hspace{-1cm} \le \;\; \frac{1}{\alpha} \cdot \sum_{i \in U} \; o_i \cdot y_i \; + \; \frac{3}{2\alpha} \cdot \sum_{i \in I} o_i \cdot y'_i 
\;\; + \sum_{i \in U} c_{i,j} \cdot x'_{i,j}
\; + \frac{3}{1-\alpha} \cdot \sum_{i \in I, \; j \in \MC{D}} c_{i,j} \cdot \overline{x}^{(0)}_{i,j}.
\label{ieq-cfl-overall-bound-1}
\end{align}

\smallskip

\noindent
The following lemma, which is proved by considering the contribution of each individual edge in the flow paths, bounds the assignment cost in the R.H.S. of~(\ref{ieq-cfl-overall-bound-1}).
%

\begin{lemma}
\label{lemma-cfl-cost-overall-assignment-cost}
We have
\begin{align*}
\sum_{i \in U,\; j \in \MC{D}} c_{i,j} \cdot x'_{i,j} \;\; + \;\; \frac{3}{1-\alpha} \cdot \sum_{i \in I, \; j \in \MC{D}} c_{i,j} \cdot \overline{x}^{(0)}_{i,j}  
\;\; \le \;\; \frac{7-4\alpha}{\left(1-\alpha\right)^2} \cdot \sum_{i \in \MC{F}, \; j \in \MC{D}} c_{i,j} \cdot x_{i,j}.
\end{align*}
\end{lemma}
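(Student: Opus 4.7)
My plan is to first rewrite each path-length sum as an edge-length sum via the standard identity
$$\sum_{p \in Q} |p|\,\hat{f}_p \;=\; \sum_{i \in \MC{F},\, j \in \MC{D}} c_{i,j}\,\Big( \sum_{p \in Q \,:\, (j^s,i)\in p} \hat{f}_p \;+\; \sum_{p \in Q \,:\, (i,j^s)\in p} \hat{f}_p \Big),$$
valid for any flow $\hat{\bm f}$ and any set $Q$ of paths, since edges of the form $(i,i^t)$ and $(i^t,j^t)$ contribute nothing to $|p|$. Applying this with $Q=P$ and $\hat{\bm f}=\bm f$, constraints~(\ref{LP_MFN_2}) and~(\ref{LP_MFN_3}) immediately give $\sum_{p\in P}|p|\,f_p \le \sum_{i,j} c_{i,j}(x'_{i,j}+g_{i,j})$. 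By construction $g_{i,j} = h_{i,j}$ on tightly-occupied $i \in U^{(>)}$ and $0$ elsewhere, while the $b$-matching capacity $x'_{i,j}/(1-\alpha)$ forces $h_{i,j}\le x'_{i,j}/(1-\alpha)$. Substituting,
$$\sum_{p\in P} |p|\,f_p \;\le\; \sum_{i,j} c_{i,j}\,x'_{i,j} \;+\; \frac{1}{1-\alpha}\sum_{i,j} c_{i,j}\,x'_{i,j} \;=\; \frac{2-\alpha}{1-\alpha}\sum_{i,j} c_{i,j}\,x'_{i,j}.$$

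Next I decompose $\sum_{i\in U,\,j,\,p\in\Phi'(i,j)} |p|\,f'_p$ along the three construction rules of $\bm f'$. On the simple paths $j^s \!\to\! i \!\to\! i^t \!\to\! j^t$ arising from rule~(i) and from the $D^{(\ell)}$ branch of rule~(ii), the flow value is $(1-\alpha) h_{i,j}$ and $|p|=c_{i,j}$, so the combined contribution is at most $\sum_{i \in U^{(>)},\,j} c_{i,j}(1-\alpha)h_{i,j} \le \sum_{i \in U,\,j} c_{i,j}\,x'_{i,j}$ after using $(1-\alpha)h_{i,j}\le x'_{i,j}$. On all remaining paths (the $D^{(p)}$ branch of rule~(ii) and rule~(iii)) we have $f'_p=f_p$ for $p \in P(i,j)$, $i \in U$, and these are a subset of $P^U := \bigcup_{i\in U,\,j} P(i,j)$; this portion is bounded by $\sum_{p\in P^U}|p|\,f_p$. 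Hence
$$\sum_{i\in U,\,j,\,p\in\Phi'(i,j)} |p|\,f'_p \;\le\; \sum_{i\in U,\,j} c_{i,j}\,x'_{i,j} \;+\; \sum_{p\in P^U}|p|\,f_p.$$

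To finish, I combine the two terms on the LHS: writing $P^I := \bigcup_{i\in I,\,j} P(i,j)$ so that $P = P^U \sqcup P^I$,
$$\frac{1}{1-\alpha}\sum_{p \in P^U}|p|\,f_p \;+\; \frac{3}{1-\alpha}\sum_{p \in P^I}|p|\,f_p \;\le\; \frac{3}{1-\alpha}\sum_{p\in P}|p|\,f_p,$$
so the full LHS is bounded by $\tfrac{1}{1-\alpha}\sum_{i,j} c_{i,j}\,x'_{i,j} + \tfrac{3}{1-\alpha}\sum_P |p|\,f_p$. Plugging in the earlier bound and simplifying
$$\frac{1}{1-\alpha} \;+\; \frac{3(2-\alpha)}{(1-\alpha)^2} \;=\; \frac{1-\alpha+6-3\alpha}{(1-\alpha)^2} \;=\; \frac{7-4\alpha}{(1-\alpha)^2}$$
yields the claim.

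The main technical nuisance is in the middle step: I must verify that the three rules defining $\bm f'$ partition $P^U$ cleanly by the category of the sink facility, so that the simple-path portions (where $\bm f'$ routes flow on $(j^s,i)$ that $\bm f$ would have routed differently or not at all) are tightly controlled by the matching inequality $(1-\alpha)h_{i,j}\le x'_{i,j}$, while the remaining portions coincide pointwise with $\bm f$ on $P^U$ and can therefore be absorbed into the global bound on $\sum_P |p|\,f_p$ without double-counting. The rest is bookkeeping.
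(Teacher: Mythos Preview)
Your proof is correct, and it takes a genuinely different route from the paper's. Both arguments rest on the same edge decomposition of $|p|$ together with constraints~(\ref{LP_MFN_2}),~(\ref{LP_MFN_3}) and the matching bound $h_{i,j}\le x'_{i,j}/(1-\alpha)$, but they organize the bookkeeping differently. The paper works \emph{pointwise}: for each edge pair $(j^s,i)$, $(i,j^s)$ it defines the combined weighted load $a^{(\Rightarrow)}_{i,j}+b^{(\Leftarrow)}_{i,j}$ and shows, via a case split on whether $i$ is tightly-occupied in $U^{(>)}$, not tightly-occupied, in $U^{(\le)}$, or in $I$, that this load is at most $\tfrac{7-4\alpha}{(1-\alpha)^2}\,x'_{i,j}$; summing against $c_{i,j}$ then finishes. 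You instead work \emph{globally}: you first prove the clean intermediate estimate $\sum_{p\in P}|p|f_p \le \tfrac{2-\alpha}{1-\alpha}\sum c_{i,j}x'_{i,j}$, bound the simple-path portion of $\bm f'$ by $\sum_{i\in U,j} c_{i,j}x'_{i,j}$, absorb the remaining $\bm f'$-paths into $\sum_{P^U}|p|f_p$, and then use the crude inequality $\tfrac{1}{1-\alpha}\sum_{P^U}\le \tfrac{3}{1-\alpha}\sum_{P^U}$ to merge with the $P^I$ term. Your route is shorter and avoids the case analysis entirely; the paper's route yields the stronger per-edge inequality, though that extra strength is not used anywhere. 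It is a pleasant coincidence that the slack you discard in the crude step exactly matches the tight case of the paper's pointwise analysis, so both land on the same constant $\tfrac{7-4\alpha}{(1-\alpha)^2}$. One small remark: you write $P=P^U\sqcup P^I$, which is not literally true as index sets (facilities with $y'_i=0$ lie in neither $U$ nor $I$), but since constraint~(\ref{LP_MFN_5}) forces zero flow to sink at such facilities the equality of sums still holds.
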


\begin{proof}
For any $i \in U, j \in \MC{D}$ and any $p \in P(i,j)$, define the length of path $p$ as 
$$|p| \; := \; \sum_{\substack{i' \in \MC{F}, \; j' \in \MC{D}, \\[2pt] (j'^s,i') \in p}} c_{i',j'} \; + \; \sum_{\substack{ i' \in \MC{F}, \; j' \in \MC{D}, \\[2pt] (i',j'^s) \in p}} c_{i',j'}. $$
By triangle inequality we have $c_{i,j} \le |p|$.
Applying the definition for $\bm{x}'$ and $\overline{x}^{(0)}$, we obtain
\begin{align}
& \sum_{i \in U,\; j \in \MC{D}} c_{i,j} \cdot x'_{i,j} \;\; + \;\; \frac{3}{1-\alpha} \cdot \sum_{i \in I, \; j \in \MC{D}} c_{i,j} \cdot \overline{x}^{(0)}_{i,j} \;\; 
\le \;\; \sum_{ \substack{ i \in U^{(\phi)}, \\[2pt] j \in \MC{D} } } c_{i,j} \cdot g_{i,j} 
\; + \sum_{ \substack{ i \in U^{(>)} \setminus U^{(\phi)}, \\[2pt] j \in D^{(\phi)} } } c_{i,j} \cdot h_{i,j}  
\notag \\[4pt]
& \hspace{1.2cm}
\; + \; \frac{1}{1-\alpha} \cdot \sum_{ \substack{ i \in U^{(>)} \setminus U^{(\phi)}, \\[2pt]  j \in \MC{D} \setminus D^{(\phi)}, \\[2pt]  p \in P(i,j) } } |p| \cdot f'_p 
\; + \; \frac{1}{1-\alpha} \cdot \sum_{ \substack{ i \in U^{(\le)}, \\[2pt]  j \in \MC{D}, \\[2pt]  p \in P(i,j) } } |p| \cdot f_p 
\; + \; \frac{3}{1-\alpha} \cdot \sum_{ \substack{ i \in I, \; j \in \MC{D}, \\[2pt]  p \in P(i,j) } } |p| \cdot f_p.
\label{ieq-cfl-overall-assignment-cost-1}
\end{align}

\smallskip

To bound the R.H.S. of~(\ref{ieq-cfl-overall-assignment-cost-1}), we consider each of the items and the total amount of flow these items has accounted for along the two edges $(j^s,i)$ and $(i,j^s)$ for each $i \in \MC{F}$ and $j \in \MC{D}$.
%
%
%
To be precise, we rewrite the R.H.S. of~(\ref{ieq-cfl-overall-assignment-cost-1}) as
$$\sum_{i \in \MC{F}, j \in \MC{D}} \; \left( \; a^{(\Rightarrow)}_{i,j} \; + \; b^{(\Leftarrow)}_{i,j} \; \right),$$
where $a^{(\Rightarrow)}_{i,j}$ and $b^{(\Leftarrow)}_{i,j}$ denote the total amount of flow the items in the R.H.S. of~(\ref{ieq-cfl-overall-assignment-cost-1}) have accounted for along the two edges $(j^s,i)$ and $(i,j^s)$, respectively.

%
%

\smallskip

%
%
In the following, we bound $a^{(\Rightarrow)}_{i,j}$ and $b^{(\Leftarrow)}_{i,j}$ for each $i \in \MC{F}$ and $j \in \MC{D}$ separately.
%
%
Depending on the categories to which $i$ and $j$ belong, we have the following three cases.

\begin{itemize}
	\item
		$i \in U^{(\phi)}$ is tightly-occupied.
		
		In this case, we have 
		\begin{align*}
		& a^{(\Rightarrow)}_{i,j} \; \le \; \frac{3}{1-\alpha} \cdot \sum_{ \substack{ p \in P, \\[2pt] (j^s,i) \in p } } c_{i,j} \cdot f_p \;\; \le \frac{3}{1-\alpha} \cdot c_{i,j} \cdot x_{i,j}	\quad\; \text{and} \\[6pt]
		& b^{(\Leftarrow)}_{i,j} \; \le \; c_{i,j} \cdot g_{i,j} \; + \; \frac{3}{1-\alpha} \cdot \sum_{ \substack{ p \in P, \\[2pt]  (i,j^s) \in p } } c_{i,j} \cdot f_p  \;\; \le \;\; \left( 1 + \frac{3}{1-\alpha} \right) \cdot c_{i,j} \cdot g_{i,j}
		\end{align*}
		by constraints~(\ref{LP_MFN_2}) and~(\ref{LP_MFN_3}) in $\mathbf{MFN}_\Psi(x,y',g)$.
		Since $g_{i,j} \le x_{i,j} / (1-\alpha)$, we obtain $$a^{(\Rightarrow)}_{i,j} + b^{(\Leftarrow)}_{i,j} \;\; \le \;\; \frac{7 - 4\alpha}{\; (1-\alpha)^2 \;} \cdot c_{i,j} \cdot x_{i,j}.$$
		
	\item
		$i \in U^{(>)} \setminus U^{(\phi)}, \; j \in D^{(\phi)}$.
		In this case, we have $b^{(\Leftarrow)}_{i,j} = 0$.
		%
		By Proposition~\ref{prop-MFN-flow-client-without-partial-assignment}, none of commodities in $\MC{D} \setminus D^{(\phi)}$ has sent nonzero flow through the edge $(j^s,i)$, i.e., 
		$$\sum_{ \substack{ i' \in \MC{F}, \; j' \in \MC{D} \setminus D^{(\phi)}, \\ p \in P(i',j') \text{ s.t. } (j^s,i) \in p } } f_p = 0.
		\qquad \text{Hence,} \quad
		a^{(\Rightarrow)}_{i,j} \; + \; b^{(\Leftarrow)}_{i,j} \;\; \le \;\; c_{i,j} \cdot h_{i,j} \;\; \le \;\; \frac{1}{1-\alpha} \cdot c_{i,j} \cdot x_{i,j}.$$
		
	\item
		For the remaining cases, i.e., $j \in \MC{D} \setminus D^{(\phi)}$ or $i \in \MC{F} \setminus U^{(>)}$.
		We have $b^{(\Leftarrow)}_{i,j} = 0$ and by a similar argument,
		$$a^{(\Rightarrow)}_{i,j} \; + \; b^{(\Leftarrow)}_{i,j} \;\; \le \;\; \frac{3}{1-\alpha} \cdot \sum_{ \substack{p \in P, \\[2pt]  (j^s,i)\in p } } c_{i,j} \cdot f_p \;\; \le \;\; \frac{3}{1-\alpha} \cdot c_{i,j} \cdot x_{i,j}.$$
\end{itemize}
In all cases, $a^{(\Rightarrow)}_{i,j} + b^{(\Leftarrow)}_{i,j}$ is upper-bounded by $\frac{7-4\alpha}{\; (1-\alpha)^2 \;} \cdot c_{i,j} \cdot x_{i,j}$.
This proves the lemma.
\end{proof}

\smallskip

\noindent
Combining~(\ref{ieq-cfl-overall-bound-1}) with Lemma~\ref{lemma-cfl-cost-overall-assignment-cost}, we obtain
\begin{align*}
\psi\left( \; \bm{x}''', \; \bm{y}^* \; \right) \; 
& \le \;\; \frac{3}{2\alpha} \cdot \sum_{i \in \MC{F}} o_i \cdot y_i \; + \; \frac{7-4\alpha}{\left( 1-\alpha \right)^2} \cdot \sum_{i \in \MC{F}, \; j \in \MC{D}} c_{i,j} \cdot x_{i,j} \\[4pt]
& \le \;\; \max\left\{ \; \frac{3}{2\alpha} \; , \; \frac{7-4\alpha}{(1-\alpha)^2} \; \right\} \cdot \psi(\bm{x}, \bm{y}).
\end{align*}
This completes the proof for Theorem~\ref{thm-MFN-relaxed-separation-oracle}.

%



%

%


\section{Proof of Theorem~\ref{theorem-cfl-ufc-approx}.}
\label{sec-proof-approx-cfl-cfc}

We outline the proof 
as follows.
In Section~\ref{subsec-proof-approx-cfl-cfc-feasibility}, we show that the rounding algorithm is well-defined and terminates in polynomial time.
We define in the same section the rounded assignment function $\bm{x}^\circ$ and shows that $(\bm{x}^\circ, y^*)$ is feasible for~\ref{LP-natural-CFL} on $\Psi$.
This shows that the min-cost assignment problem for $(\MC{D}, \MC{F}^*)$ is feasible, and hence the integral assignment $\bm{x}^\dagger$ can be computed.
%
%
%
%
In Section~\ref{subsec-proof-approx-cfl-cfc-guarantee}, we establish the $4$-approximation guarantee for $(\bm{x}^\circ, y^*)$.
%
%
This completes the proof for Theorem~\ref{theorem-cfl-ufc-approx} since $\bm{x}^\dagger$ is the optimal solution for the min-cost assignment problem on $(\MC{D}, \MC{F}^*)$.
%


%




%
\paragraph*{Notations to use in the proof.}

In the following, we define notations and notions that help describe our 
rounding process with precision in the analysis.
As the notations can be subtle, we refer the readers to kindly check the definitions to prevent notational ambiguity in the proof.

\smallskip

Consider the cluster-forming process. 
Let $\MC{C}_{D'}$ and $\MC{C}_{H'}$ denote the sets of clusters centered at the non-outlier clients and outlier clients, respectively.
For each $q \in \MC{C}_{D'}$, we use $j(q)$ to denote the center client of $q$ 
%
and $i(q)$ denote the facility that is selected to be rounded up in the iteration when $q$ is formed.
Let $F^*_{D'} := \left\{ \; \vphantom{\text{\Large T}} i(q) \; \colon \; q \in \MC{C}_{D'} \; \right\}$ denote the set of facilities rounded up for the clusters in $\MC{C}_{D'}$.
Note that, $F^*_{D'}$ and $G$ are disjoint by the algorithm design.
Furthermore, the set of satellite facilities $B(j)$ for each $j \in H$ forms a partition of $G$.

\smallskip

For each $q \in \MC{C}_{D'}$, we 
use $D'^{(q)}$, $F'^{(q)}$, $H^{(q)}$, $H'^{(q)}$, $\bm{x}'^{(q)}$, and $\bm{y}'^{(q)}$
to denote the set $D'$, the set $F'$, the sets $H$, the set $H'$, the assignment $\bm{x}'$, and the multiplicity $\bm{y}'$
the algorithm maintains at the moment when the cluster $q$ was formed.
We use $B(q)$ to denote the set of satellite facilities at that moment, i.e., $B(q) := N_{(F'^{(q)},x'^{(q)})}(j(q))$.
%
%

\smallskip

For each outlier client $j \in H$, we use $w(j)$ to denote the facility in $U$ at which $j$ is located.
We use $p(j)$ to denote the specific parent client in $J^{(\leftrightarrow)}$ from which $j$ is created.
%
%
%
%
On the contrary, for any $j \in J^{(\leftrightarrow)}$, we use $H(j)$ to denote the set of outlier clients that are created from $j$.
For each $w \in U$, we use $H(w)$ to denote the set of outlier clients located at $w$.

\smallskip

%
To prevent ambiguity on the usage of $(\bm{x}',\bm{y}')$, we will specifically use $(\bm{x}'^{(0)}, \bm{y}'^{(0)})$ to denote the initial solution the algorithm has for $\Psi$.
For outlier clients $j \in H$ and any $i \in \MC{F}$, we use $\bm{x}'^{(0)}_{i,j}$ to denote the assignment made for $j$ to $i$ at the moment when $j$ is created.
We additionally use $\bm{x}'^{(\text{II})}$ and $\bm{y}'^{(\text{II})}$ to denote the assignment $\bm{x}'$ and the multiplicity $\bm{y}'$ the algorithm maintains when it enters the second phase.
\subsection{The Feasibility}
\label{subsec-proof-approx-cfl-cfc-feasibility}

%
%
In Section~\ref{par-proof-approx-cfl-cfc-feasibility-cd} we show that the first stage of the rounding process is well-defined and terminates in polynomial time.
%
%
In Section~\ref{par-proof-approx-cfl-cfc-feasibility-ch} we consider the second stage of the process and show that the feasible region of~\ref{LP-outliers} is nonempty.
%
We define the intermediate assignment $\bm{x}^\circ$ and prove the feasibility of $(\bm{x}^\circ, \bm{y}^*)$ for~\ref{LP-natural-CFL} on $\Psi$ in Section~\ref{par-proof-approx-cfl-cfc-feasibility-xc}.
%


%


\subsubsection{The first stage of the rounding process}
\label{par-proof-approx-cfl-cfc-feasibility-cd}

We show that the first stage of our rounding process is well-defined and terminates in polynomial time.
Consider any particular moment in the first stage, and let $(\bm{x}',\bm{y}')$, $F'$, $D'$, and $H'$ denote the parameters the algorithm maintains at that moment.
%

\smallskip

Consider the capacity constraints for facilities in $F'$ and the third constraint from~\ref{LP-natural-CFL} for facilities in $F'$ and clients in $D' \cup H'$, listed as follows.
\begin{center}
\begin{minipage}{.65\textwidth}
\fbox{
\begin{minipage}{\textwidth}
\vspace{-5pt}
\begin{align}
\sum_{j \in D' \cup H'} \; x'_{i,j} \; \le \; u_i \cdot y'_i, & & & \forall i \in F'. \label{cons-P-2-with-H} \tag*{(MN-2)} \\
x'_{i,j} \; \le \; y'_i, & & & \forall i \in F', j \in D' \cup H'.  \label{cons-P-3-with-H} \tag*{(MN-3)}
\end{align}
\vspace{-12pt}
\end{minipage}\quad\enskip
}
\end{minipage}
\end{center}

\smallskip

%
%

It is clear that~\ref{cons-P-2-with-H} and~\ref{cons-P-3-with-H} hold in the beginning of the rounding process, since initially $F' := I$, $D' := J^{(I)} \cup J^{(\leftrightarrow)}$, $H' := \emptyset$, and $(\bm{x}',\bm{y}') := (\bm{x}'^{(0)}, \bm{y}'^{(0)})$ is feasible for~\ref{LP-natural-CFL}.

\smallskip

The following two lemmas establish that, the processes of creating outlier clients and cluster-forming do not render the validity of~\ref{cons-P-2-with-H} and~\ref{cons-P-3-with-H}.

\begin{lemma} \label{lemma-feasibility-creating-outlier-client}
The process of creating outlier clients 
does not render~\ref{cons-P-2-with-H} nor~\ref{cons-P-3-with-H} invalid.
\end{lemma}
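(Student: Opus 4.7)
The plan is to verify directly that the operation of creating outlier clients never increases the LHS of \ref{cons-P-2-with-H} at any facility $i \in \tilde{F}'$, while the RHS $u_i \cdot \tilde{y}'_i$ is left untouched. Since \ref{cons-P-2-with-H} is assumed valid at the moment the operation begins, this suffices.

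Fix a client $j \in J^{(\leftrightarrow)} \cap \tilde{D}'$ with $\sum_{i \in \tilde{F}'} \tilde{x}'_{i,j} < 1/2$ that triggers the outlier construction, and let $i \in \tilde{F}'$ be arbitrary. The only changes to $\tilde{\bm{x}}'|_{\tilde{F}', \MC{D}\cup\tilde{H}}$ incurred in this operation are: (a) $\tilde{x}'_{i,j}$ is zeroed out, and (b) for each $w \in N_{(U,\tilde{x}')}(j)$, a fresh outlier client $j_w$ is introduced and assigned $\tilde{x}'_{i,j_w} := d_{j_w} \cdot \tilde{x}'_{i,j} / \sum_{i' \in \tilde{F}'} \tilde{x}'_{i',j}$ from $i$ (which is zero unless $i \in N_{(\tilde{F}',\tilde{x}')}(j)$). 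Hence the net change on the LHS is
\begin{equation*}
\Delta_i \;:=\; -\,\tilde{x}'_{i,j} \;+\; \sum_{w \in N_{(U,\tilde{x}')}(j)} \tilde{x}'_{i,j_w}.
\end{equation*}

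If $i \notin N_{(\tilde{F}',\tilde{x}')}(j)$, both terms vanish and $\Delta_i = 0$. Otherwise, substituting the definition of $\tilde{x}'_{i,j_w}$ and factoring out the $i$-dependent quantity,
\begin{equation*}
\sum_{w \in N_{(U,\tilde{x}')}(j)} \tilde{x}'_{i,j_w} \;=\; \frac{\tilde{x}'_{i,j}}{\sum_{i' \in \tilde{F}'} \tilde{x}'_{i',j}} \cdot \sum_{w \in N_{(U,\tilde{x}')}(j)} d_{j_w} \;=\; \frac{\tilde{x}'_{i,j}}{\sum_{i' \in \tilde{F}'} \tilde{x}'_{i',j}} \cdot r'_j,
\end{equation*}
where the last equality uses that the residue demand of $j$ is fully redistributed, $\sum_{w} d_{j_w} = r'_j$, as noted after the construction.

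Finally, by the definition $r'_j := \min\bigl\{ \sum_{i' \in \tilde{F}'} \tilde{x}'_{i',j}, \sum_{w \in U} \tilde{x}'_{w,j} \bigr\}$, we have $r'_j \le \sum_{i' \in \tilde{F}'} \tilde{x}'_{i',j}$, so $\sum_{w} \tilde{x}'_{i,j_w} \le \tilde{x}'_{i,j}$ and hence $\Delta_i \le 0$. Consequently, the LHS of \ref{cons-P-2-with-H} at $i$ does not increase, and since $u_i \cdot \tilde{y}'_i$ stays the same, \ref{cons-P-2-with-H} remains valid at $i$. As $i \in \tilde{F}'$ was arbitrary, the lemma follows. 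The argument is essentially a bookkeeping check; the only ``content'' is the inequality $r'_j \le \sum_{i' \in \tilde{F}'} \tilde{x}'_{i',j}$, which is baked into the definition of $r'_j$, so there is no substantive obstacle.
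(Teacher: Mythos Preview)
Your proof is correct and follows essentially the same approach as the paper: both show that for each facility $i\in\tilde F'$ the total new assignment $\sum_{w}\tilde{x}'_{i,j_w}$ equals $\tfrac{\tilde{x}'_{i,j}}{\sum_{i'\in\tilde F'}\tilde{x}'_{i',j}}\cdot r'_j\le \tilde{x}'_{i,j}$ via $r'_j\le \sum_{i'\in\tilde F'}\tilde{x}'_{i',j}$, so zeroing out $\tilde{x}'_{i,j}$ offsets the increase and the LHS of \ref{cons-P-2-with-H} does not grow while the RHS is untouched.
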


\begin{proof}
Consider the process of creating outlier clients from a client, say, $j \in J^{(\leftrightarrow)} \cap D'$.
By the algorithm design, this happens when $\sum_{i \in F'} x'_{i,j} < 1/2$.

\smallskip

For each $i \in N_{(F',x')}(j)$, consider the assignments the algorithm has made from $H(j)$ to $i$ after $H(j)$ is created.
The total amount of assignment $i$ receives from $H(j)$ is
%
\begin{align*}
\sum_{\ell \in H(j)} x'^{(0)}_{i,\ell} \; := \; \sum_{\ell \in H(j)} d_\ell \cdot \frac{ x'_{i,j} }{ \sum_{ k \in F'} x'_{k,j} } 
\;\; = \;\; \frac{ x'_{i,j} }{ \sum_{ k \in F'} x'_{k,j} } \cdot \sum_{\ell \in H(j)} r'_j \cdot \frac{ x'_{w(\ell),j} }{ \sum_{w \in U} x'_{w,j} },
\end{align*}
where in the above we apply the definition of $x'^{(0)}_{i,\ell}$ and $d_\ell$ for any $\ell \in H(j)$.
By the definition of $r'_j$, we have $r'_j \le \sum_{k \in F'} x'_{k,j}$.
Hence, the above becomes
\begin{align}
\sum_{\ell \in H(j)} x'^{(0)}_{i,\ell} \;\; \le \;\;  x'_{i,j} \cdot \sum_{\ell \in H(j)} \frac{ x'_{w(\ell),j} }{ \sum_{w \in U} x'_{w,j} } \;\; = \;\; x'_{i,j},
\label{ieq-lemma-feasibility-creating-outlier-client-1}
\end{align}
where in the last equality we apply the fact that $\sum_{\ell \in H(j)} x'_{w(\ell),j} = \sum_{w \in U} x'_{w,j}$.
Since the algorithm resets $\vphantom{\text{\Large T}} x_{i,j}$ to be zero after $H(j)$ is created, it follows from~(\ref{ieq-lemma-feasibility-creating-outlier-client-1}) that~\ref{cons-P-2-with-H} still holds
after $H(j)$ is created and the new assignments to $i$ are made.

\smallskip

The argument for~\ref{cons-P-3-with-H} follows analogously.
For any $\ell \in H(j)$, by the above equations, we have $x'^{(0)}_{i,\ell} \le \sum_{k \in H(j)} x'^{(0)}_{i,k} \le x'_{i,j}$.
Hence~\ref{cons-P-3-with-H} holds for any $i \in N_{(F',x')}(j)$ and any $\ell \in H(j)$.
\end{proof}

%

%
%


%

\begin{lemma} \label{lemma-feasibility-scaled-down-well-defined}
We have $0 < \delta_{i(q)} \le 1$ for any $q \in \MC{C}_{D'}$.
Furthermore, the scaled-down operation the algorithm performs when rounding cluster $q$ does not render~\ref{cons-P-2-with-H} nor~\ref{cons-P-3-with-H} invalid.
%
\end{lemma}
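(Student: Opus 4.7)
The plan is to first establish the numerical bounds $0 < \delta_{i(q)} \le 1$, and then verify that the uniform scaling preserves \ref{cons-P-2-with-H} and \ref{cons-P-3}. Both steps rely on an invariant maintained during the first phase of the algorithm: at the moment cluster $q$ is formed, $\sum_{i \in F'} x'_{i,j(q)} \ge 1/2$ (this is precisely the gating condition that triggers cluster formation), and both constraints \ref{cons-P-2-with-H} and \ref{cons-P-3} hold by induction on the iteration count, with Lemma~\ref{lemma-feasibility-creating-outlier-client} covering the outlier-creation sub-step.

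For the bounds on $\delta_{i(q)}$, I would first observe that every $\ell \in F'$ satisfies $y'_\ell < 1/2$: the set $F'$ is initialized as $I = \{i : 0 < y'_i < 1/2\}$, and the rounding process only removes facilities from $F'$ while scaling their $y'$ values downward. In particular $y'_{i(q)} < 1/2$, so the numerator $1/2 - y'_{i(q)}$ is strictly positive. For the denominator, invoking \ref{cons-P-3} for each $i \in N := N_{(F',x')}(j(q))$ together with the gating invariant yields
\[
\sum_{i \in N} y'_i \;\ge\; \sum_{i \in N} x'_{i,j(q)} \;=\; \sum_{i \in F'} x'_{i,j(q)} \;\ge\; \tfrac{1}{2}\,,
\]
and subtracting $y'_{i(q)}$ bounds $\sum_{k \in N \setminus \{i(q)\}} y'_k$ from below by $1/2 - y'_{i(q)} > 0$. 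This simultaneously gives positivity of $\delta_{i(q)}$ and the upper bound $\delta_{i(q)} \le 1$.

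For constraint preservation, the scaling modifies only $y'_\ell$ and $x'_{\ell,k}$ for $\ell \in N \setminus \{i(q)\}$ and $k \in N_{(D' \cup H', x')}(\ell)$, each by the common factor $1 - \delta_{i(q)} \in [0,1)$. Direct substitution shows \ref{cons-P-3} survives: the new $x'_{\ell,k} = (1-\delta)\,x'_{\ell,k} \le (1-\delta)\,y'_\ell$, which is the new $y'_\ell$. For \ref{cons-P-2-with-H}, the essential observation is that after the algorithm's cleanup steps (which zero $x'_{i,j}$ whenever $j$ is discarded from $D'$ or converted into outliers, and which remove the satellite facilities $B(j')$ from $F'$ whenever an outlier cluster $j'$ is processed), the sum $\sum_{j \in \MC{D} \cup H} x'_{\ell,j}$ for $\ell \in F'$ collects mass only from $j \in D' \cup H'$ -- exactly the entries being scaled. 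Uniform scaling of both sides therefore preserves \ref{cons-P-2-with-H}; facility $i(q)$ is simply removed from $F'$, so no constraint for it remains to check.

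The main obstacle I foresee is not algebraic but rather the bookkeeping needed to codify the ``cleanup invariant'' that no stale assignments to clients outside $D' \cup H'$ contribute to \ref{cons-P-2-with-H} for $\ell \in F'$. This rests on the explicit ``$x'_{i,j} \leftarrow 0$'' actions in both the outlier-creation step and the $J^{(I)} \cap D'$ discarding step, together with the observation that each already-processed outlier cluster has its entire satellite set $B(j')$ evicted from $F'$ at the moment of processing, so no facility currently in $F'$ can carry nonzero assignment to a processed outlier. Once these invariants are pinned down, the lemma follows from the short calculation above.
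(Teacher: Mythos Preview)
Your proposal is correct and follows essentially the same approach as the paper: both argue inductively, use the gating condition $\sum_{i\in F'} x'_{i,j(q)}\ge 1/2$ together with \ref{cons-P-3} to bound the denominator of $\delta_{i(q)}$ from below by $1/2-y'_{i(q)}>0$, and observe that uniform scaling by $1-\delta_{i(q)}$ preserves both constraints. Your treatment of \ref{cons-P-2-with-H} is in fact more careful than the paper's, which simply asserts simultaneous scaling without spelling out the cleanup invariant (that for $\ell\in F'$ only $j\in D'\cup H'$ carry nonzero $x'_{\ell,j}$) you correctly identify as necessary.
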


\begin{proof}
Consider any $q \in \MC{C}_{D'}$.
We will show that, provided that constraints~\ref{cons-P-2-with-H} and~\ref{cons-P-3-with-H} are valid in the beginning of the iteration for which $q$ is formed,
we have
\begin{itemize}
	\item
		$0 < \delta_{i(q)} \le 1$, and
	\item
		the rounding process for $q$ does not render~\ref{cons-P-2-with-H} and~\ref{cons-P-3-with-H} invalid.
\end{itemize}
Note that this proves the lemma.
%

\smallskip

Since $i(q) \in F'^{(q)}$, we know that $y'^{(q)}_{i(q)} < 1/2$.
Furthermore, since $j(q)$ is selected as the center client and since $j(q) \in D'^{(q)}$ by assumption, it follows that $\sum_{i \in F'^{(q)}} x'^{(q)}_{i,j(q)} \ge 1/2$.
This implies that
$$\sum_{i \in B(q) \setminus \{i(q)\}} y'^{(q)}_i \; \ge \; \sum_{i \in B(q) \setminus \{i(q)\}} x'^{(q)}_{i,j(q)} \; \ge \; \frac{1}{2} - x'^{(q)}_{i(q),j(q)} \; \ge \; \frac{1}{2} - y'^{(q)}_{i(q)} \; > \; 0,$$
where in the first and the last inequalities we apply constraint~\ref{cons-P-3-with-H} for $i \in B(q)$ and $j(q)$.
This shows that $$\delta_{i(q)} \; := \; \left( \; \frac{1}{2} - y'^{(q)}_{i(q)} \; \right) \cdot \frac{1}{ \sum_{i \in B(q)\setminus \{i(q)\}} y'^{(q)}_i } \; > \; 0.$$
On the contrary, by~\ref{cons-P-3-with-H} we have $\sum_{i \in B(q)} y'^{(q)}_i \; \ge \; \sum_{i \in B(q)} x'^{(q)}_{i,j(q)} \; \ge \; 1/2$.
This implies that $1/2 - y'^{(q)}_{i(q)} \le \sum_{i \in B(q) \setminus \{i\}} y'^{(q)}_i$ and $\delta_{i(q)} \le 1$.
%

\medskip

To see that constraints~\ref{cons-P-2-with-H} and~\ref{cons-P-3-with-H} remain valid at the end of this iteration, observe that for each $i \in B(q) \setminus \{i(q)\}$ and any $j \in D'^{(q)}$, both $x'^{(q)}_{i,j}$ and $y'^{(q)}_i$ are scaled down simultaneously by the constant $\left( 1-\delta_{i(q)}\right)$.
%
%
This completes the proof of this lemma.
\end{proof}

By Lemma~\ref{lemma-feasibility-creating-outlier-client} and Lemma~\ref{lemma-feasibility-scaled-down-well-defined},
we obtain the following.

\begin{restatable}{corollary}{corfeasibilitycons}
\label{cor-feasibility-cons}
\ref{cons-P-2-with-H} and~\ref{cons-P-3-with-H} hold throughout the first stage of the rounding process.
\end{restatable}

%
%

%
%
%
It follows that, at any particular moment in the first stage, 
\begin{align}
\sum_{i \in F'} x'_{i,j} \;\; \le \;\; \sum_{i \in F'}y'_i \quad \text{holds for any $j \in D' \cup H'$.}
\label{ieq-feasibility-cd-1}
\end{align}
By the design of the algorithm, we know that at least one facility is removed from $F'$ after each iteration in the first phase. 
Therefore, the rounding process repeats for at most $|I|$ iterations before $F'$ becomes empty.
By~(\ref{ieq-feasibility-cd-1}), this implies that $\sum_{i \in F'} x'_{i,j} = 0$ for all $j \in D' \cup H'$, and $D' \cup H'$ will become empty in at most $|H|$ iterations after that.
This shows that the rounding algorithm terminates in polynomial time.

\smallskip

The following lemma, which shows that the rounded facility is sparsely-loaded by the rerouted assignments, is straightforward to verify.

\begin{restatable}{lemma}{lemmafeasibilityabsorbingprocess}
\label{lemma-feasibility-absorbing-process}
We have $\sum_{j \in \MC{D} \cup H} x^*_{i(q),j} \le u_{i(q)} / 2$ for any $q \in \MC{C}_{D'}$.
%
\end{restatable}

\begin{proof}
Consider any $q \in \MC{C}_{D'}$.
%
By the design of the algorithm and the fact that constraint~\ref{cons-P-3-with-H}
holds throughout the process, we have
\begin{align*}
\sum_{j \in \MC{D} \cup H} x^*_{i(q),j} \;\; 
& = \;\; \sum_{j \in \MC{D} \cup H} x'^{(q)}_{i(q),j} \; + \; \sum_{i \in B(q) \setminus \{i(q)\}} \; \sum_{j \in \MC{D} \cup H} \delta_{i(q)} \cdot x'^{(q)}_{i,j} \\[4pt]
& \le \;\; u_{i(q)} \cdot y'^{(q)}_{i(q)} \; + \; \sum_{i \in B(q) \setminus \{i(q)\}} \delta_{i(q)} \cdot u_i \cdot y'^{(q)}_i \\[2pt]
& \le \;\; u_{i(q)} \cdot y'^{(q)}_{i(q)} \; + \; u_{i(q)} \cdot \left( \; \frac{1}{2} - y'^{(q)}_{i(q)} \; \right) \;\; = \;\; \frac{1}{2} \cdot u_{i(q)},
\end{align*}
where in the third inequality we use apply fact that $u_{i(q)} \ge u_i$ for all $i \in B(q)$ by the way $i(q)$ is selected and the definition of $\delta_{i(q)}$.
\end{proof}

\smallskip
\smallskip

%


\subsubsection{The second stage of the rounding process}
\label{par-proof-approx-cfl-cfc-feasibility-ch}

In this section, we consider the second stage of the rounding process and the clusters in $\MC{C}_{H'}$.
%
The following lemma summarizes the status of the 
facilities and the clients.

\begin{restatable}{lemma}{lemmaassignmentabsorbingprocess}
\label{lemma-assignment-absorbing-process}
When the algorithm enters the second stage, the following holds.
\begin{itemize}
	\item
		For any $i \in G$, \hspace{1pt} $\sum_{j \in \MC{D} \cup H} x'^{(\text{II})}_{i,j} \le u_i \cdot y'^{(\text{II})}_i$.

	\item
		For any $j \in J^{(I)}$, \hspace{1pt} $\sum_{i \in I} x^*_{i,j} + \sum_{i \in G} x'^{(\text{II})}_{i,j} \; > \; 1/2$.
		
	\item
		For any $j \in J^{(\leftrightarrow)}$, \hspace{1pt} $\sum_{i \in I} x^*_{i,j} + \sum_{i \in G} x'^{(\text{II})}_{i,j} + \sum_{i \in U} x'^{(0)}_{i,j} \; > \; 1/2$.
		
	\item
		For any $j \in H$, \hspace{1pt} $\sum_{i \in F^*_{D'}} x^*_{i,j} + \sum_{i \in G} x'^{(\text{II})}_{i,j} \; = \; \sum_{i \in I} x'^{(0)}_{i,j}$.
\end{itemize}
\end{restatable}

\begin{proof}
The first statement of this lemma follows directly from Corollary~\ref{cor-feasibility-cons} and the definition of $(\bm{x}'^{(\text{II})}, \bm{y}'^{(\text{II})})$.
%
%
%
%
The remaining 
of this lemma follows from the way how the algorithm handles the residue demand of each client.
Consider the moment for which each $j \in J^{(I)} \cup J^{(\leftrightarrow)} \cup H$ is removed from consideration in the first phase, and the fact that $G := \bigcup_{j \in H} B(j)$.
%

\smallskip

For $j \in J^{(I)} \cup J^{(\leftrightarrow)}$, it is removed when $\sum_{i \in F'} x'_{i,j} < 1/2$. 
It follows that the assignments rerouted to clusters in $\MC{C}_{D'}$, the assignments taken into clusters in $\MC{C}_{H'}$, and possibly the original assignments to facilities in $U$, account for at least $1/2$.
%
%

\smallskip

For $j \in H$, it is removed when selected as the center of a cluster, possibly an empty cluster.
When this happens, all of the remaining assignments for $j$ are taken into this cluster.
\end{proof}

%


%
Lemma~\ref{lemma-assignment-absorbing-process} leads to the following corollary on the scaling factor $t'_j$ for all $j \in \MC{D}$.

\begin{restatable}{corollary}{corscalingfactor}
\label{cor-scaling-factor}
$0 \le t'_j \le 2$ for all $j \in \MC{D}$.
\end{restatable}

\begin{proof}
%
%
It suffices to prove the statement for $j \in \MC{D}$ with $\sum_{i \in I} x^*_{i,j} + \sum_{i \in G} x'^{(\text{II})}_{i,j} > 0$.

\smallskip

Since $\sum_{i \in I}x^*_{i,j}  + \sum_{i \in G} x'^{(\text{II})}_{i,j} > 0$ implies that $j \in J^{(I)} \cup J^{(\leftrightarrow)}$, by the definition of $r'_j$, we have
$$1-\sum_{i \in U} x'^{(0)}_{i,j} - r'_j \;\; \ge \;\; 1-\sum_{i \in U} x'^{(0)}_{i,j} - \sum_{i \in I} x'^{(0)}_{i,j} \;\; \ge \;\; 0, \enskip \text{which implies that} \enskip t'_j > 0.$$
%
%
%

\smallskip

In the following we show that $t'_j \le 2$.
%
Since $j \in J^{(I)} \cup J^{(\leftrightarrow)}$, it suffices to prove the statement for the following three cases.
\begin{itemize}
	\item
		If $j \in J^{(I)}$, then $t'_j < 2$ directly from the conclusion of Lemma~\ref{lemma-assignment-absorbing-process} since 
		$$\sum_{i \in I} x^*_{i,j} + \sum_{i \in G} x'^{(\text{II})}_{i,j} > 1/2 \quad \text{and} \quad (1-\sum_{i \in U}x'^{(0)}_{i,j} - r'_j) \le 1.$$
		
		%
		
		
	\item
		If $j \in J^{(\leftrightarrow)}$ and $r'_j \neq \sum_{i \in U} x'^{(0)}_{i,j}$,
		then all of the residue demand of $j$ has been redistributed as outlier clients when $j$ is to be removed from $D'$.
		It follows that $$1 \; - \; \sum_{i \in U} x'^{(0)}_{i,j} \; - \; r'_j \;\; = \;\; \sum_{i \in I}x^*_{i,j} \; + \; \sum_{i \in G} x^{(\text{II})}_{i,j} \qquad \text{and} \quad t'_j = 1.$$
		
		\smallskip
		
	\item
		If $j \in J^{(\leftrightarrow)}$ and $r'_j := \sum_{i \in U} x'^{(0)}_{i,j}$, 
		then by the conclusion of Lemma~\ref{lemma-assignment-absorbing-process} 
		we have $\sum_{i \in I} x^*_{i,j} + \sum_{i \in G} x'^{(\text{II})}_{i,j} + \sum_{i \in U} x'^{(0)}_{i,j} > 1/2$, which implies that
		$$1 \; - \; \sum_{i \in U} x'^{(0)}_{i,j} \; - \; r'_j \;\; = \;\; 1 \; - \; 2\cdot \sum_{i \in U} x'^{(0)}_{i,j} \;\; < \;\;  2 \cdot \left( \; \sum_{i \in I}x^*_{i,j} + \sum_{i \in G} x'^{(\text{II})}_{i,j} \; \right)$$
		and $t'_j < 2$.
\end{itemize}
\noindent
In all cases we have $t'_\ell \le 2$.
\end{proof}

%


%
%

%

\begin{figure*}[h]
\centering
\fbox{
\begin{minipage}{.8\textwidth}
\begin{align}
& \text{min} & & \sum_{i \in G} \; y_i \; + \; \sum_{i \in G, \hspace{1pt} j \in U} c_{i,j} \cdot x_{i,j}  & &  \label{LP-CFL-outliers} \tag*{LP-(O)} \\[3pt]
& \text{s.t.} & & \sum_{i\in G} \; x_{i,j} \; = \; d_j, & & \forall j\in U, \tag*{(O-1)} \label{LP-CFL-outliers-M1} \\[2pt]
& & & \sum_{j \in U} \; x_{i,j} \; \le \; u_i \cdot y_i, & & \forall i \in G, \tag*{(O-2)} \label{LP-CFL-outliers-M2} \\[3pt]
& & & y_i \; \le \; 1, & & \forall i \in G, \tag*{(O-3)} \label{LP-CFL-outliers-M3} \\[4pt]
& & & x_{i,j} \; \ge \; 0, \;\; y_i \; \ge \; 0, & & \forall i \in G, \; j \in U. \tag*{(O-4)} \label{LP-CFL-outliers-M4}
\end{align}
\vspace{-10pt}
\end{minipage}\quad
}
\caption{(Restate) The assignment LP for the outlier clusters in $\MC{C}_{H'}$.}
\label{fig-LP-CFL-outliers}
\end{figure*}


\paragraph*{The bundled assignment $\bm{g}$ for~\ref{LP-outliers}.}

%
%
For any $w \in U$ and $i \in G$ such that $i \in B(k)$ for some $k \in H(w)$, i.e., $i$ belongs to the clusters centered at some $k \in H(w)$, define the bundled assignment $g_{i,w}$ as
$$g_{i,w} \; := \; \sum_{\ell \in \MC{D} \cup H} t'_\ell \cdot x'^{(\text{II})}_{i,\ell}. $$
Intuitively, $g_{i,w}$ is the total amount of scaled assignments $i$ has when the algorithm enters the second stage.
The following lemma shows that the feasible region of~\ref{LP-outliers} is nonempty, and the basic optimal solution $(\bm{x}'', \bm{y}'')$ exists.
%


\begin{restatable}{lemma}{lemmaassignmentlpfeasibility}
\label{lemma-assignment-lp-feasibility}
%
$\left( \; \bm{g}, \; 2 \hspace{1pt} \bm{y}'^{(\text{II})} \; \right)$ is feasible for~\ref{LP-outliers}.
\end{restatable}

\begin{proof}
We prove by verifying the constraints of~\ref{LP-CFL-outliers}. 
Also refer to Figure~\ref{fig-LP-CFL-outliers} for the numbering of the constraints.
%
\begin{itemize}
	\item
		Consider the constraint~\ref{LP-CFL-outliers-M1}.
		
		For any $w \in U$, apply the definition of $\bm{g}$ and the definition of $d_w$, we have 
		\begin{align*}
		\sum_{i \in G} \; g_{i,w} 
		\; = \; \sum_{ k \in H(w), \; i' \in B(k)} \; \sum_{\ell \in \MC{D} \cup H} \; t'_\ell \cdot x'^{(\text{II})}_{i',\ell} \; = \; d_w.
		\end{align*}
		
	\item
		Consider the constraint~\ref{LP-CFL-outliers-M3}.
		
		For any $i \in G$, we have $y'^{(\text{II})}_i \le y'^{(0)}_i \le \; 1/2$ since $G \subseteq I$, and $2\cdot y'^{(\text{II})}_i \; \le \; 1$.
		
		\smallskip
		
	\item
		Consider the constraint~\ref{LP-CFL-outliers-M2}.
		
		For any $i \in G$, let $k \in H$ be the outlier client such that $i \in B(k)$. 
		Applying the definition of $\bm{g}$ and 
		Corollary~\ref{cor-scaling-factor}, we have 
		\begin{align*}
		\sum_{w \in U} g_{i,w} \; 
		& = \; g_{i,w(k)} \; = \; \sum_{\ell \in \MC{D} \cup H} t'_\ell \cdot x'^{(\text{II})}_{i,\ell} \; \le \; \sum_{\ell \in \MC{D} \cup H} 2\cdot x'^{(\text{II})}_{i,\ell} \; \le \; 2\cdot u_i \cdot y'^{(\text{II})}_i,
		\end{align*}
		where in the last inequality we apply the conclusion of Corollary~\ref{cor-feasibility-cons} which states that constraint~\ref{cons-P-2-with-H} holds for $i$ when $i$ is removed from $F'$ in the first stage of the rounding process.
		%
\end{itemize}
This proves the lemma.
\end{proof}

%


\paragraph*{The unbundled assignment $\bm{h}$ from $\bm{x}''$.}

Consider the basic optimal solution $(\bm{x}'', \bm{y}'')$ for~\ref{LP-outliers}.
In the following, we unbundle the assignment $\bm{x}''$ as assignment function $h$ for the original clients in $\MC{D} \cup H$.

\smallskip

For each $i \in G$ and $j \in \MC{D} \cup H$, define the unbundled assignment $h_{i,j}$ as
$$h_{i,j} \hspace{4pt} := \hspace{4pt} \sum_{w \in U} \; x''_{i,w} \cdot \frac{1}{d_w} \cdot \sum_{k \in H(w), \; i' \in B(k)} t'_j \cdot x'^{(\text{II})}_{i',j}.$$
%
Intuitively, in $h$ we redistribute the assignment $\bm{x}''$ back for the original clients in $\MC{D} \cup H$ proportionally.
It follows that for any $j \in \MC{D} \cup H$, 
\begin{align}
\sum_{i \in G} \; h_{i,j} \;\; 
& = \;\; \sum_{ i \in G, \; w \in U} \; x''_{i,w} \cdot \frac{1}{d_w} \cdot \sum_{k \in H(w), \; i' \in B(k)} t'_j \cdot x'^{(\text{II})}_{i',j}  \notag \\[6pt]
& = \;\; \sum_{w \in U} \; \sum_{k \in H(w), \; i' \in B(k)} t'_j \cdot x'^{(\text{II})}_{i',j} 
\;\; = \;\; \sum_{i \in G} \; t'_j \cdot x'^{(\text{II})}_{i,j},
\label{ieq-unbundle-fully-assigned}
\end{align}
where in the second equality we apply 
the first constraint of~\ref{LP-outliers} 
and in the last equality we use the fact that the set of satellite facilities for each $j \in H$ forms a partition of $G$.
%


%
%
%
%
%
%
%
%
%
%

\medskip

%

%

%


\subsubsection{The Rounded Assignment}
\label{par-proof-approx-cfl-cfc-feasibility-xc}

Provided the above, the intermediate assignment $\bm{x}^\circ$ for each $j \in \MC{D}$ is defined as
%
%
%
%
$$x^\circ_{i,j} \;\; := \;\; \begin{cases}
\; x'^{(0)}_{i,j}, & \text{if $i \in U$,} \\[3pt]
\; t'_j \cdot x^*_{i,j} \; + \; \sum_{k \in H(j)} x^*_{i,k}, \enskip & \text{if $i \in F^*_{D'}$,} \\[3pt]
\; h_{i,j} \; + \; \sum_{k \in H(j)} h_{i,k}, & \text{if $i \in G$,} \\[2pt]
\; 0, & \text{otherwise}.
\end{cases}$$
Intuitively, the assignment of each $j \in \MC{D}$ in $\bm{x}^\circ$ consists of its original assignments to $U$ and the rounded assignments for clients in $\{j\} \cup H(j)$ to facilities in $F^*_{D'} \cup G$.

\smallskip

The following lemma, which asserts the feasibility of $\bm{x}^\circ$, is 
straightforward to verify. 
%

\begin{lemma} \label{lemma-cfl-ufc-feasibility}
%
$(\bm{x}^\circ, \bm{y}^*)$ is feasible for~\ref{LP-natural-CFL} on the input instance $\Psi$.
\end{lemma}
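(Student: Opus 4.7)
The plan is to verify the four groups of constraints~\ref{cons-P-1}--\ref{cons-P-4} of~\ref{LP-natural-CFL} for the pair $(\bm{x}^\circ, \bm{y}^*)$ in turn. Non-negativity and~\ref{cons-P-4} are immediate: every $y^*_i$ is either $0$, $1$, or $\lceil y''_i \rceil$, and the last is bounded by~$1$ using~\ref{LP-CFL-outliers-M3}. Once~\ref{cons-P-1} is established with equality, \ref{cons-P-3} follows at once from $x^\circ_{i,j} \ge 0$: each $x^\circ_{i,j}$ is bounded by $\sum_{i'} x^\circ_{i',j} = 1$, and by the definition of $\bm{x}^\circ$ it is nonzero only at indices $i$ where $y^*_i = 1$.

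For~\ref{cons-P-1}, I plan to unfold the four-block definition of $\bm{x}^\circ$ restricted to $j$ and collapse the $G$-contributions via equation~(\ref{ieq-unbundle-fully-assigned}), which gives $\sum_{i \in G} h_{i,k} = t'_k \cdot \sum_{i \in G} x'^{(\text{II})}_{i,k}$ for every $k \in \MC{D} \cup H$. Plugging in the defining identity of $t'_j$ (which yields $1 - \sum_{i \in U} x'_{i,j} - r'_j$) and, for each outlier $k \in H(j)$, using $t'_k = 1$ together with the fourth bullet of Lemma~\ref{lemma-assignment-absorbing-process} (which gives $\sum_{i \in F^*_{D'}} x^*_{i,k} + \sum_{i \in G} x'^{(\text{II})}_{i,k} = d_k$), the sum telescopes to $1 - r'_j + \sum_{k \in H(j)} d_k$. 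The construction of the outlier demands guarantees $\sum_{k \in H(j)} d_k = r'_j$ whenever any outlier is created, closing the argument. A short case check covers $j \in J^{(U)}$ and the clients for which no outlier was ever created -- in the latter, $r'_j$ is taken to be~$0$ and the LP feasibility of $(\bm{x}', \bm{y}')$ completes the inequality.

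The bulk of the work is~\ref{cons-P-2}, split by the location of $i$. For $i \in U$, the bound is inherited directly from the feasibility of $(\bm{x}', \bm{y}')$ for~\ref{LP-natural-CFL}. For $i \in G$, interchanging summations in the definition of $\bm{h}$ and applying~\ref{LP-CFL-outliers-M1} in the form $\sum_{i' \in G} x''_{i',w} = d_w$ yields $\sum_{j \in \MC{D}} x^\circ_{i,j} = \sum_{w \in U} x''_{i,w}$, which is at most $u_i y''_i \le u_i y^*_i$ by~\ref{LP-CFL-outliers-M2}. Indices $i$ outside $U \cup F^*_{D'} \cup G$ carry no mass in $\bm{x}^\circ$ and are trivial. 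The critical case is $i \in F^*_{D'}$, where I plan to bound
\begin{equation*}
\sum_{j \in \MC{D}} x^\circ_{i,j} \; = \; \sum_{j \in \MC{D}} t'_j \cdot x^*_{i,j} \; + \; \sum_{k \in H} x^*_{i,k} \; \le \; 2 \sum_{\ell \in \MC{D} \cup H} x^*_{i,\ell} \; \le \; u_i,
\end{equation*}
by combining $t'_j \le 2$ from Corollary~\ref{cor-scaling-factor} with the sparsely-loaded guarantee $\sum_{\ell \in \MC{D} \cup H} x^*_{i,\ell} \le u_i/2$ of Lemma~\ref{lemma-feasibility-absorbing-process}. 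The main obstacle is precisely this last step: the factors $2$ and $1/2$ multiply to \emph{exactly}~$u_i$, so the argument has no slack and both estimates must be tight simultaneously -- this is ultimately what forces the outlier-redistribution threshold in the algorithm to be set at~$1/2$.
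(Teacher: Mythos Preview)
Your proposal is correct and follows essentially the same route as the paper: capacity for $i\in F^*_{D'}$ via Corollary~\ref{cor-scaling-factor} and Lemma~\ref{lemma-feasibility-absorbing-process}, capacity for $i\in G$ via~\ref{LP-CFL-outliers-M2}, and full assignment by unfolding $\bm{x}^\circ$, collapsing $\bm{h}$ through~(\ref{ieq-unbundle-fully-assigned}), and invoking the fourth bullet of Lemma~\ref{lemma-assignment-absorbing-process} together with $\sum_{k\in H(j)} d_k = r'_j$. One small correction: in the $i\in G$ capacity check, the identity $\sum_{j\in\MC{D}} x^\circ_{i,j}=\sum_{w\in U} x''_{i,w}$ follows from the \emph{definition} of $d_w$ (you are summing over clients $j$, not over facilities), not from~\ref{LP-CFL-outliers-M1}; the latter is what drives~(\ref{ieq-unbundle-fully-assigned}), where the sum is over $i\in G$.
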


\begin{proof}
%
%
Since $\bm{y^*}$ is already integral and takes values only from $\{0, 1\}$, it suffices to show that $\bm{x}^\circ$ fully-assigns each $j \in \MC{D}$ and respects the capacity constraints given by $\bm{y}^*$.

\smallskip

For the latter part, since $\bm{x}^\circ$ keeps the assignments of $\MC{D}$ to $U$ unchanged, it suffices to examine the assignments to $F^*_{D'} \cup G$.
By the definition of $\bm{x}^\circ$, Corollary~\ref{cor-scaling-factor}, and Lemma~\ref{lemma-feasibility-absorbing-process}, 
for any $i \in F^*_{D'}$, we have $$\sum_{j \in \MC{D}} x^\circ_{i,j} \; = \; \sum_{j \in \MC{D}} t'_j \cdot x^*_{i,j} \; + \; \sum_{j \in H} x^*_{i,j} \; \le \; \sum_{j \in \MC{D} \cup H} 2\cdot x^*_{i,j} \; \le \; u_i \; = \; u_i \cdot y^*_i.$$
%
Similarly, for any $i \in G$, applying the definition of $\bm{x}^\circ$ and $\bm{h}$, we have 
\begin{align*}
\sum_{j \in \MC{D}} \; x^\circ_{i,j} \; 
& = \; \sum_{j \in \MC{D} \cup H} h_{i,j} \; = \; \sum_{j \in \MC{D} \cup H} \; \sum_{w \in U} x''_{i,w} \cdot \frac{1}{d_w} \cdot \sum_{k \in H(w), \; i' \in B(k)} t'_j \cdot x'^{(\text{II})}_{i',j} \\[4pt]
& = \; \sum_{w \in U} \; x''_{i,w} \cdot \frac{1}{d_w} \cdot \sum_{k \in H(w), \; i' \in B(k)} \; \sum_{j \in \MC{D} \cup H} t'_j \cdot x'^{(\text{II})}_{i',j}
\; = \; \sum_{w \in U} x''_{i,w} \; \le \; u_i \cdot y''_i,
\end{align*}
where in the second last equality we apply the definition of $d_w$ for any $w \in U$ and in the last inequality we use the fact that $(\bm{x}'', \bm{y}'')$ is feasible for~\ref{LP-CFL-outliers}.

\smallskip

Next, we show that $\bm{x}^\circ$ fully-assigns each $j \in \MC{D}$.
It suffices to prove for the clients in $J^{(I)} \cup J^{(\leftrightarrow)}$.
For the former case, for any $j \in J^{(I)}$, we have $\sum_{i \in U} x'^{(0)}_{i,j} = 0$ and $H(j) = \emptyset$.
%
\begin{align*}
\text{Hence,} \quad \sum_{i \in \MC{F}} x^\circ_{i,j} \;
& = \; \sum_{i \in F^*_{D'}} t'_j \cdot x^*_{i,j} \; + \; \sum_{i \in G} \; h_{i,j} \; \\
& \hspace{-20pt} = \; \sum_{i \in F^*_{D'}} t'_j \cdot x^*_{i,j} \; + \; \sum_{i \in G} \; t'_j \cdot x'^{(\text{II})}_{i,j} \; = \; t'_j \cdot \left( \; \sum_{i \in F^*_{D'}} x^*_{i,j} \; + \; \sum_{i \in G} x'^{(\text{II})}_{i,j} \; \right) \; = \; 1,
\end{align*}
where 
in the second equality we apply Equality~(\ref{ieq-unbundle-fully-assigned}), and in the last equality we apply the definition of $t'_j$ with the fact that $\sum_{i \in U} x'^{(0)}_{i,j} = r'_j = 0$.

\smallskip

For $j \in J^{(\leftrightarrow)}$, we have
\begin{align}
\sum_{i \in \MC{F}} x^\circ_{i,j} \; = \; \sum_{i \in U} x'^{(0)}_{i,j} + \sum_{i \in F^*_{D'}} \left( \; t'_j \cdot x^*_{i,j} + \sum_{k \in H(j)} x^*_{i,k} \; \right) + \sum_{i \in G} \left( \; h_{i,j} + \sum_{k \in H(j)} h_{i,k} \; \right)
\label{ieq-j-lr-fully-assign-1}
\end{align}
Applying Equality~(\ref{ieq-unbundle-fully-assigned}) and the definition of $t'_k$ for $k \in H(j)$, we have
\begin{align}
\sum_{i \in G} \left( \; h_{i,j} \; + \; \sum_{k \in H(j)} h_{i,k} \; \right) \;\; = \;\; \sum_{i \in G} \; t'_j \cdot x'^{(\text{II})}_{i,j} \; + \; \sum_{k \in H(j), \; i \in G} x'^{(\text{II})}_{i,k}
\label{ieq-j-lr-fully-assign-2}
\end{align}
Combining~(\ref{ieq-j-lr-fully-assign-1}) and~(\ref{ieq-j-lr-fully-assign-2}), we have
\begin{align}
\sum_{i \in \MC{F}} \; x^\circ_{i,j} \; 
& = \; \sum_{i \in U} x'^{(0)}_{i,j} + \sum_{i \in F^*_{D'}} t'_j \cdot x^*_{i,j} + \sum_{i \in G} t'_j \cdot x'^{(\text{II})}_{i,j} + \sum_{k \in H(j)} \left( \; \sum_{i \in F^*_{D'}} x^*_{i,k} + \sum_{i \in G} x'^{(\text{II})}_{i,k} \; \right) \notag \\[6pt]
& = \;\; \sum_{i \in U} x'^{(0)}_{i,j} \; + \; \sum_{i \in F^*_{D'}} t'_j \cdot x^*_{i,j} \; + \; \sum_{i \in G} t'_j \cdot x'^{(\text{II})}_{i,j} \; + \; \sum_{k \in H(j)} \; \sum_{i \in I} x'^{(0)}_{i,k},
\label{ieq-j-lr-fully-assign-3}
\end{align}
where in the last equality we apply the conclusion of Lemma~\ref{lemma-assignment-absorbing-process}.
By the construction of outlier clients, each $k \in H(j)$ is fully-assigned to facilities in $I$ by $\bm{x}'$.
Hence, we have $\sum_{i \in I} x'^{(0)}_{i,k} = d_k$ for each $k \in H(j)$.
Further applying the definition of $d_k$, 
we have
\begin{align}
\sum_{k \in H(j)} \; \sum_{i \in I} x'^{(0)}_{i,k} \; = \; \sum_{k \in H(j)} d_k \; = \; \sum_{k \in H(j)} r'_j \cdot \frac{x'^{(0)}_{w(k),j}}{\sum_{i \in U} x'^{(0)}_{i,j}} \; = \; r'_j,
\label{ieq-j-lr-fully-assign-4}
\end{align}
where the last equality follows from the fact that exactly one outlier client is created for each $i \in U$ with $x'^{(0)}_{i,j} > 0$.
Combining~(\ref{ieq-j-lr-fully-assign-3}) and~(\ref{ieq-j-lr-fully-assign-4}) and applying the definition of $t'_j$, we have
$$\sum_{i \in \MC{F}} \; x^\circ_{i,j} \;\; 
= \;\; \sum_{i \in U} x'^{(0)}_{i,j} \; + \; \sum_{i \in F^*_{D'}} t'_j \cdot x^*_{i,j} \; + \; \sum_{i \in G} t'_j \cdot x'^{(\text{II})}_{i,j} \; + \; r'_j \;\; = \;\; 1.$$
This completes the proof of this lemma.
\end{proof}

%

%

%


\subsection{Approximation Guarantee}
\label{subsec-proof-approx-cfl-cfc-guarantee}

%


%
In the following we establish the $4$-approximation guarantee for $(\bm{x}^\circ, \bm{y}^*)$.
%
We consider the cost incurred by clusters in $\MC{C}_{H'}$ and $\MC{C}_{D'}$ separately in Section~\ref{par-proof-approx-cfl-cfc-cost-ch} and Section~\ref{par-proof-approx-cfl-cfc-cost-cd}.
In Section~\ref{par-proof-approx-cfl-cfc-overall-cost} we establish the overall guarantee.
%
%
%
%

\smallskip

Recall that we use $p(j)$ for $j \in H$ to denote the client in $\MC{D}$ from which $j$ is created.
In the following, we extend the definition and define $p(k) := k$ for any $k \in \MC{D}$ for brevity.
%

\smallskip

Moreover, for any assignment $\bm{x}$ of interest, we will use $\bm{x}|_{A,B}$ to denote the assignments made in $\bm{x}$ between $A \subseteq \MC{F}$ and $B \subseteq \MC{D} \cup H$.
Similarly, for any multiplicity function $\bm{y}$ of interest, we will use $\bm{y}|_A$ to denote the multiplicity of facilities in $A \subseteq \MC{F}$ in $\bm{y}$.
%
%
%
%
%
%
%

%


\subsubsection{The clusters in $\MC{C}_{H'}$}
\label{par-proof-approx-cfl-cfc-cost-ch}

%
%
The following lemma, which regards the assignment radius of the outlier clients in $H$, follows directly from the algorithm design and triangle inequality.

\begin{restatable}{lemma}{lemmaboundalphadist}
\label{lemma-bound-alpha-dist}
For any $j \in H$ and $i \in G$ such that $x'^{(\text{II})}_{i,j} > 0$, we have $c_{i,j} \; \le \; \alpha_j$.
\end{restatable}

\begin{proof}
By the algorithm design, the outlier client $j$ is created and assigned to $i$ in $\bm{x}'^{(0)}$ only when $x'^{(0)}_{i,p(j)} > 0$.
This implies that $c_{i,p(j)} \; \le \; \alpha_{p(j)}$ by complementary slackness condition.
By triangle inequality and the definition of $\alpha_j$, it follows that 
$$c_{i,j} \; = \; c_{i,w(j)} \; \le \; c_{w(j),p(j)} \; + \; c_{i,p(j)} \; \le \; c_{w(j),p(j)} \; + \; \alpha_{p(j)} \; = \; \alpha_j.$$
\end{proof}

In the following lemma, we bound the overall assignment cost in $\left.\bm{x}^\circ\right|_{G,\MC{D}}$ in terms of that in $\bm{x}''$ and $\vphantom{\text{\Large T}_\text{\Large T}} \left.\bm{x}'^{(\text{II})}\right|_{G,\MC{D}\cup H}$.
The proof follows from the way the clusters in $\MC{C}_{H'}$ are formed and the way the assignments are bundled.
%

\begin{restatable}{lemma}{lemmaboundoutlierlpreroutingcost}
\label{lemma-bound-outlier-lp-rerouting-cost}
%
$$\sum_{i \in G, \; j \in \MC{D}} c_{i,j} \cdot x^\circ_{i,j} \; \le \; \sum_{i \in G, \; j \in U} c_{i,j} \cdot x''_{i,j} \; + \; \sum_{i \in G} \; \sum_{j \in \MC{D} \cup H} t'_j \cdot \left( \; c_{i,p(j)} + \alpha_j \; \right) \cdot x'^{(\text{II})}_{i,j}.$$
\end{restatable}

\begin{proof}
%
%

\smallskip

By the definition of $\bm{x}^\circ$ and $\bm{h}$, we have 
\begin{align}
\sum_{i \in G, \; j \in \MC{D}} c_{i,j} \cdot x^\circ_{i,j} \;\; 
& = \;\; \sum_{i \in G, \; j \in \MC{D}} \; \sum_{k \in \{j\} \cup H(j)} c_{i,j} \cdot h_{i,k} \;\; = \;\; \sum_{i \in G} \; \sum_{j \in \MC{D} \cup H} c_{i,p(j)} \cdot h_{i,j}  \notag \\[6pt]
& = \;\; \sum_{i \in G} \; \sum_{j \in \MC{D} \cup H} \; c_{i,p(j)} \cdot \sum_{w \in U} \; x''_{i,w} \cdot \frac{1}{d_w} \cdot \sum_{k \in H(w), \; i' \in B(k)} t'_j \cdot x'^{(\text{II})}_{i',j}.
\label{ieq-bound-outlier-lp-rerouting-cost-1}
\end{align}
By triangle inequality, for any $i \in G$, $j \in \MC{D} \cup H$, $w \in U$, $k \in H(w)$, and $i' \in B(k)$ such that $x'^{(\text{II})}_{i',j} > 0$, we have
$$c_{i,p(j)} \; \le \; c_{i,w} \; + \; c_{i',w} \; + \; c_{i',p(j)} \; \le \; c_{i,w} \; + \; \alpha_k \; + \; c_{i',p(j)},$$
where the last inequality follows from Lemma~\ref{lemma-bound-alpha-dist} and the fact that $i' \in B(k)$ implies that $x'^{(\text{II})}_{i',k} > 0$.
See also Figure~\ref{fig-outlier-client-assignment} for an illustration.
%
%
%
%
%

\begin{figure*}[h]
\centering
\fbox{
\begin{minipage}{.4\textwidth}
\centering
\includegraphics[scale=1]{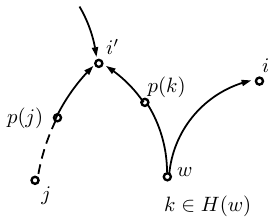}
\end{minipage}
}
\caption{An illustration on the bundled assignment from $w\in U$ to $i \in G$ and unbundled assignments for $k \in H(w)$, $i' \in B(k)$ such that $x'^{(\text{II})}_{i',j} > 0$.}
\label{fig-outlier-client-assignment}
\end{figure*}

\smallskip

Applying the above inequality on~(\ref{ieq-bound-outlier-lp-rerouting-cost-1}) with proper rearrangement, we obtain the following upper-bound for $\sum_{i \in G, \; j \in \MC{D}} c_{i,j} \cdot x^\circ_{i,j}$
\begin{align*}
& \sum_{i \in G, \; w \in U} c_{i,w} \cdot x''_{i,w} \cdot \frac{1}{d_w} \cdot \sum_{k \in H(w), \; i' \in B(k)} \; \sum_{j \in \MC{D}\cup H} t'_j \cdot x'^{(\text{II})}_{i',j} \\[4pt]
& \qquad\quad + \;\; \sum_{i \in G, \; w \in U} \; x''_{i,w} \cdot \frac{1}{d_w} \cdot \sum_{k \in H(w), \; i' \in B(k)} \; \sum_{j \in \MC{D} \cup H} \left( \; c_{i',p(j)} + \alpha_k \; \right) \cdot t'_j \cdot x'^{(\text{II})}_{i',j}.
\end{align*}
Applying the definition of $d_w$ on the former item and the fact that $\sum_{i \in G} x''_{i,w} = d_w$ 
on the latter item, the above becomes
\begin{align*}
\sum_{i \in G, \; w \in U} c_{i,w} \cdot x''_{i,w} \; + \; \sum_{w \in U} \; \sum_{k \in H(w), \; i' \in B(k)} \; \sum_{j \in \MC{D} \cup H} \left( \; c_{i',p(j)} + \alpha_k \; \right) \cdot t'_j \cdot x'^{(\text{II})}_{i',j}. 
\end{align*}
By the design of the algorithm, for any $w \in U, \; k \in H(w), \; i' \in B(k)$, and for any $j \in \MC{D} \cup H$ with $x'^{(\text{II})}_{i',j} > 0$, we have $\alpha_k \le \alpha_j$, since $k$ is selected as cluster center because of having the smallest $\alpha$ value.
Therefore, the above is further upper-bounded by
\begin{align*}
& \sum_{i \in G, \; w \in U} c_{i,w} \cdot x''_{i,w} \; + \; \sum_{w \in U} \; \sum_{k \in H(w), \; i' \in B(k)} \; \sum_{j \in \MC{D} \cup H} \left( \; c_{i',p(j)} + \alpha_j \; \right) \cdot t'_j \cdot x'^{(\text{II})}_{i',j}.
\end{align*}
Applying the fact that the satellite facilities of clusters in $\MC{C}_{H'}$ forms a partition of $G$, the above is exactly
\begin{align*}
\sum_{i \in G, \; j \in U} c_{i,j} \cdot x''_{i,j} \; + \; \sum_{i \in G} \; \sum_{j \in \MC{D} \cup H} t'_j \cdot \left( \; c_{i,p(j)} + \alpha_j \; \right) \cdot x'^{(\text{II})}_{i,j}.
\end{align*}
%
%
\end{proof}

%

%
%
%

\smallskip

\noindent
The following two lemmas
bound the overall cost incurred by $\left.\bm{y}^*\right|_{G}$ and $\bm{x}''$ by the cost of $\left.\bm{y}'^{(\text{II})}\right|_{G}$, $\left.\bm{y}'^{(0)}\right|_{U}$, and $\left.\bm{x}'^{(\text{II})}\right|_{G, \MC{D}\cup H}$.
%
%

\begin{lemma} \label{lemma-bound-outlier-lp-assignment-cost}
We have
$$\sum_{i \in G} \left\lceil y''_i \right\rceil \; + \sum_{i \in G, \; j \in U} c_{i,j} \cdot x''_{i,j} \hspace{4pt} \le \hspace{4pt} 2 \cdot \sum_{i \in G}y'^{(\text{II})}_i \; + \; |L| \; + \sum_{i \in G} \; \sum_{j \in \MC{D} \cup H} t'_j \cdot \alpha_j \cdot x'^{(\text{II})}_{i,j} ,$$
where $L := \left\{ \; \vphantom{\text{\Large T}} i \in G \; \colon \; 0 < y''_i < 1 \; \right\}$.
\end{lemma}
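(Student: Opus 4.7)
The plan is to pivot on the LP-optimality of $(\bm{x}'', \bm{y}'')$ for \ref{LP-CFL-outliers} and combine it with the feasibility of a specific ``canonical'' solution provided by Lemma~\ref{lemma-assignment-lp-feasibility}. Since $\left( \hspace{1pt} \bm{g}|_{G,U}, \hspace{1pt} 2\bm{y}'^{(\text{II})}|_{G} \hspace{1pt} \right)$ is feasible for \ref{LP-CFL-outliers}, optimality of $(\bm{x}'', \bm{y}'')$ immediately yields
$$\sum_{i \in G} y''_i \; + \; \sum_{i \in G, \; j \in U} c_{i,j} \cdot x''_{i,j} \;\; \le \;\; 2 \sum_{i \in G} y'^{(\text{II})}_i \; + \; \sum_{i \in G, \; j \in U} c_{i,j} \cdot g_{i,j}.$$
So it suffices to pass from $\bm{y}''$ to $\lceil \bm{y}'' \rceil$ and to bound the last term on the right by $\sum_{i \in G} \sum_{j \in \MC{D} \cup H} t'_j \cdot \alpha_j \cdot x'^{(\text{II})}_{i,j}$.

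For the rounding-up step, I would observe that $\lceil y''_i \rceil - y''_i = 0$ for each $i \in G \setminus L$ (since such $i$ have $y''_i \in \{0,1\}$) and $\lceil y''_i \rceil - y''_i \le 1$ for each $i \in L$, giving $\sum_{i \in G} \lceil y''_i \rceil \le \sum_{i \in G} y''_i + |L|$. This is the easy part.

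For the assignment term, I would expand $g_{i,w}$ by its definition and use that the sets $B(k)$ for $k \in H$ partition $G$, rewriting
$$\sum_{i \in G, \; j \in U} c_{i,j} \cdot g_{i,j} \;\; = \;\; \sum_{k \in H} \; \sum_{i' \in B(k)} c_{i',w(k)} \cdot \sum_{j \in \MC{D} \cup H} t'_j \cdot x'^{(\text{II})}_{i',j}.$$
For each cluster center $k$ and each $i' \in B(k)$, I would use that $k$ is located at $w(k)$ (so $c_{i',w(k)} = c_{i',k}$) together with Lemma~\ref{lemma-bound-alpha-dist} to obtain $c_{i', w(k)} \le \alpha_k$. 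Then I would apply the cluster-center selection rule $\alpha_k \le \alpha_j$ for all $j \in \MC{D} \cup H$ that still contribute to $x'^{(\text{II})}_{i',\cdot}$, which yields exactly the target right-hand side after using the partition property of $\{B(k)\}_{k \in H}$ once more.

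The delicate step, and the main obstacle, is justifying $\alpha_k \le \alpha_j$ for every $j$ with $x'^{(\text{II})}_{i',j} > 0$ where $i' \in B(k)$. The ``minimum $\alpha$'' rule only compares $k$ to clients still in $D' \cup H'$ at the iteration when $k$ is chosen, so I would need to argue that if $j$ has been removed from $D' \cup H'$ before that iteration (either by being centered in an earlier cluster or by being discarded in the outlier-creation step) then $x'^{(\text{II})}_{i',j}$ must already be zero. This boils down to a careful case analysis tracking how the algorithm updates $\bm{x}'$: clients removed by the $\sum_{i \in F'} x'_{i,j} < 1/2$ rule have their $x'_{i,j}$ zeroed on $F'$; and when a client $j$ is centered, any facility $i' \in F'$ with $x'_{i',j} > 0$ would join $B(q_j)$, contradicting $i' \in B(k)$ since the $B$-sets partition $G$. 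Once this bookkeeping is in place, the assignment-cost bound falls out, and combining with the ceiling bound and the initial LP-feasibility inequality completes the proof.
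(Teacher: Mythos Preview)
Your proposal is correct and follows essentially the same approach as the paper: compare $(\bm{x}'',\bm{y}'')$ to the canonical feasible solution from Lemma~\ref{lemma-assignment-lp-feasibility}, add $|L|$ to pass to $\lceil \bm{y}''\rceil$, and bound the $\bm{g}$-cost by expanding along the partition $\{B(k)\}_{k\in H}$ and invoking $c_{i',w(k)}\le \alpha_k\le \alpha_j$. The paper asserts the last inequality with a one-line ``by the algorithm setting,'' whereas you spell out the necessary bookkeeping (that any $j$ already removed from $D'\cup H'$ has $x'^{(\text{II})}_{i',j}=0$); your case analysis there is accurate.
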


\begin{proof}
Since $(\bm{x}'', \bm{y}'')$ is optimal for~\ref{LP-CFL-outliers}, by Lemma~\ref{lemma-assignment-lp-feasibility}, the cost of $(\bm{x}'', \bm{y}'')$ is no more than that of $\left( \; \left.\bm{g}\right|_{G,U} \; , \; 2\left.\bm{y}'\right|_G \; \right)$.
Hence,
\begin{align*}
\sum_{i \in G} y''_i \; + \sum_{i \in G, \; j \in U} c_{i,j} \cdot x''_{i,j} \hspace{4pt} 
& \le \hspace{4pt} 2 \cdot \sum_{i \in G}y'^{(\text{II})}_i \; + \; \sum_{i \in G, \; j \in U} \; c_{i,j} \cdot g_{i,j} \\[6pt]
& = \hspace{4pt} 2 \cdot \sum_{i \in G}y'^{(\text{II})}_i \; + \; \sum_{\substack{i \in G, \; w \in U, \\[3pt] i \in B(k) \text{ for some } k \in H(w)}} \; c_{i,w} \cdot \sum_{\ell \in \MC{D} \cup H} t'_\ell \cdot x'^{(\text{II})}_{i,\ell}.
\end{align*}
By the algorithm setting and Lemma~\ref{lemma-bound-alpha-dist}, for any $i \in G$ and $w \in U$ such that $i \in B(k)$ for some $k \in H(w)$, i.e., facility $i$ belongs to some cluster centered at some $k \in H(w)$, we have $c_{i,w} \le \alpha_k \le \alpha_\ell$, for any $\ell \in \MC{D} \cup H$ with $x'^{(\text{II})}_{i,\ell} > 0$.
Therefore,
\begin{align*}
\sum_{i \in G} y''_i \; + \sum_{i \in G, \; j \in U} c_{i,j} \cdot x''_{i,j} \hspace{4pt} 
& \le \hspace{4pt} 2 \cdot \sum_{i \in G}y'^{(\text{II})}_i \; + \; \sum_{\substack{i \in G, \; w \in U, \\[3pt] i \in B(k) \text{ for some } k \in H(w)}} \; \sum_{\ell \in \MC{D} \cup H} t'_\ell \cdot \alpha_\ell \cdot x'^{(\text{II})}_{i,\ell} \\[6pt]
& = \hspace{4pt} 2 \cdot \sum_{i \in G}y'^{(\text{II})}_i \; + \; \sum_{i \in G, \; j \in \MC{D} \cup H} t'_j \cdot \alpha_j \cdot x'^{(\text{II})}_{i,j},
\end{align*}
where the last equality follows from the fact that the set of satellite facilities of clusters in $\MC{C}_{H'}$ forms a partition of $G$.
Applying the definition of $L$ competes the proof of this lemma.
\end{proof}

The following lemma follows from the fact that $(\bm{x}'', \bm{y}'')$ is a basic solution for~\ref{LP-CFL-outliers}.

\begin{restatable}{lemma}{lemmaboundoutlierlpbasicsolution}
\label{lemma-bound-outlier-lp-basic-solution}
%
$$|L| \; \le \; |U|, \quad \text{where } L := \left\{ \; \vphantom{\text{\LARGE T}} i \in G \; \colon \; 0 < y''_i < 1 \; \right\}.$$
\end{restatable}

\begin{proof}
%
%
%
Consider the set of constraints in~\ref{LP-CFL-outliers} that hold with equality at $(\bm{x}'', \bm{y}'')$, for which we denote by
$\MC{E}^{(=)}$ in the following.
Let $M_3$ and $M_4$ denote the set of constraints in $\MC{E}^{(=)}$ of the types~\ref{LP-CFL-outliers-M3} and~\ref{LP-CFL-outliers-M4}, respectively.
Formally,
$$M_3 := \BIGBP{ \vphantom{\dfrac{}{}}\hspace{3pt} i \hspace{2pt} \colon \hspace{2pt} \fbox{$\vphantom{{\dfrac{}{}}^{\bigcup}}$ $y_i \le 1$ } \in \MC{E}^{(=)} \hspace{3pt} }$$
and
$$M_4 := \BIGBP{ \vphantom{\dfrac{}{}}\hspace{3pt} (i,j) \hspace{2pt} \colon \hspace{2pt} \fbox{$\vphantom{{\dfrac{}{}}^{\bigcup}}$ $x_{i,j} \ge 0$ } \in \MC{E}^{(=)} \hspace{3pt} } \cup \BIGBP{ \vphantom{\dfrac{}{}}\hspace{3pt} i \hspace{2pt} \colon \hspace{2pt} \fbox{$\vphantom{{\dfrac{}{}}^{\bigcup}}$ $y_i \ge 0$ } \in \MC{E}^{(=)} \hspace{3pt} }.$$
Let $X$ be the number of variables in~\ref{LP-CFL-outliers}.
Since $(\bm{x}'', \bm{y}'')$ is a basic solution for~\ref{LP-CFL-outliers}, it follows that, the coefficient matrix of $\MC{E}^{(=)}$ is of full-rank, i.e., has rank $X$.
Since $M_3$ and $M_4$ are linearly independent, there exists a subset $\MC{E}' \subseteq \MC{E}^{(=)}$ of linearly independent constraints such that $M_3 \cup M_4 \subseteq \MC{E}'$ and $|\MC{E}'| = X$.

\smallskip

Let $\MC{E}'' := \MC{E}' \setminus ( M_3 \cup M_4)$ and modify the constraints in $\MC{E}''$ by setting the variable $y_i$ to be $1$ for all $i \in M_3$.
Similarly, modify $\MC{E}''$ by setting $x_{i,j}$ to be zero for all $(i,j) \in M_4$ and $y_i$ to be zero for all $i \in M_4$.
By doing so, we removed equally many constraints and variables from $\MC{E}'$.
Since $M_3$ and $M_4$ are linearly independent, it follows that $$\text{rank}(\MC{E}'') \; = \; \text{rank}(\MC{E}') - | M_3 \cup M_4|,$$
and the coefficient matrix of $\MC{E}''$ is still of full-rank.
%


%
$$\text{Let} \enskip M_1 := \BIGBP{ \vphantom{\dfrac{}{}}\hspace{3pt} j \hspace{2pt} \colon \hspace{2pt} \fbox{$\vphantom{{\dfrac{}{}}^{\bigcup}}$ $\sum_{i \in G} x_{i,j} = d_j$ } \hspace{3pt} } \enskip \text{and} \enskip M_2 := \BIGBP{ \vphantom{\dfrac{}{}}\hspace{3pt} i \hspace{2pt} \colon \hspace{2pt} \fbox{$\vphantom{{\dfrac{}{}}^{\bigcup}}$ $\sum_{j \in U} x_{i,j} \le u_i \cdot y_i$ } \in \MC{E}'' \hspace{3pt} } .$$
Also let $H := \left\{ \vphantom{\text{\Large T}} \; (i,j) \; \colon \; x''_{i,j} \neq 0 \; \right\}$.
It follows by the above setting that, $L \cup H$ corresponds exactly to the set of variables in $\MC{E}''$.
Since the coefficient matrix of $\MC{E}''$ has full rank,
the pivot in each row of the matrix defines a one-to-one mapping $\phi \colon L \cup H \rightarrow M_1 \cup M_2$ between the variables and the constraints.

\smallskip

Consider each $i \in L$.
Since the variable $y_i$ appears exactly in one constraint in $M_2$, the mapping must map $y_i$ to the constraint it corresponds to, i.e., $\phi(i) = i$.
Since the constraint $i$ corresponds to in $M_2$ contributes one rank, it is non-degenerated and contains at least one variable in $H$.
Let $x_{i,j}$ be one such variable.
Since $x_{i,j}$ appears in exactly two constraints, i.e., in the one $i$ corresponds to in $M_2$ and the one $j$ corresponds to in $M_1$, and since $\phi(i) = i$, it follows that $\phi\left( (i,j) \right) = j$.
Since the mapping $\phi$ is one-to-one, $j$ cannot be mapped to by other pairs.

\smallskip

Applying the above argument for each $i \in L$ results in a set consisting of distinct clients $j$ from $U$ with the same cardinality.
This shows that $|L| \le |U|$.
\end{proof}

\medskip

Applying Lemma~\ref{lemma-bound-outlier-lp-rerouting-cost}, Lemma~\ref{lemma-bound-outlier-lp-assignment-cost}, Lemma~\ref{lemma-bound-outlier-lp-basic-solution}, and the fact that $y'^{(0)}_i \ge 1/2$ for all $i \in U$, we obtain the following bound for the cost incurred by $\left( \left.\bm{x}^\circ\right|_{G,\MC{D}}, \left.\bm{y}^*\right|_{G} \right)$.
\begin{align}
\sum_{i \in G} y^*_i \; + \sum_{i \in G,\; j \in \MC{D}} c_{i,j} \cdot x^\circ_{i,j} \hspace{4pt} \;\;
& \le \;\; \hspace{4pt} 2\cdot \sum_{i \in G} y'^{(\text{II})}_i \; + \; \sum_{i \in G, \; j \in H} t'_j \cdot \left( \; c_{i,p(j)} + 2 \cdot \alpha_j \; \right) \cdot x'^{(\text{II})}_{i,j} \notag \\[6pt]
& \hspace{15pt} \; + \; 2 \cdot \sum_{i \in U} y'^{(0)}_i \; + \; \sum_{i \in G, \; j \in \MC{D}} t'_j \cdot \left( \; c_{i,j} + 2 \cdot \alpha_j \; \right) \cdot x'^{(\text{II})}_{i,j}  \notag \\[8pt]
& \hspace{-40pt} \le \;\; \hspace{4pt} 2\cdot \sum_{i \in G} y'^{(\text{II})}_i \; + \; \sum_{i \in G, \; j \in H} \left( \; c_{i,p(j)} + 2 \cdot \alpha_j \; \right) \cdot x'^{(\text{II})}_{i,j} \notag \\[6pt]
& \hspace{-40pt} \hspace{15pt} \; + \; 2 \cdot \sum_{i \in U} y'^{(0)}_i \; + \; \sum_{i \in G, \; j \in \MC{D}} \left( \; 2\cdot c_{i,j} + 2 \cdot t'_j \cdot \alpha_j \; \right) \cdot x'^{(\text{II})}_{i,j},
\label{ieq-bound-cluster-h}
\end{align}
where in the last inequality we apply Corollary~\ref{cor-scaling-factor} the fact that $t'_j \le 2$ for all $j \in \MC{D}$ and the definition that $t'_j = 1$ for all $j \in H$.


%
%

%
%


\medskip


\subsubsection{The clusters in $\MC{C}_{D'}$}
\label{par-proof-approx-cfl-cfc-cost-cd}

%
Consider the cost incurred by the clusters in $\MC{C}_{D'}$.
The following lemma bounds the cost incurred by the rounding process for each individual cluster $q$ in $\MC{C}_{D'}$.

\begin{restatable}{lemma}{lemmacostclusterdprime}
\label{lemma-cost-cluster-d-prime}
For any $q \in \MC{C}_{D'}$, we have
\begin{align*}
\text{(i)} \; & & y^*_{i(q)} \; & \le \;\; 2 y'^{(q)}_{i(q)} \; + \; 2\delta_{i(q)} \cdot \sum_{k \in B(q) \setminus \{i(q)\} } y'^{(q)}_k \enskip , \quad \text{and} \\[6pt]
\text{(ii)} \;\; & & \sum_{j \in \MC{D}} c_{i(q),j} \cdot x^\circ_{i(q),j} \; 
& \le \;\; \sum_{j \in D} 2\cdot c_{i(q),j} \cdot x'^{(q)}_{i(q),j} \; + \; \sum_{j \in H} c_{i(q),p(j)} \cdot x'^{(q)}_{i(q),j} \\[2pt]
& & & \hspace{-20pt} + \; \delta_{i(q)} \cdot \sum_{k \in B(q) \setminus \{i(q)\}} \; \left( \; \sum_{\; j \in \MC{D}} \; 2\cdot c_{k,j} \cdot x'^{(q)}_{k,j} \; + \; \sum_{\; j \in H} c_{k,p(j)} \cdot x'^{(q)}_{k,j} \; \right) \\[12pt]
& & & \hspace{40pt} + \; \sum_{j \in \MC{D}} \; 2 \cdot t'_j \cdot \alpha_j \cdot x^*_{i(q),j} \; + \; \sum_{j \in H} \; 2 \cdot \alpha_j \cdot x^*_{i(q),j}.
\end{align*}
\end{restatable}

\begin{proof}
The lemma follows directly from the rounding process for clusters in $\MC{C}_{D'}$.
%
%
%
%
Consider the iteration for which $q$ is formed and ready to be rounded.
%
%
By the algorithm design, the total facility value that has been removed from $F'$ due to the rounding process for $q$ is
\begin{align*}
y'^{(q)}_{i(q)} \; + \; \delta_{i(q)} \cdot \sum_{ k \in B(q) \setminus \{i(q)\}} y'^{(q)}_k \; = \; \frac{1}{2} \; = \; \frac{1}{2} \cdot y^*_{i(q)},
\end{align*}
where in the first equality we apply the definition of $\delta_{i(q)}$.
Attributing the cost of $y^*_{i(q)}$ to the facility value that is removed from $F'$ due to cluster $q$ proves the first part of this lemma.

\smallskip

For the second part, by the definition of $\bm{x}^\circ$ and the way how the algorithm reroutes the assignments from facilities in $B(q) \setminus \{i(q)\}$ to $i(q)$, we have
\begin{align}
\sum_{j \in \MC{D}} c_{i(q),j} \cdot x^\circ_{i(q),j} \;\;
& = \;\; \sum_{j \in \MC{D}} c_{i(q),j} \cdot \left( \; t'_j \cdot x^*_{i(q),j} \; + \; \sum_{\ell \in H(j)} x^*_{i(q),\ell} \; \right)  \notag \\[8pt]
& \hspace{-36pt} = \;\; \sum_{j \in \MC{D}} c_{i(q),j} \cdot t'_j \cdot x'^{(q)}_{i(q),j} \; + \; \sum_{\ell \in H} c_{i(q),p(\ell)} \cdot x'^{(q)}_{i(q),\ell}  \notag \\[2pt]
& \hspace{-30pt} \hspace{2pt}  + \; \delta_{i(q)} \cdot \sum_{k \in B(q) \setminus \{i(q)\}} \left( \; \sum_{j \in \MC{D}} c_{i(q),j} \cdot t'_j \cdot x'^{(q)}_{k,j} \; + \; \sum_{\ell \in H} c_{i(q),p(\ell)} \cdot x'^{(q)}_{k,\ell} \; \right).
\label{ieq-cost-cluster-d-prime-1}
\end{align}
By the algorithm setting, for any $k \in B(q) \setminus \{i(q)\}$ and any $\ell \in H$ with $x'^{(q)}_{k,\ell} > 0$, we have 
$$c_{i(q),p(\ell)} \;\; \le \;\; c_{k,p(\ell)} \; + \; c_{k,j(q)} \; + \; c_{i(q),j(q)} \;\; \le \;\; c_{k,p(\ell)} \; + \; 2\alpha_{j(q)} \;\; \le \;\; c_{k,p(\ell)} \; + \; 2\alpha_\ell,$$
where in the second inequality we apply the fact that $i(q)$ and $k$ are in $B(q)$, which implies that $x'^{(q)}_{i(q),j(q)} > 0$, $x'^{(q)}_{k,j(q)} > 0$, and $\max\left( c_{i(q),j(q)}, c_{k,j(q)} \right) \le \alpha_{j(q)}$ by complementary slackness, and in the last inequality we apply the assumption that $x'^{(q)}_{k,\ell} > 0$, which implies that $\ell \in H'^{(q)}$ and $\alpha_{j(q)} \le \alpha_\ell$ by the way $j(q)$ is selected.
%
%
By a similar argument, we have $c_{i(q),j} \; \le \; c_{k,j} \; + \; 2\alpha_j$ for any $k \in B(q) \setminus \{i(q)\}$ and any $j \in \MC{D}$ with $x'^{(q)}_{k,j} > 0$.

\smallskip

By the above conclusion, Corollary~\ref{cor-scaling-factor}, and the way the assignment $\bm{x}^*$ is formed during the rounding process, 
we have
\begin{align*}
& \delta_{i(q)} \cdot \hspace{-4pt} \sum_{k \in B(q) \setminus \{i(q)\}, \; j \in \MC{D}} \hspace{-4pt} c_{i(q),j} \cdot t'_j \cdot x'^{(q)}_{k,j} \;\; \le \;\; \delta_{i(q)} \cdot \hspace{-4pt} \sum_{k \in B(q) \setminus \{i(q)\}, \; j \in \MC{D}} \hspace{-4pt} \left( c_{k,j} + 2\cdot \alpha_j \right) \cdot t'_j \cdot x'^{(q)}_{k,j}  \\[8pt]
& \hspace{2.6cm} \le \;\; \delta_{i(q)} \cdot \sum_{k \in B(q) \setminus \{i(q)\}, \; j \in \MC{D}} \; 2 \cdot c_{k,j} \cdot x'^{(q)}_{k,j} \; + \; \sum_{j \in \MC{D}} \;  2\cdot t'_j \cdot \alpha_j \cdot x^*_{i(q),j}.
\end{align*}
Similarly, we have
\begin{align*}
\hspace{40pt} \delta_{i(q)} \cdot \hspace{-6pt} \sum_{k \in B(q) \setminus \{i(q)\}, \; \ell \in H} c_{i(q),p(\ell)} \cdot x'^{(q)}_{k,\ell} \;\; \\[6pt]
& \hspace{-100pt} \le \;\; \delta_{i(q)} \cdot \hspace{-6pt} \sum_{k \in B(q) \setminus \{i(q)\}, \; \ell \in H} c_{k,p(\ell)} \cdot x'^{(q)}_{k,\ell} \; + \; \sum_{\ell \in H} \;  2\cdot \alpha_\ell \cdot x^*_{i(q),\ell}.
\end{align*}
Combining the above two inequalities with~(\ref{ieq-cost-cluster-d-prime-1}) and further applying Corollary~\ref{cor-scaling-factor}, we have
%
%
%
\begin{align*}
& \sum_{j \in \MC{D}} c_{i(q),j} \cdot x^\circ_{i(q),j} \;\; \le \;\; \sum_{j \in D} \; 2\cdot c_{i(q),j} \cdot x'^{(q)}_{i(q),j} \; + \; \sum_{\ell \in H} \; c_{i(q),p(\ell)} \cdot x'^{(q)}_{i(q),\ell} \\[2pt]
& \hspace{2cm} + \; \delta_{i(q)} \cdot \sum_{k \in B(q) \setminus \{i(q)\}} \; \left( \; \sum_{\; j \in \MC{D}} \; 2\cdot c_{k,j} \cdot x'^{(q)}_{k,j} \; + \; \sum_{\; \ell \in H} c_{k,p(\ell)} \cdot x'^{(q)}_{k,\ell} \; \right) \\[12pt]
& \hspace{80pt} + \; \sum_{j \in \MC{D}} \; 2 \cdot t'_j \cdot \alpha_j \cdot x^*_{i(q),j} \; + \; \sum_{\ell \in H} \; 2 \cdot \alpha_\ell \cdot x^*_{i(q),\ell}.
\end{align*}
\end{proof}

%

%

%
%
The following lemma, which bounds the cost incurred by $\left.\bm{x}^\circ\right|_{F^*_{D'},\MC{D}}$ and $\left.\bm{y}^*\right|_{F^*_{D'}}$, is obtained by taking summation on the cost given in Lemma~\ref{lemma-cost-cluster-d-prime} over all clusters in $\MC{C}_{D'}$. 
%

\begin{restatable}{lemma}{lemmacostclusterdprimesum}
\label{lemma-cost-cluster-d-prime-sum}
\begin{align}
\text{(i)} \; & & \sum_{i \in F^*_{D'}} y^*_i \;\; & \le \;\; 2\cdot \sum_{i \in I \setminus G} y'^{(0)}_i \; + \; 2\cdot \sum_{i \in G} \left( \; y'^{(0)}_i - y'^{(\text{II})}_i \hspace{1pt} \right) . \label{ieq-bound-cluster-d-p-facility} \\[8pt]
\text{(ii)} \; & & \sum_{i \in F^*_{D'},\; j \in \MC{D}} c_{i,j} \cdot x^\circ_{i,j} \;\; 
& \le \;\; \sum_{i \in F^*_{D'}, \; j \in \MC{D}} 2\cdot t'_j \cdot \alpha_j \cdot x^*_{i,j} \; + \; \sum_{i \in F^*_{D'}, \; j \in H} 2\cdot \alpha_j \cdot x^*_{i,j} \notag \\[2pt]
& & & \hspace{-90pt} + \sum_{i \in G, \; j \in \MC{D}} \hspace{-2pt} 2\cdot c_{i,j} \cdot \left( x'^{(0)}_{i,j} - \sum_{\ell \in H(j)} x'^{(0)}_{i,\ell} - x'^{(\text{II})}_{i,j} \right) 
+ \hspace{-2pt} \sum_{i \in G, \; j \in H} \hspace{-2pt} c_{i,p(j)} \cdot \left( x'^{(0)}_{i,j} - x'^{(\text{II})}_{i,j} \right)  \notag \\[4pt]
& & & \hspace{-80pt} \; + \; \sum_{i \in I \setminus G, \; j \in \MC{D}} 2\cdot c_{i,j} \cdot \left( \; x'^{(0)}_{i,j} - \sum_{\ell \in H(j)} x'^{(0)}_{i,\ell} \; \right) \; + \; \sum_{i \in I \setminus G, \; j \in H} c_{i,p(j)} \cdot x'^{(0)}_{i,j}. \label{ieq-bound-cluster-d-p-assignment}
\end{align}
%
%
\end{restatable}

\begin{proof}
Consider the first part of Lemma~\ref{lemma-cost-cluster-d-prime}. We have
\begin{align}
\sum_{i \in F^*_{D'}} y^*_i \;\; = \;\; \sum_{q \in \MC{C}_{D'}} y^*_{i(q)} \;\; \le \;\; \sum_{q \in \MC{C}_{D'}} \left( \; 2 y'^{(q)}_{i(q)} \; + \; 2\delta_{i(q)} \cdot \sum_{k \in B(q) \setminus \{i(q)\} } y'^{(q)}_k \; \right).
\label{ieq-cost-cluster-d-prime-sum-1}
\end{align}
%
%
By the design of the scaled-down operation for rounding each $q \in \MC{C}_{D'}$, we know that, for each $k \in I \cap D'^{(q)}$, the facility value $y'^{(q)}_k$ decreases exactly by $\delta_{i(q)} \cdot y'^{(q)}_k$ if $k \neq i(q)$ and $y'^{(q)}_k$ otherwise. 
%
%
Hence, by summing up the 
RHS of~(\ref{ieq-cost-cluster-d-prime-sum-1}) in a backward manner, we obtain
$$\sum_{i \in F^*_{D'}} y^*_i \;\; \le \;\; 2\cdot \sum_{i \in I \setminus G} y'^{(0)}_i \; + \; 2\cdot \sum_{i \in G} \left( \; y'^{(0)}_i - y'^{(\text{II})}_i \hspace{1pt} \right).$$
%
%

\medskip

The second part of this lemma follows from an analogous 
argument.
By the second part of Lemma~\ref{lemma-cost-cluster-d-prime}, we have
\begin{align}
& \sum_{i \in F^*_{D'}, \; j \in \MC{D}} c_{i,j} \cdot x^\circ_{i,j} \;\; = \;\; \sum_{q \in \MC{C}_{D'}} \; \sum_{j \in \MC{D}} \; c_{i(q),j} \cdot x^\circ_{i(q),j} \;  \notag \\[4pt]
& \hspace{1.6cm} \le \; \sum_{q \in \MC{C}_{D'}} \left( \; \sum_{j \in D} 2\cdot c_{i(q),j} \cdot x'^{(q)}_{i(q),j} \; + \; \sum_{j \in H} c_{i(q),p(j)} \cdot x'^{(q)}_{i(q),j} \; \right)  \notag \\[6pt]
& \hspace{2.2cm} + \; \sum_{q \in \MC{C}_{D'}} \delta_{i(q)} \cdot \sum_{k \in B(q) \setminus \{i(q)\}} \; \left( \; \sum_{\; j \in \MC{D}} \; 2\cdot c_{k,j} \cdot x'^{(q)}_{k,j} \; + \; \sum_{\; j \in H} c_{k,p(j)} \cdot x'^{(q)}_{k,j} \; \right)  \notag \\[6pt]
& \hspace{2.2cm} + \; \sum_{q \in \MC{C}_{D'}} \left( \;  \sum_{j \in \MC{D}} \; 2 \cdot t'_j \cdot \alpha_j \cdot x^*_{i(q),j} \; + \; \sum_{j \in H} \; 2 \cdot \alpha_j \cdot x^*_{i(q),j} \; \right).
\label{ieq-cost-cluster-d-prime-sum-2}
\end{align}
For the last item in~(\ref{ieq-cost-cluster-d-prime-sum-2}), we have
\begin{align*}
& \sum_{q \in \MC{C}_{D'}} \left( \;  \sum_{j \in \MC{D}} \; 2 \cdot t'_j \cdot \alpha_j \cdot x^*_{i(q),j} \; + \; \sum_{j \in H} \; 2 \cdot \alpha_j \cdot x^*_{i(q),j} \; \right) \\[8pt]
& \hspace{3cm} = \; \sum_{i \in F^*_{D'}, \; j \in \MC{D}} 2\cdot t'_j \cdot \alpha_j \cdot x^*_{i,j} \; + \; \sum_{i \in F^*_{D'}, \; j \in H} 2\cdot \alpha_j \cdot x^*_{i,j}
\end{align*}
by definition.
For the remaining items in the RHS of~(\ref{ieq-cost-cluster-d-prime-sum-2}), we apply the same argument and charge the cost to the assignment values decreased due to the scaled-down operation when rounding each $q \in \MC{C}_{D'}$.
Hence, the remaining items can be bounded by
\begin{align*}
& \sum_{i \in G, \; j \in \MC{D}} \hspace{-2pt} 2\cdot c_{i,j} \cdot \left( x'^{(0)}_{i,j} - \sum_{\ell \in H(j)} x'^{(0)}_{i,\ell} - x'^{(\text{II})}_{i,j} \right) 
+ \hspace{-2pt} \sum_{i \in G, \; j \in H} \hspace{-2pt} c_{i,p(j)} \cdot \left( x'^{(0)}_{i,j} - x'^{(\text{II})}_{i,j} \right)  \\[4pt]
& \hspace{14pt} \; + \; \sum_{i \in I \setminus G, \; j \in \MC{D}} 2\cdot c_{i,j} \cdot \left( \; x'^{(0)}_{i,j} - \sum_{\ell \in H(j)} x'^{(0)}_{i,\ell} \; \right) \; + \; \sum_{i \in I \setminus G, \; j \in H} c_{i,p(j)} \cdot x'^{(0)}_{i,j}.
\end{align*}
%
%
%
%
%
%
This proves the lemma.
\end{proof}

%

%


%


\subsubsection{The overall guarantee}
\label{par-proof-approx-cfl-cfc-overall-cost}

%
%
Combining Inequality~(\ref{ieq-bound-cluster-h}), Inequality~(\ref{ieq-bound-cluster-d-p-facility}), and Inequality~(\ref{ieq-bound-cluster-d-p-assignment}) with proper rearrangement of the items, 
we obtain  
%
\begin{align}
& \hspace{-12pt} \psi( \bm{x}^\circ, \bm{y}^* ) \; 
 = \; \psi \left( \left.\bm{x}^\circ\right|_{U,\MC{D}}, \left.\bm{y}^*\right|_{U} \right) \; + \; \psi \left( \left.\bm{x}^\circ\right|_{F^*_{D'},\MC{D}}, \left.\bm{y}^*\right|_{F^*_{D'}} \right) \; + \; \psi \left( \left.\bm{x}^\circ\right|_{G,\MC{D}}, \left.\bm{y}^*\right|_{G} \right)  \notag \\[4pt]
& \hspace{-4pt} \le \;\; 4\cdot \sum_{i \in U} y'^{(0)}_i \; + \; 2\cdot \sum_{i \in I} y'^{(0)}_i \; + \; \sum_{i \in U, \; j \in \MC{D}} c_{i,j} \cdot x'^{(0)}_{i,j}  \notag \\[2pt]
& \hspace{-4pt} + \sum_{i \in I, \; j \in \MC{D}} 2\cdot c_{i,j} \cdot \left( x'^{(0)}_{i,j} - \sum_{\ell \in H(j)} x'^{(0)}_{i,\ell} \right) \; + \; \sum_{i \in I, \; j \in H} c_{i,p(j)} \cdot x'^{(0)}_{i,j}  \notag \\[3pt]
& \hspace{-4pt} + \; \sum_{j \in \MC{D}} 2 \cdot t'_j \cdot \alpha_j \cdot \left( \sum_{i \in F^*_{D'}} x^*_{i,j} + \sum_{i \in G} x'^{(\text{II})}_{i,j} \right) \; + \; \sum_{j \in H} 2\cdot \alpha_j \cdot \left( \sum_{i \in F^*_{D'}} x^*_{i,j} + \sum_{i \in G} x'^{(\text{II})}_{i,j} \right) .
\label{ieq-bound-overall-1}
\end{align}
%
%

\smallskip
\smallskip

\noindent
%
Consider the item $\sum_{i \in I, \; j \in H} c_{i,p(j)} \cdot x'^{(0)}_{i,j}$ in~(\ref{ieq-bound-overall-1}).
%
%
We have
\begin{align}
\sum_{i \in I, \; j \in H} c_{i,p(j)} \cdot x'^{(0)}_{i,j} \; = \; \sum_{i \in I, \; j \in \MC{D}, \; \ell \in H(j)} c_{i,j} \cdot x'^{(0)}_{i,\ell}.
\label{ieq-bound-overall-2-1}
\end{align}
%
%

\smallskip

\noindent
%
Consider the last two items in~(\ref{ieq-bound-overall-1}).
Applying the definition of $t'_j$ for each $j \in \MC{D}$ with $\sum_{i \in F^*_{D'}} x^*_{i,j} + \sum_{i \in G} x'^{(\text{II})}_{i,j} > 0$, we have 
\begin{align}
\sum_{j \in \MC{D}} \; 2 \cdot t'_j \cdot \alpha_j \cdot \left( \; \sum_{i \in F^*_{D'}} x^*_{i,j} + \sum_{i \in G} x^{(\text{II})}_{i,j} \; \right) \; = \; \sum_{j \in \MC{D}} \; 2 \cdot \alpha_j \cdot \left( \; 1- \sum_{i \in U} x'^{(0)}_{i,j} - r'_j \; \right).
\label{ieq-bound-overall-2-1-1}
\end{align}

\smallskip

By the algorithm design, we have $\alpha_j = \alpha_{p(j)} + c_{w(j),p(j)}$ for any $j \in H$.
Further applying the fact that the demand $d_j$ of any outlier client $j \in H$ is fully-assigned when created and remains fully-assigned during the rounding process,
it follows that
\begin{align}
\sum_{j \in H} \; 2 \cdot \alpha_j \cdot \left( \; \sum_{i \in F^*_{D'}} x^*_{i,j} + \sum_{i \in G} x^{(\text{II})}_{i,j} \; \right) \; 
& = \;\; \sum_{j \in H} \; 2\cdot \left( \; \alpha_{p(j)} + c_{w(j),p(j)} \; \right) \cdot d_j  \notag \\[6pt]
& \hspace{-180pt} \le \;\; \sum_{j \in \MC{D}} \; 2\cdot \alpha_j \cdot r'_j \; + \; \sum_{j \in H} \; 2\cdot c_{w(j),p(j)} \cdot x'^{(0)}_{w(j),p(j)} \; \le \;\; \sum_{j \in \MC{D}} \; 2\cdot \alpha_j \cdot r'_j \; + \; 2\cdot \sum_{i \in U, \; j \in \MC{D}} c_{i,j} \cdot x'^{(0)}_{i,j},
\label{ieq-bound-overall-2-1-2}
\\[-26pt] \notag
\end{align}
where in the second last inequality we use the fact that 
$$\sum_{k \in H(j)} \alpha_{p(j)} \cdot d_k \;\; = \;\; \alpha_j \cdot \sum_{k \in H(j)} d_k \;\; = \;\; \alpha_j \cdot \sum_{k \in H(j)} r'_j \cdot \frac{x'^{(0)}_{w(j),p(j)}}{\sum_{i \in U} x'^{(0)}_{i,p(j)}} \;\; = \;\; \alpha_j \cdot r'_j$$ for all $j \in \MC{D}$ and the fact that $d_j \le x'^{(0)}_{w(j),p(j)}$ for any $j \in H$ by the definition of $d_j$.
%

\medskip

%
%
%
%

\smallskip

%
Combining Equality~(\ref{ieq-bound-overall-2-1}), Equality~(\ref{ieq-bound-overall-2-1-1}), Inequality~(\ref{ieq-bound-overall-2-1-2}) with Inequality~(\ref{ieq-bound-overall-1}), we obtain
\begin{align}
\psi( \bm{x}^\circ, \bm{y}^* ) \; 
& \le \hspace{4pt} 4\cdot \sum_{i \in U} y'^{(0)}_i \; + \; 3 \cdot \sum_{i \in U, \; j \in \MC{D}} c_{i,j} \cdot x'^{(0)}_{i,j}  \notag \\
& \hspace{-1pt} \; + \; 2\cdot \sum_{i \in I} y'^{(0)}_i \; + \; 2\cdot \sum_{i \in I, \; j \in \MC{D}} c_{i,j} \cdot x'^{(0)}_{i,j} \; + \; \sum_{j \in \MC{D}} \; 2 \cdot \left( \; 1- \sum_{i \in U} x'^{(0)}_{i,j} \; \right) \cdot \alpha_j .
\label{ieq-bound-overall}
\end{align}

\begin{figure*}[h]
\centering
\fbox{
\begin{minipage}{.8\textwidth}
\begin{align}
& \text{min} & & \sum_{i \in \MC{F}} \; y_i \; + \; \sum_{i \in \MC{F}, \hspace{2pt} j \in \MC{D}} c_{i,j}\cdot x_{i,j} & &  
\tag*{LP-(N)} \\[3pt]
& \text{s.t.} & & \sum_{i\in \MC{F}} \; x_{i,j} \; \ge \; 1, & & \forall j\in \MC{D} \label{cons-P-1} \tag*{(N-1)} \\[3pt]
& & & \sum_{j \in \MC{D}} \; x_{i,j} \; \le \; u_i \cdot y_i, & & \forall i \in \MC{F} \label{cons-P-2} \tag*{(N-2)} \\[3pt]
& & & 0 \; \le \; x_{i,j} \; \le \; y_i, & & \forall i \in \MC{F}, j \in \MC{D} \quad \label{cons-P-3} \tag*{(N-3)} \\[5pt]
& & & 0 \; \le \; y_i \; \le \; 1, & & \forall i \in \MC{F}. \label{cons-P-4} \tag*{(N-4)}
\end{align}
\vspace{-6pt}
\end{minipage}\quad
}
\fbox{
\begin{minipage}{.8\textwidth}
\begin{align}
& \text{max} & & \sum_{j \in \MC{D}} \; \alpha_j \; - \; \sum_{i \in \MC{F}} \; \eta_i & &  
\tag*{LP-(DN)} \\[6pt]
& \text{s.t.} & & \alpha_j \; \le \; \beta_i \; + \; \Gamma_{i,j} \; + \; c_{i,j}, \enskip & & \forall i \in \MC{F}, j\in \MC{D}, & & \label{cons-D-1} \tag*{(D-1)} \\[6pt]
& & & u_i \cdot \beta_i \; + \; \sum_{j \in \MC{D}} \Gamma_{i,j} \; \le \; 1 \; + \; \eta_i, & & \forall i \in \MC{F},  \label{cons-D-2} \tag*{(D-2)} \\[1pt]
& & & \alpha_j, \; \beta_i, \; \Gamma_{i,j}, \; \eta_i \; \ge \; 0, & & \forall i \in \MC{F}, j \in \MC{D}. \label{cons-D-3} \tag*{(D-3)}
\end{align}
\vspace{-6pt}
\end{minipage}\quad
}
\caption{(Restate for further reference) The natural LP formulations for CFL-CFC.}
\label{fig-natural-LP}
\end{figure*}

\noindent
The following lemma follows from complementary slackness between 
$(\bm{x}',\bm{y}')$ and $(\bm{\alpha},\bm{\beta},\bm{\Gamma},\bm{\eta})$, and the fact that $0 < y'^{(0)}_i < 1$ for all $i \in I$.

\begin{restatable}{lemma}{lemmaboundprimaldualforI}
\label{lemma-bound-primal-dual-for-I}
$$\sum_{j \in \MC{D}} \left( \; 1 - \sum_{i \in U} x'^{(0)}_{i,j} \; \right) \cdot \alpha_j \;\; \le \;\; \sum_{i \in I} \; y'^{(0)}_i \; + \; \sum_{i \in I, \; j \in \MC{D}} \; c_{i,j} \cdot x'^{(0)}_{i,j}.$$
%
\end{restatable}

\begin{proof}
Consider any $i \in I$ and the cost incurred.
From the fact that $(\bm{x}',\bm{y}')$ and $(\bm{\alpha},\bm{\beta},\bm{\Gamma},\bm{\eta})$ are optimal primal and dual solutions for~\ref{LP-natural-CFL} and~\ref{LP-natural-dual}, by complementary slackness conditions, we have
$$y'_i \; + \; \sum_{j\in \MC{D}} \; c_{i,j} \cdot x'_{i,j} \;\; = \;\; y'_i \cdot \left( \; u_i \cdot \beta_i \; + \; \sum_{j \in \MC{D}} \Gamma_{i,j} \; \right) \; + \; \sum_{j \in \MC{D}} c_{i,j} \cdot x'_{i,j}$$
since $y'_i > 0$ implies that constraint~\ref{cons-D-2} is tight, and $y'_i < 1$ implies that the dual variable $\eta_i$ must be zero.
By the fact that $\beta_i > 0$ implies that constraint~\ref{cons-P-2} is tight and $\Gamma_{i,j} > 0$ implies that constraint~\ref{cons-P-3} is tight, the above becomes 
$$\beta_i \cdot \sum_{j\in \MC{D}} \; x'_{i,j} \; + \; \sum_{j \in \MC{D}} \; \Gamma_{i,j}\cdot x'_{i,j} \; + \; \sum_{j \in \MC{D}} \; c_{i,j} \cdot x'_{i,j}.$$
Finally, applying the fact that $x'_{i,j} > 0$ implies that constraint~\ref{cons-D-1} is tight, we obtain
$$y'_i \; + \; \sum_{j\in \MC{D}} \; c_{i,j} \cdot x'_{i,j} \;\; = \;\; \sum_{j \in \MC{D}} \; \alpha_j \cdot x'_{i,j}.$$
%

\smallskip

Taking summation over all facilities in $I$, we obtain
$$\sum_{i \in I} \; y'_i \; + \; \sum_{i \in I, \; j \in \MC{D}} c_{i,j}\cdot x'_{i,j} \;\; = \;\; \sum_{j \in \MC{D}, \; i \in I} \; \alpha_j \cdot x'_{i,j} \;\; \ge \;\; \sum_{j \in \MC{D}} \left( \; 1 - \sum_{i \in U} x'_{i,j} \; \right) \cdot \alpha_j.$$
where in the last equality we apply constraint~\ref{cons-P-1} 
for each $j \in \MC{D}$.
\end{proof}

\smallskip

\noindent
Applying Lemma~\ref{lemma-bound-primal-dual-for-I} on Inequality~(\ref{ieq-bound-overall}), we obtain
\begin{align*}
\psi( \bm{x}^\circ, \bm{y}^* ) \; 
& \; \le \; 4 \cdot \sum_{i \in \MC{F}} y'^{(0)}_i \; + \; 4 \cdot \sum_{i \in \MC{F}, \; j \in \MC{D}} c_{i,j} \cdot x'^{(0)}_{i,j}, 
\end{align*}
and Theorem~\ref{theorem-cfl-ufc-approx} is proved.

\bigskip
\medskip

%



\bibliographystyle{plain}
\bibliography{approx_cfl}

\begin{thebibliography}{10}

\bibitem{zakmsp02}
Zo\"{e} Abrams, Kamesh Munagala, and Serge Plotkin.
\newblock On the integrality gap of capacitated facility location.
\newblock Technical Report CMU-CS-02-199, Carnegie Mellon University, 2002.

\bibitem{DBLP:journals/mp/AggarwalLBGGGJ13}
Ankit Aggarwal, Anand Louis, Manisha Bansal, Naveen Garg, Neelima Gupta,
  Shubham Gupta, and Surabhi Jain.
\newblock A 3-approximation algorithm for the facility location problem with
  uniform capacities.
\newblock {\em Math. Program.}, 141(1-2):527--547, 2013.

\bibitem{DBLP:journals/mp/AnBCGMS15}
Hyung{-}Chan An, Aditya Bhaskara, Chandra Chekuri, Shalmoli Gupta, Vivek Madan,
  and Ola Svensson.
\newblock Centrality of trees for capacitated $k$-center.
\newblock {\em Math. Program.}, 154(1-2):29--53, 2015.

\bibitem{DBLP:conf/focs/AnSS14}
Hyung{-}Chan An, Mohit Singh, and Ola Svensson.
\newblock Lp-based algorithms for capacitated facility location.
\newblock In {\em 55th {IEEE} Annual Symposium on Foundations of Computer
  Science, {FOCS} 2014, Philadelphia, PA, USA, October 18-21, 2014}, pages
  256--265. {IEEE} Computer Society, 2014.

\bibitem{DBLP:conf/esa/BansalGG12}
Manisha Bansal, Naveen Garg, and Neelima Gupta.
\newblock A 5-approximation for capacitated facility location.
\newblock In Leah Epstein and Paolo Ferragina, editors, {\em Algorithms - {ESA}
  2012 - 20th Annual European Symposium, Ljubljana, Slovenia, September 10-12,
  2012. Proceedings}, volume 7501 of {\em Lecture Notes in Computer Science},
  pages 133--144. Springer, 2012.

\bibitem{DBLP:conf/soda/CheungGW14}
Wang~Chi Cheung, Michel~X. Goemans, and Sam~Chiu{-}wai Wong.
\newblock Improved algorithms for vertex cover with hard capacities on
  multigraphs and hypergraphs.
\newblock In Chandra Chekuri, editor, {\em Proceedings of the Twenty-Fifth
  Annual {ACM-SIAM} Symposium on Discrete Algorithms, {SODA} 2014, Portland,
  Oregon, USA, January 5-7, 2014}, pages 1714--1726. {SIAM}, 2014.

\bibitem{10.5555/645589.659788}
Fabi\'{a}n~A. Chudak and David~P. Williamson.
\newblock Improved approximation algorithms for capacitated facility location
  problems.
\newblock In {\em Proceedings of the 7th International IPCO Conference on
  Integer Programming and Combinatorial Optimization}, pages 99--113, Berlin,
  Heidelberg, 1999. Springer-Verlag.

\bibitem{DBLP:conf/focs/CyganHK12}
Marek Cygan, MohammadTaghi Hajiaghayi, and Samir Khuller.
\newblock {LP} rounding for k-centers with non-uniform hard capacities.
\newblock In {\em 53rd Annual {IEEE} Symposium on Foundations of Computer
  Science, {FOCS} 2012, New Brunswick, NJ, USA, October 20-23, 2012}, pages
  273--282, 2012.

\bibitem{10.5555/3039686.3039860}
Mong-Jen Kao.
\newblock Iterative partial rounding for vertex cover with hard capacities.
\newblock In {\em Proceedings of the Twenty-Eighth Annual ACM-SIAM Symposium on
  Discrete Algorithms}, SODA '17, pages 2638--2653, USA, 2017. Society for
  Industrial and Applied Mathematics.

\bibitem{10.5555/314613.314616}
Madhukar~R. Korupolu, C.~Greg Plaxton, and Rajmohan Rajaraman.
\newblock Analysis of a local search heuristic for facility location problems.
\newblock In {\em Proceedings of the Ninth Annual ACM-SIAM Symposium on
  Discrete Algorithms}, SODA '98, pages 1--10, USA, 1998. Society for
  Industrial and Applied Mathematics.

\bibitem{10.1287/mnsc.9.4.643}
A.A. Kuehn and M.J. Hamburger.
\newblock A heuristic program for locating warehouses.
\newblock {\em Manage. Sci.}, 9(4):643--666, July 1963.

\bibitem{DBLP:conf/ipco/LeviSS04}
Retsef Levi, David~B. Shmoys, and Chaitanya Swamy.
\newblock Lp-based approximation algorithms for capacitated facility location.
\newblock In George~L. Nemhauser and Daniel Bienstock, editors, {\em Integer
  Programming and Combinatorial Optimization {(IPCO)} 2004}, volume 3064 of
  {\em Lecture Notes in Computer Science}, pages 206--218. Springer, 2004.

\bibitem{DBLP:conf/soda/Li15}
Shi Li.
\newblock On uniform capacitated \emph{k}-median beyond the natural {LP}
  relaxation.
\newblock In Piotr Indyk, editor, {\em Proceedings of the Twenty-Sixth Annual
  {ACM-SIAM} Symposium on Discrete Algorithms, {SODA} 2015, San Diego, CA, USA,
  January 4-6, 2015}, pages 696--707. {SIAM}, 2015.

\bibitem{DBLP:conf/esa/MahdianP03}
Mohammad Mahdian and Martin P{\'{a}}l.
\newblock Universal facility location.
\newblock In Giuseppe~Di Battista and Uri Zwick, editors, {\em Algorithms -
  {ESA} 2003, 11th Annual European Symposium, Budapest, Hungary, September
  16-19, 2003, Proceedings}, volume 2832 of {\em Lecture Notes in Computer
  Science}, pages 409--421. Springer, 2003.

\bibitem{10.5555/874063.875600}
M.~P\'{a}l, \'{E}. Tardos, and T.~Wexler.
\newblock Facility location with nonuniform hard capacities.
\newblock In {\em Proceedings of the 42nd IEEE Symposium on Foundations of
  Computer Science}, FOCS '01, page 329, USA, 2001. IEEE Computer Society.

\bibitem{10.1145/258533.258600}
David~B. Shmoys, \'{E}va Tardos, and Karen Aardal.
\newblock Approximation algorithms for facility location problems (extended
  abstract).
\newblock In {\em Proceedings of the Twenty-Ninth Annual ACM Symposium on
  Theory of Computing}, STOC '97, pages 265--274, New York, NY, USA, 1997.
  Association for Computing Machinery.

\bibitem{10.5555/1971947}
David~P. Williamson and David~B. Shmoys.
\newblock {\em The Design of Approximation Algorithms}.
\newblock Cambridge University Press, USA, 1st edition, 2011.

\bibitem{DBLP:conf/ipco/ZhangCY04}
Jiawei Zhang, Bo~Chen, and Yinyu Ye.
\newblock A multi-exchange local search algorithm for the capacitated facility
  location problem: (extended abstract).
\newblock In George~L. Nemhauser and Daniel Bienstock, editors, {\em Integer
  Programming and Combinatorial Optimization, 10th International {IPCO}
  Conference, New York, NY, USA, June 7-11, 2004, Proceedings}, volume 3064 of
  {\em Lecture Notes in Computer Science}, pages 219--233. Springer, 2004.

\end{thebibliography}

\end{document}